\newcommand{\TRUE}[1]{\textsc{True}}
\newcommand{\FALSE}[1]{\textsc{False}}
\newtheorem{theorem}{Theorem}
\newtheorem{corollary}{Corollary}
\newtheorem{definition}{Definition}
\newtheorem{lemma}{Lemma}
\newenvironment{hey}[1]{\noindent{\bf #1.}\addcontentsline{toc}{subsection}{#1}}{\vspace{2mm}}
\newcommand{\rtcp}[1]{\hfill\tcp{#1}}
\newcommand\divides{|}
\newcommand{\PDI}{\textsc{PDI}\xspace}
\newcommand{\PDIfull}{\textsc{Planar Diameter Improvement}\xspace}
\newcommand{\OPDIfull}{\textsc{Outerplanar Diameter Improvement}\xspace}
\newcommand{\OPDI}{\textsc{OPDI}\xspace}
\newcommand{\dist}[2][]{\ensuremath{\operatorname{dist}}\ifx\relax#1\relax\else\ensuremath{_{#1}}\fi\ensuremath{(#2)}}
\newcommand{\degree}[2][]{\ensuremath{\operatorname{deg}}\ifx\relax#1\relax\else\ensuremath{_{#1}}\fi\ensuremath{(#2)}}
\newcommand{\diam}[1]{\ensuremath{\operatorname{diam}(#1)}}
\newcommand{\pdist}[2][]{\ensuremath{\operatorname{pdist}}\ifx\relax#1\relax\else\ensuremath{_{#1}}\fi\ensuremath{(#2)}}
\newtheorem{obs}{Observation}
\newtheorem{claimN}{Claim}
\newtheorem{fact}{Fact}
\newenvironment{proof}[1][]{\par \noindent {\bf Proof:#1}\ }{\hfill$\Box$\\}
\title{A Polynomial-time Algorithm for\\Outerplanar Diameter Improvement\thanks{Research supported by the Languedoc-Roussillon Project ``Chercheur d'avenir'' KERNEL and the French project EGOS (ANR-12-JS02-002-01).  The third author was co-financed by the European Union (European Social Fund ESF) and Greek national funds through the Operational Program ``Education and Lifelong Learning'' of the National Strategic Reference Framework (NSRF) - Research Funding Program: ARISTEIA II. Emails of authors: 
\href{mailto:nathann.cohen@gmail.com}{\sf nathann.cohen@gmail.com},
\href{mailto:goncalves@lirmm.fr}{\sf goncalves@lirmm.fr},
\href{mailto:eunjungkim78@gmail.com}{\sf eunjungkim78@gmail.com},
\href{mailto:Christophe.Paul@lirmm.fr}{\sf Christophe.Paul@lirmm.fr},
\href{mailto:Ignasi.Sau@lirmm.fr}{\sf Ignasi.Sau@lirmm.fr},
\href{mailto:sedthilk@thilikos.info}{\sf sedthilk@thilikos.info},
\href{mailto:weller@lirmm.fr}{\sf weller@lirmm.fr}\ .}
}
\author[1]{Nathann Cohen}
\author[2]{Daniel Gon\c{c}alves}
\author[3]{Eun Jung  Kim}
\author[2]{Christophe Paul}
\author[2]{Ignasi Sau}
\author[2,4]{Dimitrios M. Thilikos}
\author[2]{Mathias Weller}
\affil[1]{\emph{\small CNRS, LRI, Orsay, France}}
\affil[2]{\emph{\small CNRS, LIRMM, Montpellier, France}}
\affil[3]{\emph{\small CNRS, LAMSADE, Paris, France}}
\affil[4]{\emph{\small Department of Mathematics, University of Athens, Athens, Greece}}
\begin{document}

\maketitle
\setcounter{footnote}{0}

\vspace{-.35cm}

\begin{abstract}
\noindent The \OPDIfull problem asks, given a graph $G$ and an integer $D$, whether it is possible to add edges to $G$ in a way that the resulting graph is outerplanar and has diameter at most $D$. We provide a dynamic programming algorithm that solves this problem in polynomial time. \OPDIfull demonstrates several structural analogues to the celebrated and challenging \PDIfull  problem, where the resulting graph should, instead, be planar. The complexity status of this latter problem is open.
 \vspace{0.25cm}

\noindent\textbf{Keywords:} diameter improvement; outerplanar graphs; completion problems; polynomial-time algorithms; dynamic programming; parameterized complexity.

\end{abstract}

\section{Introduction}
\label{sec:intro}

\looseness=-1
In general, a {\em graph completion problem} asks whether
it is possible to add edges to some given input graph so that the resulting graph satisfies some target property.
There are two different ways of defining the optimization measure for such problems. The first, and most common,
is the number of edges to be added, while the second is
 the value of some graph invariant on the resulting graph.
Problems of the first type are {\sc Hamiltonian Completion}~\cite{GareyJ79comp},  {\sc Interval Graph Completion}~\cite{HeggernesPTAV07}, {\sc Proper Interval Graph Completion}~\cite{GolumbicKS94onth,KaplanST99tract}, {\sc  Chordal Graph Completion}~\cite{Yannakakis81comp,KaplanST99tract}, and {\sc  Strongly Chordal  Graph Completion}~\cite{KaplanST99tract}, where the property is being Hamiltonian, interval, proper interval, chordal, and strongly chordal, respectively.

We focus our attention to the second category of problems where, for some given parameterized
 graph property ${\cal P}_{k}$, the problem asks, given a graph $G$ and an integer $k$, whether it is
 possible to add  edges to $G$ such that the resulting graph belongs to ${\cal P}_{k}$.
Usually ${\cal P}_{k}$ is a parameterized graph class whose graphs are typically required (for every $k$)
to satisfy some sparsity condition. There are few problems of this type in the bibliography.
Such a completion  problem  is the {\sc Planar Disjoint Paths Completion} problem that asks,
given a plane graph and a collection of $k$ pairs of terminals, whether it is possible to add edges
such that the resulting graph remains  plane and contains $k$ vertex-disjoint paths between the pairs of terminals.
While this problem is trivially NP-complete,  it has been
studied from the point of view of parameterized complexity~\cite{AdlerKT11plan}.
In particular, when all edges should be added in the same face, it can be solved in $f(k)\cdot n^{2}$ steps~\cite{AdlerKT11plan}, i.e., it is fixed parameter tractable ({\sf FPT} in short).

Perhaps the most challenging problem of the second category is the \PDIfull problem (\PDI in short), which was first mentioned by Dejter and Fellows~\cite{DF93} (and made an explicit open problem in~\cite{DF99}). Here we are given
a planar (or plane) graph and we ask for the minimum integer $D$ such that it is possible to add edges so that
the resulting graph is a planar graph with diameter at most $D$ (according to the general formalism,
for each $D$, the parameterized property ${\cal P}_{D}$ contains all planar graphs with diameter at most $D$). The computational complexity of \PDIfull  is open, as it is not even known whether it is an {\sf NP}-complete problem, even in the case where the embedding is part of the input. Interestingly, \PDIfull is  known to be {\sf FPT}:
it is easy to verify that, for every $D$, its \textsc{Yes}-instances are closed under taking minors\footnote{To see this, if a graph $G$ can be completed into a planar graph $G'$ of diameter $D$, then $G'$ is also a valid completion of any subgraph $H\subseteq G$. Similarly, by merging two adjacent vertices $uv$ in both $G$ and $G'$, one sees that the diameter is also closed under edge contraction.}
which, according to the meta-algorithmic consequence of the Graph Minors series of Robertson and Seymour~\cite{RobertsonS95-XIII,RobertsonS04-XX}, implies that \PDIfull is {\sf FPT}. Unfortunately, this implication only
proves the {\sl existence} of such an algorithm for each~$D$,
while it does not give any way to construct it. A uniform FPT
algorithm for this problem  remains as one of the most intriguing open problems in parameterized algorithm design.
To our knowledge, when it comes to
explicit algorithms, it is not even clear how to get an $O(n^{f(D)})$-algorithm for this problem
(in parameterized complexity terminology, such an algorithm is called an {\sf XP}-algorithm).

\looseness=-1
Notice that, in both aforementioned problems of the second type, the planarity of the graphs in~${\cal P}_{D}$
is  an important restriction, as it is essential for generating a non-trivial problem; otherwise, one could immediately
turn a graph into a clique that trivially belongs to~${\cal P}_{1}$. For practical purposes, such problems
are relevant where instead of generating few additional links, we mostly care about maintaining the network topology.
The algorithmic and graph-theoretic study  on diameter improvement problems has focused both on
the case of minimizing the number (or weight) of added edges~\cite{AlonGR00,BiloGP12,ChepoiENV06,DodisK99,ErdosGR98,Ishii13},
as well as on the case of minimizing  the diameter~\cite{BiloGP12,FratiGGM13}.
 In contrast, the network topology, such as acyclicity or planarity, as a constraint to be preserved has received little attention in the context of complementing a graph; see for example~\cite{ErdosGR98}. See also~\cite{Kant96,JBCR14} for other augmentation problems in outerplanar graphs, where the objective is to add edges in order to achieve a prescribed connectivity.

 \smallskip

In this paper we study the  \OPDIfull problem, or \OPDI in short. An instance of \OPDI consists of an outerplanar graph $G=(V,E)$ and a positive integer $D$, and we are asked to add a set $F$ of missing edges to $G$ so that the resulting graph $G'=(V,E\cup F)$ has diameter at most $D$, while $G'$ remains  outerplanar. Note that we are allowed to add arbitrarily many edges as long as the new graph is outerplanar. Given a graph $G=(V,E)$, we call $G'=(V,E\cup F)$ a {\em completion} of $G$.

\looseness=-1
It appears that the combinatorics of \OPDI demonstrate some interesting parallelisms with the notorious \PDI problem.
We denote by ${\bf opdi}(G)$ (resp. ${\bf pdi}(G)$) the minimum diameter of an outerplanar (resp. planar) completion of $G$.
It can be easily seen that the treewidth of a graph with bounded  ${\bf pdi}(G)$ is bounded, while
the pathwidth of a graph with bounded  ${\bf opdi}(G)$ is also bounded.
%
In that sense, the \OPDI can be seen as the ``linear counterpart''  of \PDI. We stress that the same ``small pathwidth'' behavior of \OPDI holds even  if, instead of outerplanar graphs, we consider any class of graphs with bounded outerplanarity.
Note also  that both ${\bf pdi}(G)$ and ${\bf opdi}(G)$ are trivially $2$-approximable in the particular case where the embedding is given.
To see this, let $G'$ be a triangulation of  a plane (resp. outerplane) embedding of $G$ where, in every face of~$G$, all edges added to it have a common endpoint. It easily follows  that, for both graph invariants, the diameter of $G'$ does not exceed twice the optimal value.


%
%
%
%

Another closely related notion is that of {\em $t$-spanners}. Given a graph $G=(V,E)$, a {\em $t$-spanner} of $G$ is a spanning subgraph $G'=(V,E')$ such that for every pair of vertices, the distance in $G'$ is within a factor $t$ to their distance in $G$. The measure $t$ for the quality of a spanner is called the {\em stretch factor}. The problem of identifying a $t$-spanner, or the {\sc $t$-Spanner} problem, has been extensively studied in geometry as well as in communication network design  \cite{WangL06,CaiC95,ChandraDNS95,Cohen98,DraganFG08}. A sparse or {\sl planar} $t$-spanner is of particular importance: in wireless ad-hoc networks, certain routing protocols
require the network topology to be planar \cite{WangL06}.
The notion of $t$-spanner is, in a sense, dual to
  that of \PDI or \OPDI. The former allows losing the established links and the latter allows having additional links while, in both cases, our interest lies in preserving the constraint on the network topology. Note that in both cases, we want to improve (or to not deteriorate a lot) the performance of the network, which is measured by the lags in information transfer, such as diameter or stretch factor.

\vspace{.15cm}
\noindent \textbf{Our results}. In this work, we show that  \OPDIfull is polynomial-time solvable. Our algorithm, described in Section~\ref{sec:DPalgo}, is based on dynamic programming and works in full generality, even when the input graph may be disconnected. Also, our algorithm does {\sl not} assume
that the input comes with some specific embedding (in the case of an embedded input, the problem becomes considerably easier to solve).


\vspace{.15cm}
\noindent \textbf{Further research}. Our algorithm for  \OPDI runs in time $O(n^3)$ for connected input graphs, and in time $O(n^{7})$ or $O(n^{9})$ for disconnected input graphs, depending on whether $D$ is odd or even, respectively, which can probably be improved.

We believe that our approach might be interesting for generalizations or variations of the \OPDI  problem, such as the one where we demand that the completed graph has fixed outerplanarity or is series-parallel.

By the Graph Minors series of Robertson and Seymour~\cite{RobertsonS95-XIII,RobertsonS04-XX}, we know that for each fixed integer $D$, the set of \emph{minor obstructions} of \OPDI is \emph{finite}. We have some preliminary results in this direction, but we managed to obtain a complete list only for small values of $D$. Namely, we obtained a partial list of forbidden substructures (not necessarily minimal), by using the notion of \emph{parallel matching}. These partial results can be found in Appendix~\ref{sec:maxParallelMatching}.

Settling the computational complexity of
\PDI  remains the main open problem in this area. An explicit {\sf FPT}-algorithm, or even an {\sf XP}-algorithm, would also be significant. Again, we have some partial {\sf NP}-completeness result in this direction, but for a problem which is slightly more complicated than \PDI, in particular involving edge weights. This reduction can be found in Appendix~\ref{sec:NPh}.




\section{Description of the algorithm}
\label{sec:DPalgo}

The aim of this section is to describe a polynomial-time dynamic program that, given an outerplanar graph $G$ and an integer $D$, decides whether $G$ admits an outerplanar completion with diameter at most $D$, denoted \emph{diameter-$D$ outerplanar completion} for simplicity. Note that such an algorithm easily yields a polynomial-time algorithm to compute, given an outerplanar graph $G$, the smallest integer $D$ such that $G$ admits a  diameter-$D$ outerplanar completion.

Before describing the algorithm, we show some properties of outerplanar completions. In particular, Subsection~\ref{subsec:cutvertices} handles the case where the input outerplanar graph has cut vertices. Its objective is to prove that we can apply a {\sl reduction rule} to such a graph which is safe for the \OPDI problem. In Subsection~\ref{subsec:2vertexSep} we deal with 2-vertex separators, and in Subsection~\ref{subsec:dp-prog} we present a polynomial-time algorithm for {\sl connected} input graphs. Finally, we present the algorithm for disconnected input graphs in Subsection~\ref{subsec:disconnected}.


\paragraph{\textbf{Some notation}.} We use standard graph-theoretic notation, see for instance~\cite{Diestel05}. It is well known that a graph is outerplanar if and only if it excludes $K_4$ and $K_{2,3}$ as a minor. An outerplanar graph is \emph{triangulated} if all its inner faces (in an outerplanar embedding) are triangles.  An outerplanar graph is
\emph{maximal} if it is 2-connected and triangulated.  Note that, when solving the \OPDI problem, we may always assume that the completed graph $G'$ is maximal.



\subsection{Reducing the input graph when there are cut vertices}
\label{subsec:cutvertices}

\newcommand{\ecc}{\ensuremath{\operatorname{ecc}}}
\newcommand{\bestECC}{\ensuremath{\ecc^*}}
\newcommand{\bestECCK}{\ensuremath{\ecc_D^*}}

Given a graph $G$, let the \emph{eccentricity} of a vertex $u$ be
$\ecc(u,G) = \max_{v\in V(G)}\dist[G]{u,v}$.  Given an outerplanar
graph $G$, a vertex $u\in V(G)$, and an integer $D$, let us define $\bestECCK(u,G)$ as
$\min_{H} \ecc(u,H)$, where the minimum is taken over all the
diameter-$D$ outerplanar completions $H$ of $G$. If all the
outerplanar completions have diameter more than $D$, we set this
value to $\infty$. Unless said otherwise, we assume henceforth that $D$ is a fixed given integer, so we may just write $\bestECC(u,G)$ instead of $\bestECCK(u,G)$. (The value of $D$ will change only in the description of the algorithm at the end of Subsection~\ref{subsec:dp-prog}, and in that case we will make the notation explicit).


As admitting an outerplanar completion with bounded eccentricity is a minor-closed property, let us observe the following:
\begin{lemma}\label{lem-ecc}
For any connected outerplanar graph $G$, any vertex $v\in V(G)$, and
any connected subgraph $H$ of $G$ with $v\in V(H)$, we have that
$\bestECC(v,H) \le \bestECC(v,G)$.
\end{lemma}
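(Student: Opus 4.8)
The plan is to transport an optimal completion of $G$ down to $H$ along \emph{minor} operations. First, if every outerplanar completion of $G$ has diameter more than $D$, then $\bestECC(v,G)=\infty$ and the inequality is vacuous, so assume $\bestECC(v,G)=r<\infty$ and fix a diameter-$D$ outerplanar completion $G^*$ of $G$ attaining $\ecc(v,G^*)=r$. The naive idea of using the induced completion $G^*[V(H)]$ fails: deleting the vertices of $U:=V(G)\setminus V(H)$ can only \emph{increase} distances from $v$, so it certifies no upper bound on $\bestECC(v,H)$. The key is therefore to realise $H$ not as a vertex-subgraph but as a \emph{contraction} (followed by edge deletions) of $G$, since contraction never increases distances.

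Concretely, I would build a partition of $V(G)$ into connected branch sets indexed by $V(H)$. Running a multi-source breadth-first search from all vertices of $V(H)$ simultaneously (with any fixed tie-breaking) assigns each $u\in U$ to one source; since $G$ is connected every vertex is reached, and following the BFS parent pointers shows that each class $B_w:=\{w\}\cup\{u : u\text{ assigned to }w\}$ induces a connected subgraph of $G$ containing $w$. Write $\mathcal B=\{B_w\}_{w\in V(H)}$ and let $G/\mathcal B$ denote the graph obtained by contracting each $B_w$ to a single vertex labelled $w$. For every edge $ww'\in E(H)\subseteq E(G)$ its endpoints lie in distinct branch sets, so $ww'$ survives the contraction; hence $E(H)\subseteq E(G/\mathcal B)$, i.e.\ $G/\mathcal B$ is a supergraph of $H$ on the vertex set $V(H)$.

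Now I would apply the \emph{same} contraction to $G^*$. Each $B_w$ is connected in $G$, hence in $G^*\supseteq G$, so $G^*/\mathcal B$ is well defined and is a minor of $G^*$; in particular it is outerplanar. Contracting edges can only decrease distances, so $\diam{G^*/\mathcal B}\le \diam{G^*}\le D$, and, writing $v$ also for the vertex that $B_v$ collapses to, projecting a shortest path from $v$ to $w$ in $G^*$ gives $\dist[G^*/\mathcal B]{v,w}\le \dist[G^*]{v,w}$ for all $w$, whence $\ecc(v,G^*/\mathcal B)\le \ecc(v,G^*)=r$. Finally, from $E(G)\subseteq E(G^*)$ we get $E(H)\subseteq E(G/\mathcal B)\subseteq E(G^*/\mathcal B)$, so $G^*/\mathcal B$ is a genuine diameter-$D$ outerplanar completion of $H$ in which the eccentricity of $v$ is at most $r$. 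Therefore $\bestECC(v,H)\le r=\bestECC(v,G)$.

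The only real obstacle is the first observation above, namely that a completion of a vertex-subgraph must be produced \emph{without} deleting vertices. Once $H$ is exhibited as a contraction-minor of $G$ that keeps $v$ (the content of the branch-set construction, where connectedness of $G$ is used), everything else is the standard monotonicity of diameter, eccentricity and outerplanarity under edge contraction and edge deletion, exactly as recorded in the footnote for the planar variant. Note that connectedness of $H$ is not actually needed for this argument; only that $G$ is connected and $v\in V(H)$.
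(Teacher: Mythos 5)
Your proof is correct and follows essentially the same route as the paper: take an optimal completion of $G$, contract everything outside $V(H)$ onto $V(H)$, and use that contraction preserves outerplanarity and never increases distances or the diameter. Your BFS-based branch-set partition is just a more careful rendering of the paper's terser ``contract every edge of $G'$ with an endpoint outside $V(H)$'' (which, read literally, could merge two vertices of $H$); otherwise the two arguments coincide.
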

\begin{proof}
Let $G'$ be an outerplanar completion of $G$ achieving $\bestECC(v,G)$, that is
such that $\ecc(v,G') = \bestECC(v,G)$. Contracting  the edges of $G'$ that have
at least one endpoint out of $V(H)$ one obtains an outerplanar completion $H'$ of
$H$ (as outerplanar graphs are minor-closed). As contracting an edge does not
elongate any shortest path, we have that $\dist[H']{v,u} \le \dist[G']{v,u}$ for any vertex $u\in V(H)$, and in particular the diameter of $H'$ is at most the diameter of
$G'$, so $\bestECC(v,H) < \infty$. We thus have that $\bestECC(v,H) \le \ecc(v,H') \le \ecc(v,G') = \bestECC(v,G)$.
\end{proof}

\newcommand{\branches}[2][1]{\ensuremath{B_{#1\ldots #2}}}

Consider a connected graph $G$ with a cut vertex $v$, and let $C_1,\ldots ,C_t$ be the vertex sets of the connected components of $G\setminus \{v\}$. For $1\le i\le t$, we call the vertex set $B_i = C_i\cup \{v\}$ a \emph{branch} of $G$ at $v$.
To shorten notations, we abbreviate $B_i\cup\ldots\cup B_j=:\branches[i]{j}$, for any $1\le i < j \le t$.
Also, when referring to the eccentricity, we simply denote $G[B_i]$ by $B_i$. Thus, for example, when considering the value $\bestECC(v,\branches{i})$, it will refer to the minimum eccentricity with respect to $v$ that a diameter-$D$ outerplanar completion of the graph $G[\branches{i}]$ can have.

The following lemma, which is crucial in order to obtain a polynomial-time algorithm, implies that if $G$ has a cut vertex $v$
with many branches, it is safe to remove most of them.

\begin{lemma}\label{lem-7-blocks}
  Consider an outerplanar graph $G$ with a cut vertex $v$ that
  belongs to at least 7~branches. Denote these branches
  $B_1,\ldots,B_t$, with $t\ge 7$, in such a way that $\bestECC(v,B_1) \ge
  \bestECC(v,B_2) \ge \ldots \ge \bestECC(v,B_t)$.  The graph $G$ has an outerplanar completion with diameter at
  most $D$ if and only if $\bestECC(v,\branches{6}) + \bestECC(v,B_7) \le D$.
\end{lemma}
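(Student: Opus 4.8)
The plan is to prove the two implications separately: the ``if'' direction by an explicit construction, and the ``only if'' direction by a structural analysis of an arbitrary completion, where the real difficulty lies. Throughout I write $e_6:=\bestECC(v,\branches{6})$ and $e_7:=\bestECC(v,B_7)$, and note from Lemma~\ref{lem-ecc} that $e_7\le\bestECC(v,B_1)\le e_6$. For sufficiency, assume $e_6+e_7\le D$. I would pick a diameter-$D$ outerplanar completion $H_0$ of $G[\branches{6}]$ with $\ecc(v,H_0)=e_6$, and for each $i\ge 7$ a diameter-$D$ outerplanar completion $H_i$ of $G[B_i]$ with $\ecc(v,H_i)=\bestECC(v,B_i)\le e_7$; all of these exist and are finite since $e_6+e_7\le D<\infty$. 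Identifying the copies of $v$, I glue $H_0$ and all the $H_i$ along their unique common vertex $v$ to obtain a graph $G^{+}$, which is an outerplanar completion of $G$, since outerplanar graphs sharing a single vertex can always be drawn by fanning their circular embeddings around that vertex. In $G^{+}$ the vertex $v$ is a cut vertex whose removal leaves the parts pairwise disconnected, so $\dist[G^{+}]{x,y}\le\diam{H_i}\le D$ for $x,y$ in the same part, while for $x,y$ in distinct parts every path uses $v$, giving $\dist[G^{+}]{x,y}=\dist[G^{+}]{x,v}+\dist[G^{+}]{v,y}$; this is at most $e_6+e_7\le D$ when one part is $H_0$, and at most $2e_7\le e_6+e_7\le D$ otherwise. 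Hence $\diam{G^{+}}\le D$.

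For necessity I fix any diameter-$D$ outerplanar completion $G'$ of $G$ and must show $e_6+e_7\le D$. First I would set up the combinatorial skeleton: fix a circular ordering of $V(G')$ in which all edges are non-crossing chords, and read off the arc obtained by deleting $v$. The key structural claim is that each component $C_i$ of $G\setminus v$ occupies a contiguous sub-arc; indeed, were a vertex of some $C_j$ to lie strictly between two vertices of $C_i$, a path joining the latter inside the connected set $C_i$ would produce a chord separating that $C_j$-vertex from $v$, contradicting the fact that $C_j$ reaches $v$ inside $G$ through non-crossing chords — exactly a forbidden $K_{2,3}$ configuration. The same enclosing-chord argument shows that an added edge joining two branches cannot enclose a third branch, so every cross-edge joins two arc-consecutive branches.

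Next I extract the two ``deep'' vertices. Put $r_i:=\max_{u\in C_i}\dist[G']{u,v}$. Contracting, for a chosen subfamily of branches, every vertex outside it onto $v$ yields a diameter-$D$ outerplanar completion of the corresponding induced subgraph in which no distance to $v$ has increased; applied to $\branches{6}$ this gives $e_6\le\max_{i\le 6} r_i$, and applied to a single $B_i$ it gives $\bestECC(v,B_i)\le r_i$. Therefore some $B_m$ with $m\le 6$ contains a vertex $x$ with $\dist[G']{x,v}\ge e_6$, and since the branches are sorted by $\bestECC$, each of $B_1,\dots,B_7$ contains a vertex at distance at least $e_7$ from $v$.

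Finally I combine these. If the component of $G'\setminus v$ containing $C_m$ omits some $B_j$ with $j\le 7$ and $j\ne m$, then $v$ separates $C_m$ from $C_j$, so the path between $x$ and a deep vertex $y\in C_j$ is forced through $v$ and $D\ge\diam{G'}\ge\dist[G']{x,v}+\dist[G']{v,y}\ge e_6+e_7$, as required. I expect the crux of the whole lemma — and the reason the threshold is seven while six branches are retained — to be the remaining case, in which a single cross-connected run of arcs absorbs all seven heavy branches and $v$ separates nothing. Here one cannot route through $v$, and I would instead exploit the sharp restriction on cross-edges (they join only consecutive arcs, and must attach near the vertices through which $v$ enters each branch) to argue that any shortest path between the deep vertices of the two extreme heavy branches of the run is still no shorter than $e_6+e_7$. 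Making this estimate precise, and verifying that seven branches always leave enough room to force it, is the step I would treat with the greatest care.
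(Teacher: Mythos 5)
Your ``$\Leftarrow$'' direction is correct and coincides with the paper's: glue optimal completions of $G[\branches{6}],G[B_7],\ldots,G[B_t]$ at $v$ and check the three kinds of vertex pairs, using $\bestECC(v,B_i)\le\bestECC(v,B_1)\le\bestECC(v,\branches{6})$.

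The ``$\Rightarrow$'' direction has a genuine gap, and it sits exactly where you flag it: the case in which all seven heavy branches end up in a single connected component of $G'\setminus v$, so that no pair of deep vertices can be routed through $v$. You defer ``making this estimate precise'' to future care, but this case \emph{is} the lemma --- the routing-through-$v$ case is the easy one --- and your framework is too weak to close it, for two reasons. First, your deep vertices are only guaranteed to be far from $v$; once cross-edges between branches are present, a shortest path between two deep vertices need not pass anywhere near $v$, so you need the stronger fact (the paper's Claim~\ref{claim:vertex-ecc}) that some vertex of a union of branches $\mathcal{B}$ is at distance at least $\bestECC(v,\mathcal{B})$ from \emph{every} vertex outside $\mathcal{B}\setminus v$; this is proved by a contraction argument your sketch does not contain. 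Second, and more importantly, the paper does not work with an arbitrary completion $G'$: it chooses $G'$ to \emph{maximize the number of branches at $v$}. That extremal choice is what forces the merged super-branch to consist of a prefix $B_1,\ldots,B_s$ of the eccentricity ordering with $s\ge 7$ (Claims~\ref{claim:split nonatomic}--\ref{claim:ecc-i-i+1}), forces the tight equality $\bestECC(v,\branches{i})+\bestECC(v,B_j)=D+1$ together with a direct link whenever the sum exceeds $D$ (Claim~\ref{claim:>K-link}), and ultimately yields the contradiction: the link relation is a path by outerplanarity (Claim~\ref{claim:rel-path}); the prefix eccentricities $\bestECC(v,\branches{i})$ must strictly increase every two steps or a $K_{2,3}$-minor appears (Claim~\ref{claim:ecc-increase}); yet they must stay constant across every index $q$ whose branch is not linked to $B_1$ (Claim~\ref{claim:sameECC}, itself a nontrivial two-sided distance argument through a $2$-separator $\{u,v\}$); and among $q\in\{3,\ldots,6\}$ two consecutive such indices always exist because $B_1$ can be linked to at most one $B_j$ with $j\ge 3$. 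Your proposed endgame --- comparing deep vertices of the ``two extreme heavy branches of the run'' --- does not obviously give $e_6+e_7$: without the extremal choice the branches need not appear along the run in eccentricity order, and the two extremes of the run could both be among the lighter of the seven, so nothing forces the sum of their depths to reach $e_6+e_7$. The sufficiency half stands; the necessity half is an outline with its decisive argument missing.
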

\begin{proof}
  ``$\Leftarrow$'':
  If $\bestECC(v,\branches{6}) + \bestECC(v,B_7) \le D$, gluing on $v$
  the outerplanar completions of $G[\branches{6}], G[B_7],\ldots ,G[B_t]$, respectively achieving $\bestECC(v,\branches{6}), \bestECC(v,B_7), \ldots , \bestECC(v,B_t)$, one obtains a diameter-$D$ outerplanar completion $G'$ of $G$. Indeed,
  \begin{itemize}
    \item[$\bullet$] The graph obtained is outerplanar and contains $G$.
    \item[$\bullet$] Two vertices $x,y$ of $G[\branches{6}]$ (resp. of $G[B_i]$ for $7\le i\le t$) are at distance at most $D$ from each other, as $\bestECC(v,\branches{6}) < \infty$ (resp. as $\bestECC(v,B_i) < \infty$).
    \item[$\bullet$] Any vertex $x$ of $G[\branches{6}]$ and $y$ of $G[B_i]$, with $7\le i\le t$, are respectively at distance at most $\bestECC(v,\branches{6})$ and $\bestECC(v,B_i)\le \bestECC(v,B_7)$ from $v$. They are thus at distance at most $\bestECC(v,\branches{6}) + \bestECC(v,B_7) \le D$ from each other.
    \item[$\bullet$] Any vertex $x$ of $G[B_i]$ and $y$ of $G[B_j]$, with $7\le i<j\le t$, are respectively at distance at most $\bestECC(v,B_i)\le \bestECC(v,B_1) \le \bestECC(v,\branches{6})$ (By Lemma~\ref{lem-ecc}) and $\bestECC(v,B_j)\le \bestECC(v,B_7)$ from $v$. They are thus at distance at most $D$ from each other.
  \end{itemize}


  ``$\Rightarrow$'':
  In the following, we consider towards a contradiction an outerplanar graph $G$
  admitting a diameter-$D$ outerplanar completion,
  but such that
  \begin{equation}\label{eq:B1-6 + B7}
    \bestECC(v,\branches{6}) + \bestECC(v,B_7) > D.
  \end{equation}
  Among the triangulated diameter-$D$ outerplanar completions of $G$,
  let $G'$ be one that maximizes the number of branches at $v$.  Let
  $t'$ be the number of branches at $v$ in $G'$, and denote these
  branches $B'_1,\ldots,B'_{t'}$, in such a way that $\bestECC(v,B'_1)
  \ge \bestECC(v,B'_2) \ge \ldots \ge \bestECC(v,B'_{t'})$.
  Let~$S_{i'}:=\{i\mid B_i\subseteq B'_{i'}\}$ for all~$1\leq i'\leq
  t'$ (note that $\{S_1,\ldots,S_{t'}\}$ is a partition
  of~$\{1,\ldots,t\}$). The numbering of the $B'_{i'}$'s is such that if
  $\bestECC(v,B'_1) = \bestECC(v,B'_{i'})$ for some $i'>1$, then $\min
  S_1 < \min S_{i'}$.  Then, since $G'$ has diameter at most $D$ and
  shortest paths among distinct branches of $G'$ contain $v$, it is clear that
  \begin{equation}\label{eq:ecc among B'}
    \underset{1\leq i'<j'\leq t'}\forall\bestECC(v,B'_{i'}) + \bestECC(v,B'_{j'}) \leq D.
  \end{equation}
  The branches $B'_{i'}$ with $|S_{i'}|=1$ are called \emph{atomic}.

  \begin{claimN}\label{claim:split nonatomic}
    Let $B'_{i'}$ be a non-atomic branch and let $S'\subsetneq S_{i'}$. Then,
    $\bestECC(v,\bigcup_{i\in S'}B_i) + \bestECC(v, \bigcup_{i\in S_{i'} \setminus S'}B_i ) > D$.
  \end{claimN}
  \begin{proof}
    Let $\mathcal{B}:=\bigcup_{i\in S'} B_i$ and $\bar{\mathcal{B}}:=B'_{i'}\setminus\mathcal{B}$.
    If the claim is false, then $\bestECC(v,\mathcal{B})+\bestECC(v,\bar{\mathcal{B}})\le D$.
    Furthermore, for all~$j'\ne i'$,
    \[
      \bestECC(v,\mathcal{B}) + \bestECC(v,B'_{j'})
    \stackrel{\text{Lemma~\ref{lem-ecc}}}{\le}
      \bestECC(v,B'_{i'}) + \bestECC(v,B'_{j'})
    \stackrel{\eqref{eq:ecc among B'}}{\le} D\\
    \]
    and, likewise, $\bestECC(v,\bar{\mathcal{B}})+\bestECC(v,B'_{j'})\leq D$.
    Thus, the result of replacing $G'[B'_{i'}]$ with the disjoint union of an outerplanar completion achieving $\bestECC(v,\mathcal{B})$ and an outerplanar completion achieving $\bestECC(v,\bar{\mathcal{B}})$ yields a diameter-$D$ outerplanar completion containing more branches than $G'$, contradicting our choice of~$G'$.
  \end{proof}


  \begin{claimN}\label{claim:S consequtive}
    For some value $s\ge 1$, $S_1=\{1,\ldots,s\}$.
  \end{claimN}
  \begin{proof}
%
Towards a contradiction, assume that there is some $i\notin S_1$ with
$i+1\in S_1$. Let $i'>1$ be such that~$B_i\subseteq B'_{i'}$. Note
that $B'_1$ is not atomic, as otherwise $\bestECC(v,B'_1)
=\bestECC(v,B_{i+1}) \le \bestECC(v,B_{i}) \le \bestECC(v,B'_{i'})$,
contradicting the numbering of the $B'_j$'s.  Then,
    \begin{align*}
        \bestECC(v,B'_1\setminus(B_{i+1}\setminus v))) + \bestECC(v, B_{i+1})
      & \stackrel{\text{Lemma~\ref{lem-ecc}}}{\le}
        \bestECC(v,B'_1) + \bestECC(v, B_{i+1})\\
      & \makebox[11mm][c]{$\le$}
        \bestECC(v,B'_1) + \bestECC(v, B_i)\\
      & \makebox[11mm][c]{$\le$}
        \bestECC(v,B'_1) + \bestECC(v, B'_{i'})
      \stackrel{\eqref{eq:ecc among B'}}{\le}
        D,
    \end{align*}
    contradicting Claim~\ref{claim:split nonatomic}. Note that, as $S_1 \neq \emptyset$,  the claim follows.
  \end{proof}

  \begin{claimN}\label{claim:ecc-i-i+1}
    The integer $s$ is the least value such that
    $\bestECC(v,\branches{s}) + \bestECC(v,B_{s+1}) \le D$, and the
    largest value such that $\bestECC(v,\branches{s-1}) +
    \bestECC(v,B_s) > D$.
  \end{claimN}
  \begin{proof}
    Towards a contradiction, assume there is some $s'<s$ such that
    $\bestECC(v,\branches{s'}) + \bestECC(v,B_{s'+1}) \le D$. Then the
    graph obtained from the diameter-$D$ outerplanar completions of
    $\branches{s'}$ and $B_{i}$ for all $i>s'$, respectively achieving
    $\bestECC(v,\branches{s'})$ and $\bestECC(v,B_{i})$, would be a
    diameter-$D$ outerplanar completion of $G$ with more branches than
    $G'$, a contradiction.

    Again, assume towards a contradiction that there is some $s'>s$ such that
    $\bestECC(v,\branches{s'-1}) + \bestECC(v,B_{s'}) > D$.  By
    \eqref{eq:ecc among B'} and Lemma~\ref{lem-ecc}, $D\ge
    \bestECC(v,\branches{s}) + \bestECC(v,B_{s'})$, hence
    $\bestECC(v,\branches{s'-1}) > \bestECC(v,\branches{s})$. But this
    contradicts Lemma~\ref{lem-ecc}, as $\bestECC(v,\branches{s})=
    \ecc(v,G') \ge \bestECC(v,G)$.
  \end{proof}

By \eqref{eq:B1-6 + B7}, Claim~\ref{claim:ecc-i-i+1} implies that $s\ge 7$.

  \begin{claimN}\label{claim:vertex-ecc}
    Let~$S'\subseteq\{1,\ldots,t\}$ and let $\mathcal{B}:=\bigcup_{i\in S'}B_i$. Then, there is a vertex in~$\mathcal{B}$ that is, in $G'$, at distance at least $\bestECC(v,\mathcal{B})$ to any vertex of $V(G)\setminus(\mathcal{B}\setminus v)$.
  \end{claimN}
  \begin{proof}
Towards a contradiction, assume that for any vertex $u \in \mathcal{B}$ there exists a vertex $w \in V(G)\setminus(\mathcal{B}\setminus v))$ such that $\dist[G']{u,w} < \bestECC(v,\mathcal{B})$. From $G'$, contracting all vertices of $V(G)\setminus\mathcal{B}$ onto $v$ yields a graph $H$ with a path between $u$ and $v$ of length strictly smaller than $\bestECC(v,\mathcal{B})$. As this argument holds for any vertex $u \in \mathcal{B}$, it implies that $\ecc(v,H)<\bestECC(v,\mathcal{B})$. Since $H$ is an outerplanar completion of $G[\mathcal{B}]$, this contradicts the definition of~$\bestECC$.
  \end{proof}

  Two sub-branches $B_i$ and $B_j$ of $B'_1$ are \emph{linked} if $G'$
  has an edge intersecting both $B_i\setminus \{v\}$ and $B_j\setminus \{v\}$.

  \begin{claimN}\label{claim:>K-link}
    Let $1\le i< j \le s$ and let $\bestECC(v,\branches{i}) +
    \bestECC(v,B_j) > D$.  Then $\bestECC(v,\branches{i}) +
    \bestECC(v,B_j) = D+1$, and $B_{j}$ is linked to one of
    $B_1,\ldots,B_i$.
  \end{claimN}
  \begin{proof}
    By Claim~\ref{claim:vertex-ecc}, there is a vertex~$x\in B_j$ that is, in $G'$,
    at distance at least $\bestECC(v,B_j)$ to any vertex in $\branches{i}$. Likewise, there
    is a vertex $y\in\branches{i}$ that is, in $G'$, at distance at least $\bestECC(v,\branches{i})$
    to any vertex in $B_j$.
    Let $P$ be any shortest
    path of $G'$ between $x$ and $y$ (hence $P$ has length at most $D$). By construction, the maximal subpath of $P$ in
    $B_{j}\setminus v$ containing $x$ has length at least $\bestECC(v,B_j)-1$ and the
    maximal subpath of $P$ in $\branches{i}\setminus v$ containing $y$ has length at
    least $\bestECC(v,\branches{i})-1$.
    Since these subpaths are vertex disjoint the remaining part of $P$ has length $d_P\ge 1$.
    Hence $D\ge \bestECC(v,B_j) + \bestECC(v,\branches{i}) +d_P-2$.
    As $\bestECC(v,\branches{i}) + \bestECC(v,B_{j}) > D$, we have that $d_P=1$, and thus there is a single edge in $P$ linking $B_{j}$ and $\branches{i}$.
    This also yields to $\bestECC(v,B_j) + \bestECC(v,\branches{i}) = D+1$.
  \end{proof}

  Claim~\ref{claim:ecc-i-i+1} and Claim~\ref{claim:>K-link} clearly imply that for any $1\le i< s$, $B_{i+1}$ is linked to one of $B_1,\ldots,B_i$. A consequence of the next claim will be that $B_{i+1}$ is linked to exactly one of these branches.

  \begin{claimN}\label{claim:rel-path}
    The graph (with vertex set ${1,\ldots,s}$)
    induced by the relation \emph{linked} ($ij$ is an edge iff $B_i$ is linked to $B_j$) is a path.
    Furthermore, the subgraph induced by $\{1,\ldots,i\}$, for any $1\le i\le s$, is connected.
  \end{claimN}
  \begin{proof}
    The first statement is a consequence of the following three facts.
    \begin{enumerate}
      \item This graph is connected. Otherwise, $v$ would be a cut vertex in $G'[B'_1]$, contradicting the definition of branch.
      \item This graph has maximum degree 2. Consider for contradiction that some branch $B_i$ is linked to three branches $B_{j_1}$, $B_{j_2}$, and $B_{j_3}$. As each of $B_i\setminus v$, $B_{j_1}\setminus v$, $B_{j_2}\setminus v$, and $B_{j_3}\setminus v$ induces a connected graph in $G'$, these four sets together with $v$ induce a $K_{2,3}$-minor in $G'$, contradicting its outerplanarity.
      \item This graph is not a cycle. Assume otherwise, so in particular it implies that $s \ge 3$. As each $B_i\setminus v$ induces a connected graph in $G'$, these sets together with $v$ would induce a $K_4$-minor in $G'$, contradicting its outerplanarity.
    \end{enumerate}
    The second statement is a direct consequence of the fact that for any $1\le i< s$, $B_{i+1}$ is linked to one of $B_1,\ldots,B_i$.
  \end{proof}

  Hence, for any $1\le i\le s$, the graph $G'[\branches{i}\setminus v]$ is connected.

  \begin{claimN}\label{claim:ecc-increase}
    For any $3\le i < s$, $\bestECC(v,\branches{i}) > \bestECC(v,\branches{i-2})$.
  \end{claimN}
  \begin{proof}
    The monotonicity property given by Lemma~\ref{lem-ecc}  implies that $\bestECC(v,\branches{i})\geq  \bestECC(v,\branches{i-1}) \geq  \bestECC(v,\branches{i-2})$.
    Towards a contradiction suppose that $\bestECC(v,\branches{i}) = \bestECC(v,\branches{i-1}) = \bestECC(v,\branches{i-2}) =: c$. Then, Claim~\ref{claim:ecc-i-i+1} implies that $c + \bestECC(v,B_{j}) > D$ for all $j\in\{i-1,i,i+1\}$. Thus, by Claim~\ref{claim:>K-link}, each of $B_{i-1}$, $B_{i}$, and $B_{i+1}$
    is linked to one of $B_1,\ldots,B_{i-2}$. As each of $\branches{i-2}\setminus v$, $B_{i-1}\setminus v$, $B_{i}\setminus v$, and $B_{i+1}\setminus v$ induces a connected graph in $G'$, these sets together with vertex $v$ induce a $K_{2,3}$-minor, contradicting the outerplanarity of $G'$.
  \end{proof}

In the following let $q$ be any integer such that $3\le q\le s$ and $B_q$ is not linked to $B_1$.
Let $p<q$ be such that $B_p$ and $B_q$ are linked. By Claim~\ref{claim:rel-path}, $p$ is unique.


  As there is an edge between $B_p$ and $B_q$ and $G'$ is
  triangulated, the vertex $v$ has a neighbor $u \in B_p$ (in $G'$) that
  is adjacent to some vertex in $B_q\setminus v$; see Figure~\ref{fig:branches} for
  an illustration. 
  


  \begin{figure}[h!]
\vspace{-.65cm}
    \center\includegraphics[width=0.5\textwidth]{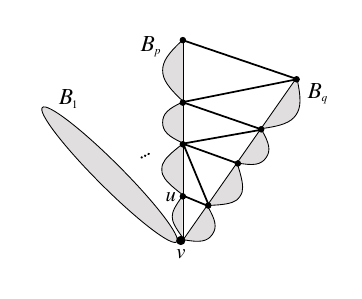}
    \vspace{-.65cm}
    \caption{\label{fig:branches}Structure of $G'[\branches{q}]$.}
\end{figure}

In an outerplanar graph any inner edge forms a $2$-separator. In $G'$
it is clear that $\{v,u\}$ separates $B_1\setminus v$ and
$B_q\setminus v$.  Let $(X,Y)$ be a separation  of $G'$ (that is, two sets $X,Y \subseteq V(G')$ such that $X \cup Y = V(G')$ and such that there are no edges between $X \setminus Y$ and $Y \setminus X$) such that
$X\cap Y =\{v,u\}$, $\branches{q-1}\setminus B_p\subsetneq X$  and
$B_q\subseteq Y$ (such a separation exists by Claim~\ref{claim:rel-path}).

  \begin{claimN}\label{claim:sameECC}
    $\bestECC(v,\branches{q}) = \bestECC(v,\branches{q-1})$.
  \end{claimN}
  \begin{proof}
    By Lemma~\ref{lem-ecc}, it suffices to show
    $\bestECC(v,\branches{q}) \le \bestECC(v,\branches{q-1})$.  To
    this end, let $H$ be the outerplanar completion of $\branches{q}$
    obtained from $G'$ by contracting every branch $B_i$, with $i>q$,
    onto $v$.  Since $H$ is a minor of $G'$, $H$ is a diameter-$D$
    outerplanar completion of $\branches{q}$. We show $\ecc(v,H)\le
    \bestECC(v,\branches{q-1})$.

    Consider any vertex $x\in X$, and let $y\in B_q \subsetneq Y$ be a
    vertex that is at distance at least $\bestECC(v,B_q)$ to both $v$
    and $u$ (such a vertex $y$ exists by Claim~\ref{claim:vertex-ecc}). As
    a shortest path between $x$ and $y$ (of length at most $D$) goes
    through $v$ or $u$, the vertex $x$ is thus at distance at most
    $D -\bestECC(v,B_q)$ to $v$ or $u$.  As $v$ and $u$ are
    adjacent, the vertex $x$ is at distance at most $D +1
    -\bestECC(v,B_q)$ ($= \bestECC(v,\branches{q-1})$ by
    Claim~\ref{claim:>K-link}) to $v$.  Hence every vertex $x\in X$ is
    at distance at most $\bestECC(v,\branches{q-1})$ to $v$ in $H$.

    Consider now any vertex $y\in Y\cap V(H)$, and let $x\in B_1
    \subsetneq X$ be a vertex that is at distance at least
    $\bestECC(v,B_1)$ to both $v$ and $u$ (such a vertex $x$ exists by
    Claim~\ref{claim:vertex-ecc}).  As a shortest path between $x$ and
    $y$ (of length at most $D$) goes through $v$ or $u$, the vertex
    $y$ is thus at distance at most $D -\bestECC(v,B_1)$ to $v$ or
    $u$. As $v$ and $u$ are adjacent, the vertex $y$ is at
    distance at most $D +1 -\bestECC(v,B_1)$ ($= \bestECC(v,B_2)$ by
    Claim~\ref{claim:>K-link}) to $v$. As $\bestECC(v,B_2) \le
    \bestECC(v,\branches{q-1})$ by Lemma~\ref{lem-ecc}, every vertex
    $y\in Y\cap V(H)$ is at distance at most
    $\bestECC(v,\branches{q-1})$ to $v$ in $H$.
  \end{proof}

We now claim that there exist two consecutive such values $q$ between $3$ and $6$. Indeed, note first that $B_1$ is linked to at most two other branches, as otherwise these branches together with $v$ and $B_1$ would induce a $K_{2,3}$-minor. Note also that by Claim~\ref{claim:ecc-i-i+1} and Claim~\ref{claim:>K-link}, $B_2$ is linked to $B_1$, so it follows that $B_1$ is linked to at most one branch $B_j$ with $j \geq 3$. Therefore, for $3 \leq q \leq 6$, there are at least two consecutive values of $q$ such that $B_q$ is not linked to $B_1$. Once we have these two consecutive values, say $i-1$ and $i$, we have by Claim~\ref{claim:sameECC} that $\bestECC(v,\branches{i-2}) = \bestECC(v,\branches{i})$, for some $i\le 6$, contradicting Claim~\ref{claim:ecc-increase}. This concludes the proof of the lemma.
\end{proof}

Our algorithm will compute the eccentricity of a given ``root'' vertex $r$ in a diameter-$D$ outerplanar completion~$G'$ of~$G$ in which this eccentricity is minimal, that is, $\bestECC(r,G)$. Then, however, the branch containing the root ($B_0$ in Algorithm~\ref{alg:connected}, Subsection~\ref{subsec:dp-prog}) should not be removed.
Therefore, although Lemma~\ref{lem-7-blocks} already implies that $G$ has a diameter-$D$ outerplanar completion if and only if $G[\branches{7}]$ does, we instead use the following corollary to identify removable branches.

\begin{corollary}\label{cor-7-blocks}
  Let $G$ be an outerplanar graph with a cut vertex $v$ that belongs
  to at least 8 branches. Denote these branches $B_1,\ldots,B_t$, with $t\ge
  8$, in such a way that $\bestECC(v,B_1) \ge \bestECC(v,B_2) \ge \ldots \ge
  \bestECC(v,B_t)$.  For each $8\le i\le t$, the graph $G_i = \bigcup_{j\in
  \{1,\ldots ,7,i\}} B_j$ has a diameter-$D$ outerplanar completion if and only if $G$ does.
\end{corollary}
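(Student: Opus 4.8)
The plan is to apply Lemma~\ref{lem-7-blocks} twice---once to $G$ and once to $G_i$---and to observe that the two resulting characterizing inequalities are literally the same. First I would check that $G_i$ satisfies the hypotheses of the lemma. Since $G_i = G[\branches{7}\cup B_i]$ is a subgraph of the outerplanar graph $G$, it is outerplanar; it is connected because all of its branches share $v$; and $v$ is a cut vertex of $G_i$ belonging to exactly $8$ branches, namely $B_1,\ldots,B_7,B_i$. A key observation is that $\bestECC(v,\cdot)$, applied to a branch or to a union of branches, depends only on the induced subgraph on the corresponding vertex set (this is immediate from its definition as a minimum over the outerplanar completions of that subgraph). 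Hence each of $\bestECC(v,B_1),\ldots,\bestECC(v,B_7)$ and $\bestECC(v,\branches{6})$ denotes the same quantity whether read inside $G$ or inside $G_i$.

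Next I would fix a suitable ordering of the branches of $G_i$. Because $i\ge 8$ and $\bestECC(v,B_1)\ge\cdots\ge\bestECC(v,B_t)$, we have $\bestECC(v,B_i)\le\bestECC(v,B_7)$, so $B_1,\ldots,B_7,B_i$ is already a valid ordering by nonincreasing eccentricity for $G_i$. With respect to this ordering, the first six branches of $G_i$ are $B_1,\ldots,B_6$ (whose union is $\branches{6}$) and the seventh is $B_7$---exactly the objects occurring in the ordering of $G$ fixed in the statement of the corollary. Applying Lemma~\ref{lem-7-blocks} to $G$ (which has $t\ge 8\ge 7$ branches) shows that $G$ admits a diameter-$D$ outerplanar completion if and only if $\bestECC(v,\branches{6})+\bestECC(v,B_7)\le D$; applying the same lemma to $G_i$ (which has exactly $8\ge 7$ branches, ordered as above) shows that $G_i$ admits such a completion if and only if the same inequality holds. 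The two conditions coincide, which proves the corollary.

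The only point requiring care---and the nearest thing to a difficulty---is confirming that the inequality output by the lemma is the same expression in both applications. This rests on the two facts noted above: that $\bestECC(v,\cdot)$ is intrinsic to the relevant induced subgraph, so the terms involving $B_1,\ldots,B_7$ and $\branches{6}$ agree across $G$ and $G_i$; and that the exhibited ordering of $G_i$ places precisely $B_1,\ldots,B_6$ in its first six positions and $B_7$ in the seventh. Possible ties among the eccentricity values cause no trouble, since I do not need uniqueness of the nonincreasing ordering: it suffices to display one valid ordering of $G_i$ whose top seven branches coincide with those of $G$ and to invoke the lemma for that ordering.
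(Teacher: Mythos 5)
Your proof is correct and rests on the same pillar as the paper's: both directions are obtained from Lemma~\ref{lem-7-blocks}, using that the ordering $B_1,\ldots,B_7,B_i$ is a valid nonincreasing ordering of the branches of $G_i$ and that $\bestECC(v,\cdot)$ depends only on the induced subgraph, so the characterizing inequality $\bestECC(v,\branches{6})+\bestECC(v,B_7)\le D$ is identical for $G$ and $G_i$. The only cosmetic difference is that the paper derives the forward implication from minor-closedness and spells out the gluing for the reverse one, whereas you package both directions as two invocations of the lemma's equivalence; the content is the same.
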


\begin{proof}
  Recall that the property of having an outerplanar completion with bounded diameter is minor closed. Thus $G_i$ being a minor of $G$, we have that if $G$ admits a diameter-$D$ outerplanar completion, then so does $G_i$.

  On the other hand, if $G_i$ admits a diameter-$D$ outerplanar completion, by Lemma~\ref{lem-7-blocks} applied to $G_i$ we have that $\bestECC(v,\branches{6}) + \bestECC(v,B_7) \le D$. Thus gluing on $v$
  the outerplanar completions of $G[\branches{6}], G[B_7],\ldots ,G[B_t]$, respectively achieving $\bestECC(v,\branches{6}), \bestECC(v,B_7), \ldots , \bestECC(v,B_t)$, one obtains a diameter-$D$ outerplanar completion of $G$.
\end{proof}

\subsection{Dealing with 2-vertex separators}
\label{subsec:2vertexSep}

In this subsection, we extend the definition of eccentricity to the pairs $(u,v)$ such that $uv\in E(G)$. Namely, $\ecc(u,v,G)$ is defined as the set of pairs obtained by taking the maximal elements of the set $\{(\dist[G]{u,w},\dist[G]{v,w})\ |\ w\in V(G)\}$. The pairs are ordered such that $(d_1,d_2) \le (d'_1,d'_2)$ if and only if $d_1\le d'_1$ and $d_2\le d'_2$.  As $u$ and $v$ are adjacent, note that $\dist[G]{u,w}$ and $\dist[G]{v,w}$ differ by at most one. Hence, $\ecc(u,v,G)$ equal to one of $\{(d,d)\}$, $\{(d,d+1)\}$, $\{(d+1,d)\}$, and $\{(d,d+1),(d+1,d)\}$, for some positive integer $d$. Given a graph $G$ and a subset $S \subseteq V(G)$, we denote by $\partial(S)$ the set of vertices in $S$ that have at least one neighbor in $V(G) \setminus S$.

\begin{lemma}\label{claim:ecc-triangle}
Consider a connected graph $G$ with $V(G)=: X $ and a triangle $uvw$ and two sets $X_u, X_v \subseteq X$ such that $X_u \cup X_v = X$, $X_u \cap X_v = \{w\}$, $\partial(X_u) \subseteq \{u,w\}$, and $\partial(X_v) \subseteq \{v,w\}$. Then $\ecc(u,v,G)$ equals the maximal elements of the set
\begin{eqnarray*}
  \{(d_u,\min\{d_u+1,d_w+1\}) \ |\ (d_u,d_w)\in \ecc(u,w,X_u)\} & \cup \\
  \{(\min\{d_w+1,d_v+1\},d_v) \ |\ (d_w,d_v)\in\ecc(w,v,X_v)\}.&
\end{eqnarray*}
\end{lemma}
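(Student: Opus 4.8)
The plan is to reduce every distance in $G$ to distances inside the two pieces $G[X_u]$ and $G[X_v]$, and then combine the two local eccentricity tables. First I would pin down the structure of $G$ forced by the hypotheses. Writing $A:=X_u\setminus\{w\}$ and $B:=X_v\setminus\{w\}$, the conditions $\partial(X_u)\subseteq\{u,w\}$ and $\partial(X_v)\subseteq\{v,w\}$ force $uv$ to be the \emph{unique} edge between $A$ and $B$: any edge $ab$ with $a\in A$, $b\in B$ would put $a\in\partial(X_u)$ and $b\in\partial(X_v)$, so $a=u$ and $b=v$. Hence $G$ is exactly the union of $G[X_u]$ and $G[X_v]$, which overlap only in $w$, together with the single bridging edge $uv$; the triangle $uvw$ supplies $uw\in E(G[X_u])$, $vw\in E(G[X_v])$, and $uv$. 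In particular $\{u,w\}$ separates $A\setminus\{u\}$ from $B$, and symmetrically $\{v,w\}$ separates $B\setminus\{v\}$ from $A$. A short connectivity argument (projecting a $G$-path, as below) shows $G[X_u]$ and $G[X_v]$ are connected, so all eccentricities in the statement are well defined.

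The technical heart, and the step I expect to be the main obstacle, is to prove three distance identities together with their symmetric counterparts. For every $z\in X_u$ I would show $\dist[G]{u,z}=\dist[{G[X_u]}]{u,z}=:d_u$ and $\dist[G]{w,z}=\dist[{G[X_u]}]{w,z}=:d_w$. The inequality $\le$ is free; for $\ge$ I would take a shortest $G$-path $P$ and replace any maximal subpath of $P$ that dips into $B$ by a direct move inside $X_u$. Such a subpath enters and leaves $B$ only through the portals $\{u,w\}$ (the sole vertices of $X_u$ adjacent to $B$), so it joins $u$ to $w$ and can be replaced by the edge $uw$ without increasing the length; since $P$ visits each portal at most once, one replacement suffices and produces a path inside $G[X_u]$. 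The same argument handles $G[X_v]$. For the cross term I would prove, for $z\in X_u$, that $\dist[G]{v,z}=\min\{d_u+1,d_w+1\}$: any $v$--$z$ path first enters $X_u$ at a portal $p\in\{u,w\}$, giving length $\ge 1+\dist[{G[X_u]}]{p,z}\ge 1+\min\{d_u,d_w\}$, while the explicit paths $v,u,\dots,z$ and $v,w,\dots,z$ (using the triangle edges $vu$ and $vw$) realize $1+d_u$ and $1+d_w$; here $\dist[G]{v,u}=\dist[G]{v,w}=1$ is precisely what the triangle buys us. The symmetric identity gives $\dist[G]{u,z}=\min\{d_w+1,d_v+1\}$ for $z\in X_v$.

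With these identities in hand, the remainder is bookkeeping on partial orders. Define the coordinatewise-monotone maps $\phi(d_u,d_w):=(d_u,\min\{d_u+1,d_w+1\})$ and $\psi(d_w,d_v):=(\min\{d_w+1,d_v+1\},d_v)$. The distance identities say exactly that $\{(\dist[G]{u,z},\dist[G]{v,z}):z\in X_u\}=\phi(S_u)$ and $\{(\dist[G]{u,z},\dist[G]{v,z}):z\in X_v\}=\psi(S_v)$, where $S_u$ and $S_v$ are the full pair-sets of $G[X_u]$ and $G[X_v]$, whose maximal elements are $\ecc(u,w,X_u)$ and $\ecc(w,v,X_v)$. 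Since $V(G)=X_u\cup X_v$, the set defining $\ecc(u,v,G)$ is $\phi(S_u)\cup\psi(S_v)$, and one checks that the shared vertex $w$ is assigned the consistent pair $(1,1)$ by both maps. Finally I would invoke two elementary facts about finite posets: for a coordinatewise-monotone map $f$ one has $\max f(S)=\max f(\max S)$, and $\max(P\cup Q)=\max(\max P\cup\max Q)$. Applying the first to $\phi$ and $\psi$, then the second, yields $\ecc(u,v,G)=\max\big(\phi(\ecc(u,w,X_u))\cup\psi(\ecc(w,v,X_v))\big)$, which is exactly the claimed expression.
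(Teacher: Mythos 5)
Your proposal is correct and follows essentially the same route as the paper's (very terse) proof: it establishes that $\{u,w\}$ and $\{v,w\}$ act as portals so that distances inside each piece are computed in $G[X_u]$ and $G[X_v]$, that cross-distances go through a portal adjacent to the other apex of the triangle, and then combines the two eccentricity sets. The only difference is that you make explicit the excursion-replacement argument and the poset bookkeeping ($\max f(S)=\max f(\max S)$ for monotone $f$), which the paper dismisses as "clear".
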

\begin{proof}
It is clear from the fact that a shortest path from $X_u \setminus \{u\}$ to $u$ does not go
through $X_v \setminus \{w\}$ (as it should go through $w\in N(u)$), from the fact
that a shortest path from $X_u$ to $v$ goes through $\{u,w\}\subseteq
N(v)$, and from the fact that any subpath of a shortest path is a
shortest path (for some pair of vertices).
\end{proof}

Given a connected outerplanar graph $G$, for any two vertices $u,v\in V(G)$ and any vertex set $X\subseteq V(G)$ with $u,v \in X$ such that $\partial(X) \subseteq \{u,v\}$, let us define $\bestECC_D(u,v,X)$ as the minimal elements of the set
\[
  \left\{
    \ecc(u,v,H) \ \left|
    \parbox{8.8cm}{$H$ is a diameter-$D$ outerplanar completion of $G[X]$  such that $uv \in E(H)$ and such that $uv$ lies on the outer face.}\right.
  \right\}
\]
If this set is empty, we set $\bestECC_D(u,v,X)$ to $(\infty,\infty)$. Similarly to Subsection~\ref{subsec:cutvertices}, we may drop the subindex $D$ from the notation $\bestECC_D(u,v,X)$. Here, $\ecc(u,v,H) \le \ecc(u,v,H')$ if and only if for any $(d_1,d_2)\in\ecc(u,v,H)$ there exists a $(d'_1,d'_2)\in \ecc(u,v,H')$ such that $(d_1,d_2) \le (d'_1,d'_2)$. According to the possible forms of $\ecc(u,v,H)$, we have that $\bestECC(u,v,X)$ is of one of the following five forms:
\begin{itemize}
  \item[$\bullet$] $\{\{(d,d)\}\}$,
  \item[$\bullet$] $\{\{(d,d+1)\}\}$,
  \item[$\bullet$] $\{\{(d+1,d)\}\}$,
  \item[$\bullet$] $\{\{(d,d+1),(d+1,d)\}\}$, or
  \item[$\bullet$] $\{\{(d,d+1)\},\{(d+1,d)\}\}$,
\end{itemize}
for some positive integer $d$.

Considering $\bestECC(u,X)$ for some~$u$ and~$X$, note that $u$ has at least one incident edge $uv$ on the outer face in an outerplanar completion achieving $\bestECC(u,X)$. Thus, we can observe the following.

\begin{obs}\label{obs:ecc-u-ecc-uv}
  $\bestECC(u,X) = \min_{v\in X}\min_{S\in\bestECC(u,v,X)}\max_{(d_u,d_v)\in S} d_u$.
\end{obs}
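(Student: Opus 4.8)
The plan is to establish the two inequalities $\bestECC(u,X)\le R$ and $\bestECC(u,X)\ge R$ separately, where $R:=\min_{v\in X}\min_{S\in\bestECC(u,v,X)}\max_{(d_u,d_v)\in S} d_u$ denotes the right-hand side. The linchpin of both directions is the elementary fact that, for any diameter-$D$ outerplanar completion $H$ of $G[X]$ and any edge $uv\in E(H)$, one has $\max_{(d_u,d_v)\in\ecc(u,v,H)} d_u = \ecc(u,H)$. Indeed, $\ecc(u,v,H)$ consists of the maximal elements of the finite set $P:=\{(\dist[H]{u,w},\dist[H]{v,w}) \mid w\in X\}$ under the product order, so every pair in $P$ is dominated by some maximal pair; hence the largest first coordinate occurring in $P$ is already attained by a maximal pair, and this largest first coordinate is by definition $\max_{w\in X}\dist[H]{u,w}=\ecc(u,H)$.

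For $\bestECC(u,X)\le R$, I would pick a vertex $v^*$ and a set $S^*\in\bestECC(u,v^*,X)$ attaining the double minimum defining $R$. By the definition of $\bestECC(u,v^*,X)$ there is a diameter-$D$ outerplanar completion $H^*$ of $G[X]$ with $uv^*\in E(H^*)$ on the outer face and $\ecc(u,v^*,H^*)=S^*$. The key fact then gives $\ecc(u,H^*)=\max_{(d_u,d_v)\in S^*} d_u = R$. Since $H^*$ is itself a legitimate competitor in the definition of $\bestECC(u,X)$, we conclude $\bestECC(u,X)\le\ecc(u,H^*)=R$.

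For the reverse inequality $\bestECC(u,X)\ge R$, I would start from a completion $H$ achieving $\bestECC(u,X)=\ecc(u,H)$ and, using the standing remark that we may assume the completed graph is maximal, take $H$ to be maximal, hence $2$-connected and triangulated. Then every vertex, in particular $u$, lies on the outer Hamiltonian cycle, so some incident edge $uv$ lies on the outer face; this makes $H$ an admissible completion in the definition of $\bestECC(u,v,X)$. Consequently $\ecc(u,v,H)$ lies in the set whose minimal elements constitute $\bestECC(u,v,X)$, so (finiteness of that set being clear, as all distances are at most $D$) there is a minimal element $S\le\ecc(u,v,H)$. By the definition of the order on these sets, each pair of $S$ is dominated by a pair of $\ecc(u,v,H)$; comparing first coordinates and invoking the key fact yields $\max_{(d_u,d_v)\in S} d_u \le \ecc(u,H)$. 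Taking the minimum over $S\in\bestECC(u,v,X)$ and then over $v\in X$ gives $R\le\ecc(u,H)=\bestECC(u,X)$.

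The routine part is the bookkeeping with the product order on pairs and the induced order on sets of pairs. The one genuine point requiring care — the main obstacle — is the reduction to a maximal completion in the reverse direction: one must verify that maximalizing $H$ neither increases $\ecc(u,\cdot)$ nor violates the diameter bound (both hold because adding edges can only shorten distances) and that $2$-connectivity really forces $u$ to carry an outer-face edge $uv$, so that $\bestECC(u,v,X)$ is in fact defined and $H$ qualifies for it. I would dispatch the degenerate cases $|X|\le 2$ and $\bestECC(u,X)=\infty$ separately, the latter simply by observing that both sides then equal $\infty$.
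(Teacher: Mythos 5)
Your proposal is correct and follows essentially the same route as the paper: the paper justifies the observation solely by the remark that in a completion achieving $\bestECC(u,X)$ (which may be taken maximal) the vertex $u$ has an incident edge $uv$ on the outer face, which is exactly the crux of your ``$\geq$'' direction, while your ``$\leq$'' direction and the identity $\max_{(d_u,d_v)\in \ecc(u,v,H)} d_u = \ecc(u,H)$ are the routine verifications the paper leaves implicit. Your write-up just makes this bookkeeping explicit; no new idea is introduced and none is missing.
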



\subsection{The algorithm for connected outerplanar graphs}
\label{subsec:dp-prog}

We now proceed to describe a polynomial-time algorithm that solves \OPDIfull when the input outerplanar graph is assumed to be connected. In Subsection~\ref{subsec:disconnected} we will deal with the disconnected case.
In a graph, a \emph{block} is either a 2-connected component or a bridge.
Before proceeding to the formal description of the algorithm, let us provide a high-level sketch.

Algorithm~\ref{alg:connected} described below receives a \emph{connected} outerplanar graph $G$, an arbitrary non-cut vertex $r$ of $G$, called the \emph{root} (such a vertex is easily seen to exist in any graph), and a positive integer $D$. In order to decide whether $G$ admits a diameter-$D$ outerplanar completion, we will compute in polynomial time the value of $\bestECC_D(r,G)$, which, by definition, is distinct from $\infty$ if and only if $G$ admits a diameter-$D$ outerplanar completion.

In order to compute $\bestECC_D(r,G)$, the algorithm proceeds as follows. In the first step (lines~\ref{ln:first line}-\ref{ln:remove blocks}), we consider an arbitrary block $B$ of $G$ containing $r$ (line~\ref{ln:first line}), and in order to reduce the input graph $G$, we consider all cut vertices $v$ of $G$ in $B$. For each such cut vertex $v$, we order its corresponding branches according to their eccentricity w.r.t. $v$ (line~\ref{ln:reorder blocks}), and by Corollary~\ref{cor-7-blocks} it is safe to keep just a constant number of them, namely 8 (line~\ref{ln:remove blocks}). For computing the eccentricity of the branches not containing the root (lines~\ref{ln:for i}-\ref{ln:recursion}), the algorithm calls itself recursively, by considering the branch as input graph, and vertex $v$ as the new root.

In the second step of the algorithm (lines~\ref{ln:foreach uvX}-\ref{ln:update table}),  we try all 2-vertex separators $u,v$ in the eventual completed graph $G'$ (note that $G'$ cannot be 3-connected, as otherwise it would contain a $K_{2,3}$-minor), together with a set $X$ consisting of a subset of the connected components of $G' \setminus \{u,v\}$, not containing the root $r$. For each such triple $(u,v,X)$, our objective is to compute the value of $\bestECC_D(u,v,X)$. For doing so, after initializing its value (lines~\ref{ln:init table}-\ref{ln:init table 2}), we consider all possible triples $w, X_u, X_v$  chosen as in Lemma~\ref{claim:ecc-triangle} after adding the triangle  $uvw$ to $G[X]$ (line~\ref{ln:foreach wXuXv}), for which we already know the values of $\bestECC_D(u,w,X_u)$ and $\bestECC_D(w,v,X_v)$, since the sets $X$ are processed by increasing size. Among all choices of one element in $\bestECC_D(u,w,X_u)$ and another in $\bestECC_D(w,v,X_v)$ (line~\ref{ln:foreach Su and Sv}), only those whose corresponding completion achieves diameter at most $D$ are considered for updating the value of  $\bestECC_D(u,v,X)$ (line~\ref{ln:check Su Sv}). For updating $\bestECC_D(u,v,X)$ (line~\ref{ln:update table}), we first compute $\ecc_D(u,v,X)$ using Lemma~\ref{claim:ecc-triangle} (line~\ref{ln:compute Ecc}).

Finally, once we have computed all values of $\bestECC_D(u,v,X)$, we can easily compute the value of $\bestECC_D(u,X)$ by using Observation~\ref{obs:ecc-u-ecc-uv} (line~\ref{ln:final value}). 
We can now provide a formal description of the algorithm.

\newcommand{\DPtable}{\ensuremath{\text{Tab}_{\text{ECC}}}}
\vspace{.65cm}
\begin{algorithm}[H]
  \DontPrintSemicolon
  \LinesNumbered

  \SetKwInOut{Input}{Input}
  \SetKwInOut{Output}{Output}
  \Input{A connected outerplanar graph $G$, a root $r\in V(G)$ such that $G \setminus \{r\}$ is connected, and a positive integer $D$.}
  \Output{$\bestECC_D(r,G)$.\vspace{.2cm}}
  \tcp{all over the recursive calls of the algorithm, $G$ is a global variable, which gets updated whenever some vertices are removed in line~\ref{ln:remove blocks}}

  \nllabel{ln:first line} Let $B$ be a block of $G$ containing $r$\;

  \tcp{we consider all cut vertices of $B$ and we reduce $G$}
  \ForEach{cut vertex $v \in V(B)$}{
    Let $C_0,\ldots, C_t$ be the connected components of $G\setminus \{v\}$, where $r\in C_0$\;
    Let $B_0 \gets C_0$ \hfill\rtcp{the branch containing the root}
    \For{$i\gets 1$ \KwTo $t$\nllabel{ln:for i}}{
      Let $B_i \gets G[C_i\cup\{v\}]$\nllabel{ln:get branches}\rtcp{the branches around $v$}
      $\text{ecc}_i \gets$ \texttt{OPDI-Connected}$(B_i,v,D)$.\nllabel{ln:recursion}\rtcp{recursive call to compute $\bestECC_D(v,B_i)$}
    }
    Reorder the $B_i$'s so that $\text{ecc}_1 \ge \text{ecc}_2 \ge \ldots \ge \text{ecc}_t$\nllabel{ln:reorder blocks}\;
    Remove $B_8,\ldots,B_t$ from $G$ \hfill\rtcp{by Corollary~\ref{cor-7-blocks}}\nllabel{ln:remove blocks}
  }

  \tcp{guess all 2-separators $u,v$ in the target completion $G'$, together with a subset $X$ of the connected components of $G' \setminus \{u,v\}$}

  \ForEach{triple $(u,v,X)$ such that $r\notin X\setminus \{u,v\}$ and $\partial(X) \subseteq \{u,v\}$\nllabel{ln:foreach uvX}}{
    \tcp{by increasing size of $X$, and only if the triple $(u,v,X)$ has not already been considered before in a previous iteration}
    $\DPtable(u,v,X) \gets \{\{(\infty,\infty)\}\}$. \ \rtcp{it corresponds to $\ecc_D^*(u,v,X)$}\nllabel{ln:init table}
    \lIf{$X=\{u,v\}$}{
      $\DPtable(u,v,X) \gets \{\{(0,1),(1,0)\}\}$.\;\nllabel{ln:init table 2}
    }
    \lElse{
      \ForEach{$w, X_u, X_v$ satisfying the hypothesis of Lemma~\ref{claim:ecc-triangle} in the graph obtained from $G[X]$ by adding the triangle  $uvw$
      \nllabel{ln:foreach wXuXv}}{
        \tcp{eccentricities of smaller subgraphs have been already computed}
        \ForEach{$S_u\in\emph{\DPtable}(u,w,X_u)$ and $S_v\in\emph{\DPtable}(w,v,X_v)$\nllabel{ln:foreach Su and Sv}}{
          \If{for all $(d_u,d_w^u) \in S_u$ and $(d_v,d_w^v) \in S_v$, we have $(d_w^u + d_w^v \leq D)$ or $(d_u + 1 + d_v \leq D)$\nllabel{ln:check Su Sv}}{
            \tcp{if the diameter of the considered completion of $X$ is $\leq D$, we compute $\ecc_D(u,v,X)$ using Lemma~\ref{claim:ecc-triangle}}

            $\text{Ecc} \gets \max\left\{\protect\parbox{7.5cm}{
              $\{(d_u,\min\{d_u+1,d_w+1\}) \ |\ (d_u,d_w)\in S_u \} \cup \{(\min\{d_v+1,d_w+1\},d_v)\ |\ (d_w,d_v)\in S_v\}\}$
            }\right\}$\nllabel{ln:compute Ecc}\;
          \tcp{update $\ecc_D^*(u,v,X)$}
          $\DPtable(u,v,X) \gets \min\{\DPtable(u,v,X) \cup \text{Ecc}\}$\nllabel{ln:update table}\;
          }
          }
          }
          }
          }
  \tcp{finally, we compute $\bestECC_D(r,G)$ using  Observation~\ref{obs:ecc-u-ecc-uv}}
  \Return $\min_{v\in V(G)\setminus \{r\}} \min_{S\in\text{Tab}_{\text{ECC}}(r,v,V(G))}\max_{(d_u,d_v)\in S} d_u$.\nllabel{ln:final value}
  \caption{\texttt{OPDI-Connected}\label{alg:connected}}
\end{algorithm} \vspace{0.3cm}

The correctness of Algorithm~\ref{alg:connected} follows from the results proved in Subsections~\ref{subsec:cutvertices} and~\ref{subsec:2vertexSep}, and the following fact (whose proof is straightforward), which guarantees that the value of $\ecc_D^*(u,v,X)$ can indeed be computed as done in lines~\ref{ln:foreach wXuXv}-\ref{ln:update table}.

\begin{fact}\label{fact:ejk}
There exists an outerplanar completion $H$ of $G[X]$ with the edge $uv$ on the outerboundary if and only if there is $w\in X$ and two sets $X_u, X_v$ such that:
\begin{itemize}
\item[(a)] $X_u\cup X_v=X$, $X_u\cap X_v=\{w\}$,\vspace{-.1cm}
\item[(b)]$\partial_G(X_u)\subseteq \{u,w\}$ and $\partial_G(X_v)\subseteq \{v,w\}$, and \vspace{-.1cm}
\item[(c)] there exists an outerplanar completion $H_u$ of $G[X_u]$ with the edge $uw$ on the outerboundary, and an outerplanar completion $H_v$ of $G[X_v]$ with the edge $vw$ on the outerboundary.
\end{itemize}
\end{fact}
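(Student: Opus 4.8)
The plan is to prove both implications by explicit construction; no reference to the diameter is needed, since the statement concerns only the existence of outerplanar completions with a prescribed boundary edge. Throughout I will reduce to \emph{maximal} completions (recall that, when looking for an outerplanar completion, we may always assume it is $2$-connected and triangulated, as adding chords and boundary edges preserves outerplanarity and keeps a given edge on the outer face). For such a graph the outer face is bounded by a Hamiltonian cycle, every inner face is a triangle, and, as noted above, every inner edge forms a $2$-separator. These are the only structural ingredients I will use.

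For the forward direction ($\Rightarrow$), suppose $H$ is an outerplanar completion of $G[X]$ with $uv$ on the outer boundary, and assume $H$ is maximal. Since $uv$ is a boundary edge, it lies on a unique inner face, which is a triangle $uvw$; I take this $w$. Writing the outer Hamiltonian cycle as $u,v,c_1,\dots,c_m$ and locating $w=c_j$, I set $X_v:=\{v,w,c_1,\dots,c_{j-1}\}$ and $X_u:=\{u,w,c_{j+1},\dots,c_m\}$, so that (a) holds by construction. Condition (c) is then immediate: $H[X_u]$ and $H[X_v]$ are induced subgraphs of the outerplanar graph $H$ containing $G[X_u]$ and $G[X_v]$, respectively, and the edges $uw$, $vw$ bound the triangle $uvw$ and hence lie on the outer face of the corresponding induced subgraph. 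The only point requiring care is condition (b): I will argue that a vertex $x\in X_u$ having a $G$-neighbour $y\notin X_u$ must lie in $\{u,w\}$, splitting on whether $y\notin X$ or $y\in X$. If $y\notin X$ then $x\in\partial_G(X)\subseteq\{u,v\}$ (this containment is the standing hypothesis on the triples considered), and since $v\notin X_u$ we get $x=u$; if $y\in X\setminus X_u=X_v\setminus\{w\}$, then $xy$ is an edge of $G[X]\subseteq H$ crossing the separator induced by the chord $uw$, whence $x\in\{u,w\}$. The symmetric argument bounds $\partial_G(X_v)$.

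For the reverse direction ($\Leftarrow$), given $w,X_u,X_v$ satisfying (a)--(c), I glue the completions $H_u$ and $H_v$ along their common vertex $w$ and add the triangle $uvw$ (i.e.\ the edge $uv$, noting $uw\in E(H_u)$ and $vw\in E(H_v)$). To obtain an outerplanar embedding with $uv$ on the outer face, I place all vertices of $X$ on a circle in the cyclic order $u,v$, then the vertices of $X_v$ running from $v$ to $w$, then the vertices of $X_u$ running from $w$ back to $u$. Because $uw$ lies on the outer face of $H_u$ and $vw$ lies on the outer face of $H_v$, each of $H_u$ and $H_v$ can be drawn inside the disk region cut off by the corresponding chord; these two regions, together with the triangle $uvw$, tile the disk without crossings, making $H$ outerplanar with $uv$ on the outer boundary. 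It remains to check that $H$ is a completion of $G[X]$, i.e.\ that no edge of $G[X]$ is missing: edges inside $X_u$ (resp.\ $X_v$) already lie in $H_u\subseteq H$ (resp.\ $H_v\subseteq H$), and condition (b) forces every $G$-edge between $X_u\setminus\{w\}$ and $X_v\setminus\{w\}$ to join $\partial_G(X_u)\setminus\{w\}=\{u\}$ to $\partial_G(X_v)\setminus\{w\}=\{v\}$, i.e.\ to be the edge $uv$, which has been added.

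I expect the only genuinely delicate step to be the bookkeeping of the boundary sets $\partial_G$ in condition (b), where one must combine the two regimes: neighbours \emph{outside} $X$, controlled by the standing hypothesis $\partial(X)\subseteq\{u,v\}$, and neighbours \emph{inside} $X$, controlled by the $2$-separator arising from an inner edge. The gluing in the ($\Leftarrow$) direction is routine once the cyclic order is fixed, though one should dispose of the degenerate cases $X_u=\{u,w\}$ or $X_v=\{v,w\}$ (where the corresponding completion is a single edge). Finally, I would note that this construction is exactly the decomposition underlying the recurrence of Lemma~\ref{claim:ecc-triangle}, which is why the Fact certifies the correctness of the table updates in the algorithm.
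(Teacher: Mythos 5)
Your proof is correct. The paper itself gives no argument for this Fact (it is declared straightforward), and your construction --- taking $w$ to be the third vertex of the unique inner triangle of a maximal completion containing $uv$ and splitting the outer cycle at $w$ for the forward direction, and gluing the two completions along $w$ inside the two regions cut off by the chords $uw$ and $vw$ for the converse --- is exactly the intended decomposition, with the one genuinely delicate point (that condition~(b) follows from combining the standing hypothesis $\partial(X)\subseteq\{u,v\}$ with the $2$-separator given by the inner edge $uw$) handled correctly. The only caveat worth recording is that the statement implicitly assumes $|X|>2$ (otherwise no $w$ exists), which is consistent with the algorithm treating $X=\{u,v\}$ as a separate base case.
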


It remains to analyze the running time of the algorithm.

\paragraph{\textbf{Running time analysis of Algorithm~\ref{alg:connected}}.} Note that at line~\ref{ln:get branches} each $B_i$ is recursively replaced by an equivalent (by Corollary~\ref{cor-7-blocks}) subgraph such that its cut vertices have at most 8 branches attached.


Let us first focus on the second step of the algorithm, that is, on lines~\ref{ln:foreach uvX}-\ref{ln:update table}. The algorithm considers in line~\ref{ln:foreach uvX} at most $O(n^2)$ pairs $\{u,v\}$. As each of $u$ and $v$ has at most 7 attached branches avoiding the root, and $G \setminus \{u,v\}$ has at most 2 connected components with vertices adjacent to both $u$ and $v$ (as otherwise $G$ would contain a $K_{2,3}$-minor), there are at most $2^{7} \cdot 2^{7} \cdot 2^{2} = 2^{16}$ possible choices for assigning these branches or components to $X$ or not. In line~\ref{ln:foreach wXuXv}, the algorithm considers $O(n)$ vertices $w$. Similarly, as $w$ belongs to at most 7 branches not containing $u$ nor $v$, there are at most $2^7$ choices for assigning these branches to $X_u$ or $X_v$. In lines~\ref{ln:foreach Su and Sv}-\ref{ln:update table}, the algorithm uses values that have been already computed in previous iterations, as the sets $X$ are considered by increasing order. Note that each of $\bestECC_D(u,w,X_u)$ and $\bestECC_D(w,v,X_v)$ contains at most 2 elements, so at most 4 choices are considered in line~\ref{ln:foreach Su and Sv}. Again, at most 4 choices are considered in line~\ref{ln:check Su Sv}. Therefore, lines~\ref{ln:foreach Su and Sv}-\ref{ln:update table} are executed in constant time.

As for the first step of the algorithm (lines~\ref{ln:first line}-\ref{ln:remove blocks}), the algorithm calls itself recursively. The number of recursive calls is bounded by the number of blocks of $G$, as by construction of the algorithm each block is assigned a single root. Therefore, the number of recursive calls is $O(n)$. Once the algorithm calls itself and the corresponding branch has no cut vertex other than the root, the algorithm enters in lines~\ref{ln:foreach uvX}-\ref{ln:update table}, whose time complexity has already been accounted above. (Note that each triple $(u,v,X)$ is considered only once, and the value of $\ecc^*_D(u,v,X)$ is stored in the tables.)

Finally, in line~\ref{ln:final value}, the algorithm considers $O(n)$ vertices, and for each of them it chooses among constantly many numbers.  Summarizing, we have that the algorithm has overall complexity $O(n^3)$. 

It is worth mentioning that Algorithm~\ref{alg:connected} can also compute the actual completion achieving diameter at most $D$, if any, within the same time bound. Indeed, it suffices to keep track of which edges have been added to $G$ when considering the guessed triangles $uvw$ (recall that we may assume that the completed graph is triangulated).


\begin{theorem}\label{thm-algo-connected}
Algorithm~\ref{alg:connected} solves \OPDIfull for connected input graphs in time $O(n^3)$.
\end{theorem}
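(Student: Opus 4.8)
The plan is to prove the two assertions of the statement separately: first that Algorithm~\ref{alg:connected} correctly returns $\bestECC_D(r,G)$ (which, by the definition of $\bestECC_D$, is finite exactly when $G$ admits a diameter-$D$ outerplanar completion), and then that it runs in time $O(n^3)$. Since all the structural ingredients have been assembled in Subsections~\ref{subsec:cutvertices} and~\ref{subsec:2vertexSep}, the proof is mostly a verification that the algorithm faithfully implements them, which I would carry out by induction on the recursion (equivalently, on the number of blocks of the input, bottoming out at graphs whose only admissible root is non-cut and whose processing skips directly to the second step).

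For correctness I would first argue that the reduction in lines~\ref{ln:first line}--\ref{ln:remove blocks} is safe. For each cut vertex $v$ of the block $B$ containing $r$, the recursive calls compute $\bestECC_D(v,B_i)$ for the non-root branches by the induction hypothesis; after reordering by eccentricity and discarding all but the seven heaviest (while always retaining the root branch $B_0$), Corollary~\ref{cor-7-blocks} guarantees that feasibility is preserved. The point I would check carefully here is that the \emph{value} $\bestECC_D(r,\cdot)$, and not merely feasibility, survives the reduction: by the monotonicity of Lemma~\ref{lem-ecc} removing branches can only decrease the eccentricity witnessed at $r$, while the farthest vertex reached from $r$ through $v$ lies in the heaviest non-root branch, which is retained, so equality holds.

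Next I would verify that the second step (lines~\ref{ln:foreach uvX}--\ref{ln:update table}) correctly computes $\bestECC_D(u,v,X)$ for every admissible triple, by induction on $|X|$. The base case $X=\{u,v\}$ is immediate (line~\ref{ln:init table 2}). For the inductive step, Fact~\ref{fact:ejk} shows that every outerplanar completion of $G[X]$ carrying $uv$ on the outer face decomposes, through some triangle $uvw$, into completions of $G[X_u]$ and $G[X_v]$, and conversely; Lemma~\ref{claim:ecc-triangle} then supplies precisely the combination formula used in line~\ref{ln:compute Ecc}. Because the triples are processed by increasing $|X|$, the needed values $\bestECC_D(u,w,X_u)$ and $\bestECC_D(w,v,X_v)$ are already tabulated, and the test in line~\ref{ln:check Su Sv} is exactly the assertion that the glued completion has diameter at most $D$ (the two parts being joined either through $w$ or through the edge $uv$). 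Taking minimal elements yields $\bestECC_D(u,v,X)$, and Observation~\ref{obs:ecc-u-ecc-uv} converts the pairwise quantities into $\bestECC_D(r,G)$ in line~\ref{ln:final value}. The main obstacle in this part is the bookkeeping: one must confirm that the increasing-size ordering together with memoization is well-founded (no triple depends on one of equal or larger size), and that each of the five possible forms of $\bestECC_D(u,v,X)$ is handled correctly by the $\min$/$\max$ operations.

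For the running time I would simply collect the counting already sketched before the statement. After the first step each retained cut vertex has at most eight branches, so among the $O(n^2)$ pairs $(u,v)$ there are only $O(1)$ admissible sets $X$; for each resulting triple the loop over $w$ ranges over $O(n)$ vertices, and since each of $\bestECC_D(u,w,X_u)$ and $\bestECC_D(w,v,X_v)$ has at most two elements, the body of lines~\ref{ln:foreach Su and Sv}--\ref{ln:update table} runs in constant time. As every triple is computed once and stored across the $O(n)$ recursive calls, the total cost of the dynamic-programming tables is $O(n^2)\cdot O(n)=O(n^3)$, and the final aggregation in line~\ref{ln:final value} costs $O(n)$. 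Hence the overall running time is $O(n^3)$, completing the proof.
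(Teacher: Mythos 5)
Your proposal is correct and follows essentially the same route as the paper: correctness of the branch-pruning step via Corollary~\ref{cor-7-blocks}, correctness of the table computation by induction on $|X|$ using Fact~\ref{fact:ejk} and Lemma~\ref{claim:ecc-triangle}, and the same $O(n^2)$ triples $\times$ $O(n)$ choices of $w$ counting for the $O(n^3)$ bound. Your extra remark that the pruning must preserve the value $\bestECC_D(r,\cdot)$ (not merely feasibility, since the recursion reorders branches by these values) is a point the paper passes over more quickly, but it is a refinement of, not a departure from, the paper's argument.
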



Note that we can compute ${\bf opdi}(G)$ by calling Algorithm~\ref{alg:connected} with an arbitrary root~$r\in V(G)$ (such that $G \setminus \{r\}$ is connected) for increasing values of~$D$.

%
%
%

\begin{corollary}\label{cor-opt-connected}
  Let~$G$ be a connected outerplanar graph. Then, ${\bf opdi}(G)$ can be computed in time $O(n^4)$.
\end{corollary}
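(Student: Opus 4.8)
The plan is to reduce the computation of $\mathbf{opdi}(G)$ to repeated invocations of Algorithm~\ref{alg:connected}, which by Theorem~\ref{thm-algo-connected} solves the decision version for connected graphs in time $O(n^3)$. Recall that $\mathbf{opdi}(G)$ is the minimum diameter over all outerplanar completions of $G$, and that $\bestECC_D(r,G)$ (the output of the algorithm) is finite if and only if $G$ admits a diameter-$D$ outerplanar completion. Hence for any fixed $D$, a single call to \texttt{OPDI-Connected}$(G,r,D)$ with an arbitrary root $r$ such that $G\setminus\{r\}$ is connected decides whether $\mathbf{opdi}(G)\le D$.

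The key step is to bound the number of candidate values of $D$ that must be tested and to search among them efficiently. First I would observe that $\mathbf{opdi}(G)$ is a well-defined finite integer: since $G$ is outerplanar, it always admits at least one outerplanar completion (e.g.\ a maximal triangulation of any fixed outerplanar embedding), so the minimum is attained at some $D^\star\le n-1$. Thus the range of relevant diameters is $\{1,2,\ldots,n-1\}$. Next, I would use the monotonicity of the decision problem in $D$: if $G$ admits a diameter-$D$ outerplanar completion then it trivially admits a diameter-$(D+1)$ one. This monotonicity means the predicate ``$\mathbf{opdi}(G)\le D$'' is a threshold (nondecreasing) Boolean function of $D$, so $\mathbf{opdi}(G)$ is exactly the smallest $D$ for which \texttt{OPDI-Connected} returns a finite value.

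Given monotonicity, the most direct implementation simply calls Algorithm~\ref{alg:connected} for $D=1,2,3,\ldots$ and returns the first $D$ yielding a finite output. In the worst case this performs $O(n)$ calls, each costing $O(n^3)$, for a total of $O(n^4)$, matching the claimed bound. I would present this linear scan as the core argument since it requires nothing beyond Theorem~\ref{thm-algo-connected} and the two elementary observations above, and it already meets the stated time complexity; a binary search over $D$ would reduce the number of calls to $O(\log n)$ and hence improve the running time to $O(n^3\log n)$, but this refinement is not needed for the corollary and I would mention it only as a remark.

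The only point requiring care—and the place I would expect a reader to look for a gap—is the precondition of Algorithm~\ref{alg:connected}, namely that it receives a root $r$ with $G\setminus\{r\}$ connected, i.e.\ a non-cut vertex. Such a vertex exists in every connected graph (any leaf of a spanning tree, or more simply any vertex of minimum eccentricity in a block-cut tree endpoint), so this is harmless; the same root $r$ can be reused across all values of $D$ since it depends only on $G$, not on $D$. Assembling these pieces, $\mathbf{opdi}(G)$ is computed by fixing one such root $r$ and running the decision procedure for increasing $D$ until the returned value is finite, which is correct by monotonicity and finiteness, and runs in $O(n^4)$ time in total.
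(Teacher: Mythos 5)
Your proposal is correct and matches the paper's approach exactly: the paper also computes $\mathbf{opdi}(G)$ by fixing a root $r$ with $G\setminus\{r\}$ connected and calling Algorithm~\ref{alg:connected} for increasing values of $D$, yielding $O(n)$ calls at $O(n^3)$ each for a total of $O(n^4)$. Your additional remarks on monotonicity, the $D\le n-1$ bound, and the possible $O(n^3\log n)$ improvement via binary search are all sound.
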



\subsection{The algorithm for disconnected outerplanar graphs}
\label{subsec:disconnected}
\newcommand{\minrad}[1]{\ensuremath{r^*({#1})}}
\newcommand{\minecc}[1]{\ensuremath{r^+({#1})}}

In this subsection we will focus on the case where the input outerplanar graph is disconnected. The \emph{radius} of a graph is defined as the eccentricity of a ``central'' vertex, that is, the minimum eccentricity of any of its vertices.
\begin{lemma}[\!\!\cite{journal-DEGMS14}, Theorem~3]
  Let~$G$ be a maximal outerplanar graph of diameter~$D$ and radius~$r$.
  Then, $r\leq\lfloor D/2\rfloor+1$.
  \label{lem:MOPs ecc}
\end{lemma}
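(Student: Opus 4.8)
The plan is to exhibit a single vertex of eccentricity at most $\lfloor D/2\rfloor+1$; since the radius is the minimum eccentricity over all vertices, this is exactly the claim. Equivalently, using the universal bound $r\geq\lceil D/2\rceil$, it suffices to prove $D\geq 2r-2$, because then $r\leq D/2+1$ and integrality force $r\leq\lfloor D/2\rfloor+1$. My candidate centre is the midpoint of a diametral geodesic: fix $a,b$ with $\dist[G]{a,b}=D$, fix a shortest $a$--$b$ path $P=(a=p_0,p_1,\dots,p_D=b)$, and let $m=p_{\lfloor D/2\rfloor}$ (for odd $D$ I would keep both middle vertices $p_{(D-1)/2},p_{(D+1)/2}$ and argue about whichever turns out better, since this flexibility is what the tight bound seems to need). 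The guiding intuition is the tree case: if $G$ were a tree, any vertex $x$ would attach to $P$ at a single vertex $p_i$ with $\dist[G]{x,p_i}\le\min(i,D-i)$ --- forced by $\dist[G]{x,a}\le D$ and $\dist[G]{x,b}\le D$ --- and hence $\dist[G]{x,m}\le\lceil D/2\rceil$. The whole difficulty is that a maximal outerplanar graph is not a tree, so a shortest path from $x$ may run alongside $P$ and meet it in a complicated way, and purely metric constraints (the profile $i\mapsto\dist[G]{x,p_i}$ being $1$-Lipschitz with endpoints at most $D$) are far too weak to bound $\dist[G]{x,m}$.

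The structural input I would use is the following. First, recall that in a maximal outerplanar graph every vertex lies on the outer Hamiltonian cycle, the neighbourhood of every vertex induces a path (the fan of triangles around it), and, crucially, $G$ excludes both $K_4$ and $K_{2,3}$ as minors. Second, for an arbitrary vertex $x$ I would consider the geodesic triangle formed by $P$ together with shortest paths $R_a$ from $x$ to $a$ and $R_b$ from $x$ to $b$, and prove a \emph{thinness} statement: the midpoint $m$ lies within distance $1$ of $R_a\cup R_b$. The mechanism is that, if $m$ were far from both of the other two sides, then the three geodesics, together with the triangulating chords that separate the enclosed region, would realise a $K_{2,3}$- or a $K_4$-minor (with branch sets drawn from $\{a,b,x,m\}$ and the connecting subpaths), contradicting outerplanarity. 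Third, with $m$ within $1$ of a vertex $w\in R_a\cup R_b$, I would convert this back to a distance bound: the point of using \emph{both} inequalities $\dist[G]{x,a}\le D$ and $\dist[G]{x,b}\le D$ simultaneously --- exactly as in the tree computation, where they pin down the attachment index of $x$ --- is that the thinness of the previous step lets one replace the (possibly non-existent) single attachment vertex by $w$, recovering the tree estimate up to the controlled loss. Taking the maximum over $x$ then gives $\ecc(m)\le\lfloor D/2\rfloor+1$, hence $r\leq\lfloor D/2\rfloor+1$.

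The main obstacle is precisely the quantitative part: turning outerplanarity into a bound with the exact constant. The target is tight, and this is where I would spend most of the effort: for odd $D$ the claim $r\leq\lfloor D/2\rfloor+1=\lceil D/2\rceil$ together with $r\geq\lceil D/2\rceil$ forces equality, so the thinness argument must lose essentially nothing and cannot merely bound distances up to an additive constant. Concretely, I expect to need a case analysis on how the chords of the triangulation cross the region bounded by $P$, $R_a$ and $R_b$, extracting a forbidden minor whenever a shortest path would take too long a ``detour'' toward the centre of $P$; getting the thinness constant and the subsequent bookkeeping to close at $\lfloor D/2\rfloor+1$ rather than $\lfloor D/2\rfloor+2$ is the delicate step. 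As fall-back strategies, should the direct estimate become unwieldy, I would consider (a) an induction deleting an ear vertex $v$ (a degree-$2$ vertex whose two neighbours are adjacent), using that $G-v$ is again maximal outerplanar with $\diam{G-v}\leq\diam{G}$, but with a strengthened hypothesis tracking eccentricities precisely enough to avoid the naive ``$+1$'' loss when $v$ is reinserted; or (b) importing the exact tree identity $\lceil\operatorname{diam}(T)/2\rceil$ from the weak dual tree $T$ of $G$ (nodes are the inner triangles, adjacency is sharing a diagonal) by placing the centre of $G$ inside the central triangle of $T$ and relating graph distances to dual-tree distances up to an additive constant of~$1$.
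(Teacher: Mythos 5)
First, a remark on the comparison itself: the paper does not prove this lemma at all --- it imports it verbatim from Dankelmann et al.\ (\cite{journal-DEGMS14}, Theorem~3), so there is no in-paper argument to measure your proposal against. The only question is whether your proposal stands on its own as a proof, and it does not: it is a plan whose central step is explicitly left open. You candidly identify the delicate point --- proving that the midpoint $m$ of a diametral geodesic satisfies $\mathrm{ecc}(m)\leq\lfloor D/2\rfloor+1$ with no additive slack --- and then defer it to an unspecified ``case analysis on how the chords of the triangulation cross the region,'' offering two further unexecuted fall-backs. Since for odd $D$ the claimed bound coincides with the universal lower bound $r\geq\lceil D/2\rceil$, the lemma forces $r=\lceil D/2\rceil$ exactly, so the entire content of the statement lives in precisely the step you have not carried out.

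Moreover, the quantitative route you sketch does not close even if the thinness claim is granted. Suppose $m$ is within distance $1$ of a vertex $w$ on the geodesic $R_a$ from $x$ to $a$. Then $\dist[G]{x,m}\leq\dist[G]{x,w}+1=\dist[G]{x,a}-\dist[G]{w,a}+1\leq D-(\dist[G]{m,a}-1)+1=\lceil D/2\rceil+2$, and the symmetric computation on the $b$-side gives the same bound; so ``thinness with constant $1$ plus the two inequalities $\dist[G]{x,a}\leq D$ and $\dist[G]{x,b}\leq D$'' yields $\lceil D/2\rceil+2$, overshooting the target by $2$ for odd $D$ (and there is no obvious way to choose between the two middle vertices $p_{(D-1)/2}$ and $p_{(D+1)/2}$ that recovers the loss). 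The missing idea is therefore not bookkeeping but a genuinely sharper structural fact about maximal outerplanar graphs --- e.g.\ an argument via the Hamiltonian outer cycle and the fan structure of the triangulation, or the weak-dual-tree correspondence you mention in fall-back (b), worked out with exact constants. As written, the proposal establishes only $r\leq\lceil D/2\rceil+2$, which is strictly weaker than the lemma.
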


In the following, we denote the minimum radius of a diameter-$D$ outerplanar
completion of a graph or connected component~$G$ by~$\minrad{G}$. If $G$ has no diameter-$D$ outerplanar
completion, then let~$\minrad{G} = \infty$.


\begin{definition}\label{def:escalate}
  Let~$G$ be a connected graph and let~$D$ be an integer.  Let~$G'$ be
  the graph resulting from~$G$ by adding an isolated vertex~$v$.
  Let~$G^*$ be a diameter-$D$ outerplanar completion of~$G'$ that
  minimizes the eccentricity of~$v$.  Then, $G^*$ is called
  \emph{escalated completion} of~$(G,D)$ with respect to~$v$ and the
  eccentricity~$\ecc(v,G^*)$, denoted by~$\minecc{G}$, is called
  \emph{escalated eccentricity} of~$(G,D)$.  Again, if such a $G^*$
  does not exist, let $\minecc{G} = \infty$.
\end{definition}

\noindent We will apply Definition~\ref{def:escalate} also to
connected components of a graph and, if clear from context, we
omit~$D$. Note that we can compute~$\minecc{G}$ by guessing an edge
between the isolated vertex~$v$ and~$G$ and running
$\texttt{OPDI-Connected}$, the algorithm for connected
graphs. Hence this can be done in $O(n^4)$~time.  Also note
that~$\minrad{G}\leq\minecc{G}\leq\minrad{G}+1$.

\begin{lemma}\label{lem:dis_small_ecc}
Given a graph $G$ with a connected component $C$ such that $\minecc{C}
< D/2$, then $G$ has a diameter-$D$ outerplanar completion if and only
if $G\setminus C$ does.
\end{lemma}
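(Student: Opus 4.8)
The plan is to prove the two implications separately; the forward direction is essentially free, whereas the reverse direction carries all the content and is where the hypothesis $\minecc{C} < D/2$ enters. For the forward direction, suppose $G$ admits a diameter-$D$ outerplanar completion. Since the property of admitting a diameter-$D$ outerplanar completion is closed under taking minors (as already recorded above, e.g.\ in the proof of Corollary~\ref{cor-7-blocks}), and since $G\setminus C$ is a minor of $G$, the graph $G\setminus C$ admits such a completion as well. Concretely, I would start from a completion $H$ of $G$, contract the connected set $V(C)$ (which is connected in $H\supseteq G$) onto a single vertex, and then contract that vertex into a neighbour in $V(G)\setminus V(C)$ (one exists because a bounded-diameter completion is connected); as contraction never elongates shortest paths, the result is a diameter-$D$ outerplanar completion of $G\setminus C$ on the correct vertex set.

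For the reverse direction, assume $G\setminus C$ has a diameter-$D$ outerplanar completion $H'$, which is necessarily connected. By adding edges I may assume $H'$ is maximal (the standing convention that the completed graph can be taken maximal; adding edges preserves outerplanarity and cannot increase the diameter). Let $x$ be a central vertex of $H'$; since $H'$ is maximal outerplanar of diameter at most $D$, Lemma~\ref{lem:MOPs ecc} bounds its radius, so $\ecc(x,H')\le\lfloor D/2\rfloor+1$. On the component side I would take an escalated completion $C^*$ of $(C,D)$ with respect to its external vertex $v$, which is a diameter-$D$ outerplanar completion of $C\cup\{v\}$ with $\ecc(v,C^*)=\minecc{C}$. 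The completion of $G$ is then obtained by gluing $C^*$ onto $H'$, identifying $v$ with $x$. Gluing two outerplanar graphs along a single vertex yields an outerplanar graph $H''$; moreover $H''$ has exactly the vertex set $V(G)$ and contains every edge of $G$, so it is a valid completion, and $x$ is a cut vertex separating $V(C)$ from $V(G)\setminus V(C)$.

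It remains to bound $\diam{H''}$. Because $x$ is a cut vertex, every shortest path between the two sides passes through $x$, so distances internal to $H'$ and distances internal to $C^*$ are unchanged and hence at most $D$. The only new type of pair is $a$ on the $H'$ side and $b\in V(C)$, for which $\dist[H'']{a,b}=\dist[H']{a,x}+\dist[C^*]{v,b}\le \ecc(x,H')+\ecc(v,C^*)\le(\lfloor D/2\rfloor+1)+\minecc{C}$. The hypothesis $\minecc{C}<D/2$ gives, for the integer $\minecc{C}$, the bound $\minecc{C}\le\lfloor (D-1)/2\rfloor$, and a one-line parity check shows $(\lfloor D/2\rfloor+1)+\lfloor (D-1)/2\rfloor=D$ for both parities of $D$. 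Hence all cross-distances are at most $D$, so $\diam{H''}\le D$ and $H''$ is the desired diameter-$D$ outerplanar completion of $G$.

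The main obstacle is this last estimate: the construction works only because Lemma~\ref{lem:MOPs ecc} controls the radius of a maximal outerplanar completion, allowing $C$ to be attached at a vertex of eccentricity at most $\lfloor D/2\rfloor+1$, and the arithmetic then closes \emph{exactly} at $D$, which is precisely why the strict inequality $\minecc{C}<D/2$ cannot be relaxed. A minor degenerate case to dispatch is $V(G)\setminus V(C)=\emptyset$, where no vertex $x$ is available to glue onto: there $G=C$, and a completion is supplied directly by $C^*$ restricted to $C$, which exists since $\minecc{C}<\infty$.
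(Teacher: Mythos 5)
Your proof is correct and follows essentially the same route as the paper's: take a (maximal) diameter-$D$ completion of $G\setminus C$, use Lemma~\ref{lem:MOPs ecc} to find a vertex of eccentricity at most $\lfloor D/2\rfloor+1$, and glue the escalated completion of $C$ there, with the strict inequality $\minecc{C}<D/2$ making the cross-distance bound close exactly at $D$. You merely spell out the forward (minor-closure) direction and the degenerate case $G=C$, which the paper leaves implicit.
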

\begin{proof}
In a diameter-$D$ outerplanar completion of $G\setminus C$ there is a
vertex $v$ with eccentricity at most $\lfloor D/2\rfloor+1$, by
Lemma~\ref{lem:MOPs ecc}.  In this completion, adding the completion
of $C+v$ achieving $\minecc{C} < D/2$, yields a diameter-$D$
outerplanar completion of $G$.
\end{proof}

\newcommand{\C}{\ensuremath{\mathcal{C}}}
\newcommand{\Cmax}{\ensuremath{\C_{\text{max}}}}
\newcommand{\Ctwo}{\ensuremath{\C_2}}

\begin{obs}\label{obs:escalated cutset}
  Let~$C$ be a connected component of~$G$, let~$G'$ be an outerplanar
  completion of~$G$ and let~$C'$ be a connected component of~$G'
  \setminus C$.
  Then, there is a vertex~$v\in C$ at distance at least~$\minecc{C}$ to each vertex of~$C'$ in~$G'$.
\end{obs}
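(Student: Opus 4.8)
The plan is to realize, inside the given completion~$G'$, an \emph{escalated completion} of~$(C,D)$ in the sense of Definition~\ref{def:escalate}, and then to exploit the minimality built into $\minecc{C}$. Concretely, I would contract the connected set~$C'$ to a single vertex~$x$; since $C'$ is a connected component of $G'\setminus C$ and $G'$ is connected (its diameter being at most $D<\infty$), the vertex~$x$ is adjacent only to vertices of~$C$. As minors of outerplanar graphs are outerplanar and contracting a connected set never increases distances, this operation preserves both outerplanarity and the bound $\diam{\cdot}\le D$.

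The one subtlety is that $G'\setminus C$ may have several components besides~$C'$, so after contracting~$C'$ the graph still carries vertices outside $V(C)\cup\{x\}$. Each such remaining component is adjacent to~$C$ (again because $G'$ is connected), so I would absorb them into~$C$ by repeatedly contracting an edge joining such a component to~$C$. Call the resulting graph~$G^*$: its vertex set is exactly $V(C)\cup\{x\}$, it is a minor of~$G'$ (hence outerplanar and of diameter at most~$D$), it contains~$C$ as a subgraph, and~$x$ is a vertex adjacent to~$C$. Thus $G^*$ is precisely a diameter-$D$ outerplanar completion of~$C$ together with one added vertex~$x$, so by the definition of~$\minecc{C}$ we obtain $\ecc(x,G^*)\ge\minecc{C}$.

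It then remains to transfer this bound back to~$G'$. Let $v\in V(C)$ realize $\ecc(x,G^*)$, that is $\dist[G^*]{x,v}=\ecc(x,G^*)\ge\minecc{C}$ (the maximizer lies in~$C$ because $\ecc(x,G^*)\ge\minecc{C}\ge 1>0$). Since $G^*$ is obtained from~$G'$ by contracting $C'$ onto~$x$ together with some further connected sets, and contractions do not increase distances, every vertex $c'\in C'$ satisfies $\dist[G']{v,c'}\ge\dist[G^*]{v,x}\ge\minecc{C}$, which is exactly the claim. I expect the main obstacle to be the middle step: one must ensure that disposing of the \emph{other} components of $G'\setminus C$ neither destroys the diameter bound nor spoils the property of being a completion of~$C$ alone. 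Contracting those components \emph{into}~$C$, rather than deleting them, is what keeps outerplanarity and the diameter at most~$D$ intact while leaving~$C$ as a subgraph; a naive deletion could lengthen shortest paths through~$C$ and push the diameter above~$D$.
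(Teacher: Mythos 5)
Your proof is correct and follows essentially the same route as the paper's: the paper likewise contracts $C'$ onto a single new vertex and contracts the remaining components of $G'\setminus C$ onto vertices of $C$, obtaining a diameter-$D$ outerplanar completion of $C$ plus one vertex, to which the minimality of $\minecc{C}$ applies, and then transfers the bound back to $G'$ via the fact that contractions do not increase distances. Your remark about why the other components must be contracted into $C$ rather than deleted is exactly the point the paper handles (more tersely) with the same contraction.
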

\begin{proof}
  Let the result of contracting all vertices in~$G' \setminus (C\cup
  C')$ onto vertices in~$C$ and contracting~$C'$ onto a single
  vertex~$u$ be~$G''$.  Then, $G''$ is a subgraph of an outerplanar
  completion of the result of adding~$u$ as isolated vertex
  to~$G'[C]$.  By definition, $\ecc(u,G'')\geq\minecc{C}$, implying
  that there is a vertex~$v\in C$ at distance at least~$\minecc{C}$
  to~$u$ in~$G''$.  Thus, $v$ is at distance at least~$\minecc{C}$ to
  each vertex of~$C'$ in~$G'$.
\end{proof}

Observation~\ref{obs:escalated cutset} immediately implies that any
cutset separating two connected components~$C_1$ and~$C_2$ of~$G$
in~$G'$ has distance at least~$\minecc{C_1}$ and~$\minecc{C_2}$ to
some vertex in~$C_1$ and~$C_2$, respectively. Thus, these two vertices
are at distance at least~$\minecc{C_1}+\minecc{C_2}$ in~$G'$.
\begin{corollary}\label{cor:connect big}
  Let $C_1$ and~$C_2$ be connected components of~$G$ such
  that~$\minecc{C_1}+\minecc{C_2}>D$ and let~$G'$ be a diameter-$D$
  outerplanar completion of~$G$.  Then, $C_1$ and~$C_2$ are adjacent
  in~$G'$, i.e. $G'$ has an edge with an end in $C_1$ and an end in
  $C_2$.
\end{corollary}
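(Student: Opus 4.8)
The plan is to prove Corollary~\ref{cor:connect big} as a direct consequence of the distance bound established immediately before its statement. Suppose, towards a contradiction, that $G'$ is a diameter-$D$ outerplanar completion of $G$ in which $C_1$ and $C_2$ lie in distinct connected components of $G'\setminus(C_1\cup C_2)$-free sense; more precisely, assume there is no edge of $G'$ with one end in $C_1$ and the other in $C_2$. Since $G'$ is connected (any diameter-$D$ graph is connected), there must be a shortest path in $G'$ joining a vertex of $C_1$ to a vertex of $C_2$, and this path necessarily passes through a cutset separating $C_1$ from $C_2$ in $G'$. My first step would be to formalize this: the absence of a direct $C_1$--$C_2$ edge means $C_2$ is contained in a single connected component $C_2'$ of $G'\setminus C_1$, and symmetrically $C_1$ lies in a component $C_1'$ of $G'\setminus C_2$.

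Next I would invoke Observation~\ref{obs:escalated cutset} twice. Applying it with the component $C_1$ of $G$ and the component $C_2'$ of $G'\setminus C_1$ containing $C_2$, I obtain a vertex $x\in C_1$ at distance at least $\minecc{C_1}$ in $G'$ to every vertex of $C_2'$, and in particular to every vertex of $C_2\subseteq C_2'$. Symmetrically, applying the observation with $C_2$ and the component $C_1'\supseteq C_1$ of $G'\setminus C_2$ yields a vertex $y\in C_2$ at distance at least $\minecc{C_2}$ to every vertex of $C_1$, hence in particular to $x$. The crux is then to combine these two lower bounds into a single estimate on $\dist[G']{x,y}$: since any shortest $x$--$y$ path in $G'$ must cross from $C_1$ into the rest of the graph and then reach $C_2$, and $x$ is far from $C_2$ while $y$ is far from $C_1$, one concludes $\dist[G']{x,y}\ge \minecc{C_1}+\minecc{C_2}$. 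This is exactly the remark stated just above the corollary, so I would cite it directly rather than re-deriving it.

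Finally, the hypothesis $\minecc{C_1}+\minecc{C_2}>D$ gives $\dist[G']{x,y}>D$, contradicting the assumption that $G'$ has diameter at most $D$. Therefore no such completion without a $C_1$--$C_2$ edge can exist, which establishes that $C_1$ and $C_2$ must be adjacent in $G'$.

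The step I expect to require the most care is the combination of the two separate distance bounds into the additive bound $\minecc{C_1}+\minecc{C_2}$. Observation~\ref{obs:escalated cutset} only guarantees, for each component, the existence of \emph{some} far vertex, and one must check that the two witnesses $x$ and $y$ can be used simultaneously: the point is that the distance from $x\in C_1$ to $y\in C_2$ is bounded below by the distance from $x$ to the separating cutset plus the distance from that cutset to $y$, and the two applications of the observation supply precisely these two summands. Since the paragraph preceding the corollary already packages this reasoning, the remaining work is essentially to state the contradiction cleanly; I do not anticipate any genuinely hard technical obstacle beyond being careful that the chosen witnesses pertain to the correct separating components.
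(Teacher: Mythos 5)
Your argument is correct and is essentially the paper's own: the corollary is derived from the remark immediately preceding it, which applies Observation~\ref{obs:escalated cutset} once to each of $C_1$ and $C_2$ and then sums the two lower bounds across a cutset separating them. The only point worth making explicit is that the summation needs the full strength of the observation --- the witness $x\in C_1$ is far from \emph{all} of the component $C_2'$ of $G'\setminus C_1$ containing $C_2$ (hence from the separating set $N_{G'}(C_2)\subseteq C_2'$), not merely from $C_2$ itself --- which is precisely the care you flag in your final paragraph.
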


Corollary~\ref{cor:connect big} allows us to conclude that all
connected components~$C$ with~$\minecc{C}> D/2$ have to be pairwise
adjacent in any diameter-$D$ outerplanar completion of~$G$. Thus,
there cannot be more than three such components.
\begin{lemma}\label{lem:dis_few_big_ecc}
An outerplanar graph $G$ with more than 3 connected components $C$
such that $\minecc{C}> D/2$ has no diameter-$D$ outerplanar
completion.
\\
On the other hand, if $G$ has no connected component $C$
such that $\minecc{C}> D/2$, then $G$ necessarily has a diameter-$D$
outerplanar completion.
\end{lemma}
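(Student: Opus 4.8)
The plan is to treat the two implications separately: the first by forbidding a $K_4$-minor, the second by an explicit gluing construction.

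For the first statement I would argue by contradiction. Suppose $G$ has four connected components $C_1,C_2,C_3,C_4$ with $\minecc{C_i}>D/2$, and suppose $G$ nevertheless admits a diameter-$D$ outerplanar completion $G'$. For every pair $i\ne j$ we have $\minecc{C_i}+\minecc{C_j}>D/2+D/2=D$, so Corollary~\ref{cor:connect big} forces $C_i$ and $C_j$ to be adjacent in $G'$. Hence the four (already connected) components are pairwise joined by an edge of $G'$, and contracting each $C_i$ onto a single vertex exhibits a $K_4$-minor in $G'$, contradicting its outerplanarity. This is precisely the informal remark preceding the lemma, made rigorous through Corollary~\ref{cor:connect big}.

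For the second statement, assume every component $C$ of $G$ satisfies $\minecc{C}\le D/2$; in particular each $\minecc{C}$ is finite, so by Definition~\ref{def:escalate} an escalated completion exists for every component. Let $C_1,\dots,C_m$ be the components, and for each $i$ fix an escalated completion $C_i^*$ with distinguished vertex $v_i$, so that $C_i^*$ is a diameter-$D$ outerplanar completion containing $C_i$ with $\ecc(v_i,C_i^*)=\minecc{C_i}\le D/2$. I would then build $G^*$ by identifying all the $v_i$ into a single vertex $v$, i.e.\ gluing the $C_i^*$ along $v$. Since each $C_i^*$ is outerplanar and they pairwise meet only in the cut vertex $v$, the graph $G^*$ is again outerplanar (outerplanarity is determined blockwise, and gluing at a cut vertex does not merge blocks); moreover $G^*$ contains the disjoint union $C_1\cup\dots\cup C_m=G$, so it is a completion of $G$. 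It remains to bound the diameter: two vertices lying in the same $C_i^*$ are at distance at most $\diam{C_i^*}\le D$, while two vertices lying in distinct $C_i^*,C_j^*$ are joined only through $v$ and hence at distance at most $\ecc(v,C_i^*)+\ecc(v,C_j^*)=\minecc{C_i}+\minecc{C_j}\le D/2+D/2=D$. Therefore $G^*$ is a diameter-$D$ outerplanar completion of $G$.

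The first implication is immediate once Corollary~\ref{cor:connect big} is available, so the real work is in the second. The step I expect to require the most care is recognizing that the \emph{escalated} completions (rather than arbitrary diameter-$D$ completions of each component) are the right objects to glue: the distinguished vertex of each $C_i^*$ has eccentricity \emph{exactly} $\minecc{C_i}\le D/2$, and it is this bound that keeps every cross-component distance at most $D$ after identification. This single construction handles both parities of $D$ uniformly, including the delicate even case in which some components have $\minecc{C}=D/2$ exactly and therefore cannot be removed by Lemma~\ref{lem:dis_small_ecc}. The only routine points left to verify are that gluing outerplanar graphs at one common vertex preserves outerplanarity and that such gluing leaves intra-block distances unchanged.
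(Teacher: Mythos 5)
Your argument follows the paper's proof almost exactly: the first statement via Corollary~\ref{cor:connect big} plus a $K_4$-minor among four pairwise-adjacent contracted components, and the second via gluing escalated completions at a common hub so that every cross-component distance is at most $D/2+D/2$. The one technical slip is in the gluing step: each distinguished vertex $v_i$ of an escalated completion is an \emph{added} isolated vertex, not a vertex of $C_i$, so after identifying all the $v_i$ your graph $G^*$ has vertex set $V(G)\cup\{v\}$ and is therefore not literally a completion of $G$ (which by definition must keep the vertex set $V$). The paper avoids this by anchoring the gluing at a genuine vertex: it picks one component $C$ and a vertex $v\in C$ with $\ecc(v,C)=\minrad{C}\le\minecc{C}\le D/2$, completes $C$ to realize that radius, and only uses escalated completions (rooted at this same $v$) for the remaining components. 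Your version is repairable without this — contract the extra hub $v$ onto any neighbor, which preserves outerplanarity, does not increase distances, and restores the vertex set — but as written the final sentence ``so it is a completion of $G$'' is not justified.
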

\begin{proof}
The first statement comes from the above comments. The proof of the
second statement is similar to the one of
Lemma~\ref{lem:dis_small_ecc}. For some component $C$ of $G$, let $v$
be such that $ecc(v,C)= \minrad{C} \le \minecc{C}\le D/2$, and
complete $C$ in order to achieve this value. Then for the other
components $C'$ consider their escalated completion with respect to
$v$. As $\minecc{C'}\le D/2$ this graph has diameter at most $D$.
\end{proof}

Hence, assume $G$ has $p=1,2$, or $3$ connected components $C$ such
that $\minecc{C}> D/2$. By Corollary~\ref{cor:connect big} these $p$
components are pairwise adjacent in the desired completion. Note that
with~$O(n^{2p-2})$ tries, we can guess $p-1$ edges connecting all such
components into one larger component.  Thus, in the following, we
assume that there is only one component~$C$ with~$\minecc{C}> D/2$,
denoted $\Cmax$.

\begin{lemma}\label{lem:dis_few_big_ecc_2}
Consider an outerplanar graph $G$ with exactly one connected
components $\Cmax$ such that $D/2 < \minecc{\Cmax} < \infty$. If
$\minrad{\Cmax}\le D/2$, then $G$ necessarily has a diameter-$D$
outerplanar completion.
\end{lemma}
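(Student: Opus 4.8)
The plan is to use $\Cmax$ itself as a central ``hub'' through which every other component is routed, gluing all pieces along a single shared vertex. Note first that, since $\Cmax$ is the only component with escalated eccentricity exceeding $D/2$, every other connected component $C'$ of $G$ satisfies $\minecc{C'}\le D/2<\infty$. The reason a direct appeal to Lemma~\ref{lem:dis_few_big_ecc} fails is precisely that $\Cmax$ cannot be attached as a ``leaf'' through its escalated completion, as $\minecc{\Cmax}>D/2$; the hypothesis $\minrad{\Cmax}\le D/2$ is exactly what allows $\Cmax$ to take on the role of the hub instead.

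First I would invoke the hypothesis $\minrad{\Cmax}\le D/2$: by definition of $\minrad{\cdot}$ there is a diameter-$D$ outerplanar completion $H$ of $\Cmax$ together with a vertex $v\in V(\Cmax)$ such that $\ecc(v,H)=\minrad{\Cmax}\le D/2$. This vertex $v$ will be the unique vertex shared by all the pieces of the final completion. Next, for each of the remaining components $C'$, I would take its escalated completion with respect to $v$ in the sense of Definition~\ref{def:escalate}, identifying the added isolated vertex with $v$; this produces a diameter-$D$ outerplanar completion of $G[C'\cup\{v\}]$ in which $\ecc(v)=\minecc{C'}\le D/2$. Finally I would glue $H$ and all these completions along their common vertex $v$, obtaining a graph $G'$ in which $v$ is a cut vertex separating the pieces.

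It then remains to check the two required properties. Outerplanarity of $G'$ follows from the standard fact that a one-vertex amalgam of outerplanar graphs is again outerplanar: each piece admits an outerplanar embedding with all of its vertices, and in particular $v$, on the outer boundary, and these embeddings may be placed consecutively in the rotation around $v$ without creating crossings or interior vertices. For the diameter bound, any two vertices lying in the same piece are at distance at most $D$ by construction, while for two vertices $x,y$ in distinct pieces every path between them passes through the cut vertex $v$, so $\dist{x,y}=\dist{x,v}+\dist{v,y}$; each summand is bounded by the corresponding eccentricity of $v$ (namely $\minrad{\Cmax}$ or some $\minecc{C'}$), all of which are at most $D/2$. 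Since distances are integers, each summand is in fact at most $\lfloor D/2\rfloor$, so their sum is at most $2\lfloor D/2\rfloor\le D$, which settles the case of both even and odd $D$.

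I do not expect a genuine obstacle here: the argument is essentially the ``gluing at a central vertex'' strategy already employed in the proofs of Lemmas~\ref{lem:dis_small_ecc} and~\ref{lem:dis_few_big_ecc}, the only new ingredient being that $\minrad{\Cmax}\le D/2$ lets the large component serve as the hub. The one point deserving a little care is the integrality (parity) bookkeeping in the final distance estimate together with the outerplanarity of the amalgam, both of which are routine.
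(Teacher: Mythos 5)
Your proposal is correct and follows essentially the same route as the paper, which proves this lemma by the same ``glue at a central vertex'' argument as the second part of Lemma~\ref{lem:dis_few_big_ecc}: pick $v\in\Cmax$ realizing $\minrad{\Cmax}\le D/2$, attach the escalated completions of the remaining components at $v$, and bound all cross-distances through $v$. The extra care you take with integrality and with outerplanarity of the one-vertex amalgam is consistent with (and slightly more explicit than) what the paper writes.
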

\begin{proof}Same proof as Lemma~\ref{lem:dis_few_big_ecc}.\end{proof}


Let us now distinguish two cases according to the parity of $D$.

\begin{lemma}\label{lem:dis_odd_easy}
For odd $D$, if an outerplanar graph $G$ has at most one component
$\Cmax$ such that $D/2 < \minecc{\Cmax} < \infty$, then $G$ has a
diameter-$D$ outerplanar completion.
\end{lemma}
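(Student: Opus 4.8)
The plan is to reduce to the previously established lemmas in all but one case, and to handle the remaining case by exploiting the parity of $D$ through Lemma~\ref{lem:MOPs ecc}. First I would dispose of the easy situations. If $G$ has no component $C$ with $\minecc{C} > D/2$, then the second statement of Lemma~\ref{lem:dis_few_big_ecc} already produces a diameter-$D$ outerplanar completion. Otherwise, by hypothesis there is exactly one component $\Cmax$ with $D/2 < \minecc{\Cmax} < \infty$, and every other component $C'$ satisfies $\minecc{C'} \le D/2$ (a component with infinite escalated eccentricity admits no diameter-$D$ completion and does not arise in the regime this lemma is applied to). If moreover $\minrad{\Cmax} \le D/2$, then Lemma~\ref{lem:dis_few_big_ecc_2} finishes the argument. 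Hence the only case requiring a new idea is $\minrad{\Cmax} > D/2$.

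In this case I would first pin down $\minrad{\Cmax}$ exactly, and this is where the oddness of $D$ enters. Since $\minrad{\Cmax}$ is an integer and $D$ is odd, the strict inequality $\minrad{\Cmax} > D/2$ forces $\minrad{\Cmax} \ge (D+1)/2 = \PartIntInf{D/2}+1$. On the other hand, choosing any maximal diameter-$D$ outerplanar completion of $\Cmax$ (one exists as $\minrad{\Cmax} < \infty$), Lemma~\ref{lem:MOPs ecc} bounds its radius by $\PartIntInf{D/2}+1$, so $\minrad{\Cmax} \le \PartIntInf{D/2}+1$. Therefore $\minrad{\Cmax} = \PartIntInf{D/2}+1 = (D+1)/2$. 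Symmetrically, for each remaining component $\minecc{C'}$ is an integer bounded by the half-integer $D/2$, so in fact $\minecc{C'} \le (D-1)/2 = \PartIntInf{D/2}$.

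I would then build the completion by a single $1$-sum. Fix a maximal diameter-$D$ outerplanar completion $H$ of $\Cmax$ realizing radius $(D+1)/2$, and let $v$ be a center of $H$, so that $\ecc(v,H) = (D+1)/2$. For each remaining component $C'$, take an escalated completion of $(C',D)$ realizing $\minecc{C'}$, identify its added vertex with $v$, and glue all these pieces together with $H$ along the common vertex $v$. The resulting graph $G'$ is a completion of $G$; it is outerplanar because a $1$-sum of outerplanar graphs along a single vertex is outerplanar (no $K_4$- or $K_{2,3}$-minor can span two distinct blocks). As $v$ is a cut vertex of $G'$, distances to $v$ are inherited from the individual pieces: every vertex of $H$ lies within $(D+1)/2$ of $v$, and every vertex of a small component lies within $\minecc{C'} \le (D-1)/2$ of $v$. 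Checking the four types of vertex pairs then shows $\diam{G'} \le D$: two vertices inside $H$, or inside a single $C'$, are within the diameter of that piece, hence at most $D$; a vertex of $H$ and a vertex of some $C'$ are at distance at most $(D+1)/2 + (D-1)/2 = D$ through $v$; and two vertices in distinct small components are at distance at most $(D-1)/2 + (D-1)/2 = D-1$ through $v$.

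The crux, and the only place where the oddness of $D$ is used, is the exact determination $\minrad{\Cmax} = (D+1)/2$ combined with the sharpened bound $\minecc{C'} \le (D-1)/2$; it is precisely this parity-induced asymmetry that makes the budget $(D+1)/2 + (D-1)/2$ land exactly on $D$. For even $D$ both bounds degrade to $D/2$, so the same gluing leaves no slack, which is why the even case will demand a separate and more delicate argument. I therefore expect the routine distance bookkeeping to be straightforward, and the main conceptual step to be recognizing that Lemma~\ref{lem:MOPs ecc} pins the radius of $\Cmax$ to a single value under the parity assumption.
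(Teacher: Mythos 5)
Your proof is correct, but it is longer than it needs to be: the paper disposes of this lemma in two lines by observing that the single parity fact you isolate already feeds directly into Lemma~\ref{lem:dis_small_ecc}. Concretely, for odd $D$ every component $C'\neq\Cmax$ has $\minecc{C'}\le D/2$ with $\minecc{C'}$ an integer and $D/2$ a half-integer, hence $\minecc{C'}<D/2$ \emph{strictly}; Lemma~\ref{lem:dis_small_ecc} then removes these components one by one (it is an ``if and only if''), leaving $\Cmax$ alone, which has a diameter-$D$ outerplanar completion because $\minecc{\Cmax}<\infty$. Your Case~3 essentially re-proves Lemma~\ref{lem:dis_small_ecc} from scratch: the exact determination $\minrad{\Cmax}=(D+1)/2$ via Lemma~\ref{lem:MOPs ecc}, the choice of a center $v$, and the $1$-sum of escalated completions at $v$ is precisely the construction inside the proof of that lemma (which uses the radius bound $\PartIntInf{D/2}+1$ and $\minecc{C'}<D/2$ to get $\PartIntInf{D/2}+1+\minecc{C'}\le D$), and your Cases~1 and~2 are then superfluous. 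The argument you give is sound --- the gluing is outerplanar, $v$ is a cut vertex so distances within each piece are preserved, and the four distance checks go through --- so this is a matter of economy rather than correctness; the one thing worth internalizing is that the strict inequality $\minecc{C'}<D/2$ is exactly the hypothesis of Lemma~\ref{lem:dis_small_ecc}, so no new construction is needed once parity has been invoked.
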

\begin{proof}
Indeed, by Lemma~\ref{lem:dis_small_ecc} it is sufficient to consider
the component $\Cmax$ alone. As $ \minecc{\Cmax} < \infty$, $\Cmax$
has a diameter-$D$ outerplanar completion, and so does $G$.
\end{proof}

The case where $D$ is even is more technical.

\begin{lemma}\label{lem:dis_even_easy_2}
For even $D$, Let $p$ and $q$ respectively denote the number of
connected components $C$ such that $D/2 < \minecc{C} < \infty$ and $\minecc{C}
= D/2$, of an outerplanar graph $G$.  If $p\ge 2$ and $p+q \ge 5$, then
$G$ has no diameter-$D$ outerplanar completion.
\end{lemma}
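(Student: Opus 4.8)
The plan is to assume, for contradiction, that $G$ has a diameter-$D$ outerplanar completion $G'$ and to expose a forbidden minor inside it. First I would exploit the parity of $D$: since $D$ is even and escalated eccentricities are integers, each of the $p$ components counted by the lemma satisfies $\minecc{C}\ge D/2+1$, while each of the $q$ components satisfies $\minecc{C}=D/2$. Consequently, any two of the $p$ ``large'' components $C_1,C_2$ satisfy $\minecc{C_1}+\minecc{C_2}\ge D+2>D$, and any large component $C_1$ together with any ``medium'' component $C_2$ satisfies $\minecc{C_1}+\minecc{C_2}\ge D+1>D$. By Corollary~\ref{cor:connect big}, in $G'$ every pair of large components is joined by an edge, and every large component is joined by an edge to every medium component. (All $p+q$ components considered have finite escalated eccentricity by definition, so the corollary indeed applies.)

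Next I would pass to a minor. Each connected component of $G$ stays connected in $G'$ since $G\subseteq G'$, so contracting each such component to a single vertex is a legitimate minor operation and yields an outerplanar graph $\Gamma$. Writing $a_1,\dots,a_p$ for the vertices coming from the large components and $b_1,\dots,b_q$ for those coming from the medium ones, the adjacencies established above say precisely that $\{a_1,\dots,a_p\}$ induces a clique in $\Gamma$ and that each $a_i$ is adjacent to each $b_j$.

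It then remains to locate a $K_4$- or a $K_{2,3}$-subgraph in $\Gamma$ under the hypotheses $p\ge 2$ and $p+q\ge 5$, contradicting the outerplanarity of $\Gamma$ (hence of $G'$). If $p\ge 4$, the vertices $a_1,a_2,a_3,a_4$ already form $K_4$. If $p=3$, then $p+q\ge 5$ forces $q\ge 2$, and the two sides $\{a_1,a_2\}$ and $\{a_3,b_1,b_2\}$ form a $K_{2,3}$, because each of $a_3,b_1,b_2$ is adjacent to both $a_1$ and $a_2$. Finally, if $p=2$, then $q\ge 3$, and $\{a_1,a_2\}$ against $\{b_1,b_2,b_3\}$ is again a $K_{2,3}$. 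In every case $\Gamma$ contains a forbidden minor, which is the desired contradiction.

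The only genuinely non-routine step is the middle one: recognizing that a medium component, despite not being forced to attach to another medium component, is nonetheless forced to attach to \emph{all} of the large components, and that this one-sided completeness between the $a$-side and the $b$-side is exactly what manufactures a $K_{2,3}$ as soon as $p+q$ is large enough. Everything else --- the parity bookkeeping, the legitimacy of the contraction, and the three-case minor search --- is mechanical once Corollary~\ref{cor:connect big} is in hand.
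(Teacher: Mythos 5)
Your proposal is correct and follows essentially the same route as the paper: invoke Corollary~\ref{cor:connect big} to force adjacency between every pair of ``large'' components and between every large and every medium component, then contract components and exhibit a $K_4$- or $K_{2,3}$-minor contradicting outerplanarity. The only cosmetic differences are that your parity/integrality remark is not actually needed (the strict inequalities already give the sums $>D$) and that for $p=3$ the paper points to a $K_4$-minor where you use a $K_{2,3}$ --- both work.
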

\begin{proof}
By Corollary~\ref{cor:connect big}, in a diameter-$D$ outerplanar
completion $G'$ of $G$ the $p$ components are pairwise adjacent, and
any of the $q$ components is adjacent to every of the $p$ ones.  For
$p=2$, as $q\ge 3$, this would induce a $K_{2,3}$-minor in $G'$, a
contradition. For the other cases, this would induce a $K_{4}$-minor
in $G'$, a contradition.
\end{proof}

\begin{lemma}\label{lem:dis_even_easy}
  For even $D$, if an outerplanar graph $G$ has one component, denoted
  $\Cmax$, such that $D/2 < \minrad{\Cmax} < \infty$, and at least 4 other
  components $C$ such that $D/2 \le \minecc{C} < \infty$, then $G$ has no
  diameter-$D$ outerplanar completion.
\end{lemma}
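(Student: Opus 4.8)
The plan is to argue by contradiction, producing either a forbidden minor ($K_4$ or $K_{2,3}$) or a pair of vertices at distance exceeding $D$. First I would normalize the instance. If any of the four extra components $C$ had $\minecc{C}>D/2$, then together with $\Cmax$ we would have at least two components of escalated eccentricity strictly above $D/2$ and at least five of escalated eccentricity at least $D/2$, so Lemma~\ref{lem:dis_even_easy_2} already applies; hence I may assume $\minecc{C_i}=D/2$ for the four components $C_1,\dots,C_4$. Since having a diameter-$D$ outerplanar completion is minor-closed (in particular closed under vertex deletion), it suffices to treat the induced subgraph on $\Cmax\cup C_1\cup\dots\cup C_4$, and I suppose for contradiction that it has a completion $G^\dagger$, which I take to be maximal. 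Applying Lemma~\ref{lem:MOPs ecc} to any diameter-$D$ completion of $\Cmax$ alone gives $\minrad{\Cmax}\le D/2+1$, so in fact $\minrad{\Cmax}=D/2+1$. Because $\minecc{\Cmax}\ge\minrad{\Cmax}>D/2$ while each $\minecc{C_i}=D/2$, we have $\minecc{\Cmax}+\minecc{C_i}\ge D+1>D$, so Corollary~\ref{cor:connect big} forces $\Cmax$ to be adjacent to each $C_i$ in $G^\dagger$.

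Next I would locate a centre and the components' far vertices. By Lemma~\ref{lem:MOPs ecc}, $G^\dagger$ has a vertex $c$ with $\ecc(c,G^\dagger)\le D/2+1$. Since $\Cmax$ is adjacent to every $C_j$, the graph $G^\dagger\setminus C_i$ is connected, so Observation~\ref{obs:escalated cutset} yields a vertex $v_i\in C_i$ with $\dist[G^\dagger]{v_i,u}\ge\minecc{C_i}=D/2$ for \emph{every} $u\notin C_i$. A short shortest-path computation then localizes the attachment of each $C_i$ (with $c\notin C_i$) near $c$: take a shortest $c$--$v_i$ path and let $xy$ be the edge on which it first enters $C_i$ (so $x\notin C_i$, $y\in C_i$); as subpaths of shortest paths are shortest and $\dist[G^\dagger]{x,v_i}\ge D/2$, we get $D/2+1\ge\dist[G^\dagger]{c,v_i}=\dist[G^\dagger]{c,x}+1+\dist[G^\dagger]{y,v_i}\ge\dist[G^\dagger]{c,x}+1+(D/2-1)$, whence $\dist[G^\dagger]{c,x}\le 1$, i.e. $x\in N[c]$. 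Thus each of the (at least three) components not containing $c$ has a neighbour inside the connected set $N[c]$.

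The endgame is to extract a forbidden minor. Both $\Cmax$ and the closed neighbourhood $N[c]$ are connected, and each is adjacent to the (at least three) components avoiding $c$. If $N[c]$ and $\Cmax$ can be taken vertex-disjoint (and disjoint from three chosen components), then contracting $\Cmax$, $N[c]$ and the three components produces a $K_{2,3}$- (or, if two of those components are adjacent, a $K_4$-) minor, contradicting outerplanarity; this disjointness is immediate when $c\notin\Cmax$. The subtle configuration is $c\in\Cmax$, where $N[c]$ overlaps the hub. Here the radius hypothesis re-enters decisively: $\minrad{\Cmax}=D/2+1$ forces $\ecc(c,G^\dagger)=D/2+1$ and hence a vertex $u_c\in\Cmax$ with $\dist[G^\dagger]{c,u_c}=D/2+1$. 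Combined with $\dist[G^\dagger]{v_i,\Cmax}\ge D/2$ and the fact that every $v_i$--$u_c$ path must leave $C_i$ through an attachment near $c$, the deep vertex $v_i$ and the far vertex $u_c$ are pushed to distance at least $D+1$ unless the attachments of the $C_i$ are spread along $\Cmax$; but then splitting the connected set $\Cmax$ into two arcs (one containing $u_c$) again supplies two disjoint reachers of three components and a $K_{2,3}$-minor.

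The step I expect to be the main obstacle is precisely this last one: the bookkeeping of where the centre $c$ and the entry vertices $x_i$ lie relative to $\Cmax$ and to the other components, and in particular the case $c\in\Cmax$, where one must use the ``width'' of $\Cmax$ guaranteed by $\minrad{\Cmax}>D/2$ to produce a second branch set disjoint from the first. The hypothesis ``at least four'' is used exactly so that, after setting aside the at most one component that may contain $c$, three components remain to play the three spokes of the $K_{2,3}$; the rest of the work is to guarantee that the two branch sets feeding those spokes can be chosen connected and disjoint.
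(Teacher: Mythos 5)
Your setup is sound and even a little different in flavor from the paper's: you reduce to five components, use Corollary~\ref{cor:connect big} to attach every $C_i$ to $\Cmax$, and then use a radius-center $c$ of the completion (via Lemma~\ref{lem:MOPs ecc}) together with Observation~\ref{obs:escalated cutset} to show that every component not containing $c$ has a neighbour inside the ball $N[c]$; the distance computation forcing $\dist[G^\dagger]{c,x}\le 1$ is correct. The paper instead first proves that every two of the $C_i$ are adjacent or share a common neighbour, deduces from a $K_4$-argument that a \emph{single} vertex $u$ is adjacent to three of them, and then spends most of its effort showing $u$ can be taken inside $\Cmax$.

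The genuine gap is in your endgame, and it is not just the case $c\in\Cmax$ that you flag. To extract a $K_{2,3}$-minor you need a second connected branch set, disjoint from $\Cmax$ and from $C_1,C_2,C_3$, that is adjacent to all three of them; you propose (a subset of) $N[c]$. But your argument only shows that each $C_i$ has \emph{some} neighbour $x_i\in N[c]\setminus C_i$, and nothing prevents $x_i$ from lying in $\Cmax$ itself, or in another component $C_j$ --- this can happen even when $c\in C_4$, since $C_4$ is adjacent to $\Cmax$ and $c$ may be exactly the vertex realizing that adjacency. If, say, all three $x_i$ lie in $\Cmax$, then $N[c]\setminus\Cmax$ need not touch any $C_i$ and no second branch set is exhibited; the resulting picture (a ``fan'' with $\Cmax$ as hub) contains no forbidden minor, so the contradiction must come from a distance argument, not from disjointness alone. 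Hence your claim that ``disjointness is immediate when $c\notin\Cmax$'' is false, and the case $c\in\Cmax$ is only gestured at (``splitting $\Cmax$ into two arcs'' does not establish that both arcs are adjacent to all three components). This is precisely where the paper does its real work: it first pins down a single vertex $u\in\Cmax$ adjacent to three components (Claim~\ref{cl:3 neighbors}, itself an intricate $K_4$/$K_{2,3}$ case analysis on where the hub can sit), then uses $\minrad{\Cmax}>D/2$ to find $v\in\Cmax$ with $\dist[G']{u,v}\ge D/2+1$ and shows, via Observation~\ref{obs:escalated cutset}, that the component of $G'\setminus\{u\}$ containing $v$ must also be adjacent to all three $C_i$ --- which finally yields the $K_{2,3}$ on $\{u,X\}$ versus $\{C_1,C_2,C_3\}$. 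Your proposal would need an analogue of that two-step argument to be complete.
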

\begin{proof}
Let us denote $C_1,C_2,C_3,$ and $C_4$ the connected components such
that $\minecc{C_i} \ge D/2$, distinct from $\Cmax$. Assume for
contradiction that $G$ admits a diameter-$D$ outerplanar completion,
denoted $G'$.

  \begin{claimN}\label{cl:pairwise adj}
    For each~$C_i,C_j$, either $C_i$ and $C_j$ are adjacent in
    $G'$, or $C_i$ and $C_j$ have a common neighbor in $G'$.
  \end{claimN}
  \begin{proof}
    Assume for contradiction that $C_i$ and $C_j$ are not adjacent and
    do not have a common neighbor in $G'$. Let us now construct the
    graph $G''$ from $G'$ as follows. For any component $C$ of
    $G'\setminus (C_i\cup C_j)$ that is not adjacent to both $C_i$ and
    $C_j$, contract $C$ onto vertices of $C_i$ or $C_j$ (According to
    the one $C$ is neighboring). As $G''$ is obtained from $G'$ by
    contracting edges, $G''$ also is a diameter-$D$ outerplanar
    completion (for some graph containing $C_i$ and $C_j$).  Let
    $N_i:=N_{G''}(C_i)$, let~$N_j:=N_{G''}(C_j)$, and note that
    $C_i\cap N_j=\emptyset$, $N_i\cap C_j=\emptyset$, and $N_i\cap
    N_j=\emptyset$.  Then, by Observation~\ref{obs:escalated cutset}
    (as $G''\setminus C_i$ and $G''\setminus C_j$ are connected),
    there are vertices~$v_i\in C_i$ and~$v_j\in C_j$ at distance at
    least~$D/2$ to each vertex in~$N_i$ and~$N_j$, respectively,
    in~$G''$. Since~$N_i$ and~$N_j$ are at distance at least one,
    $v_i$ and~$v_j$ are at distance at least~$D+1$,
    contradicting~$G''$ having diameter~$D$.
  \end{proof}
  \begin{claimN}\label{cl:3 neighbors}
    There is a vertex~$u\in\Cmax$ that is adjacent in $G'$ to 3 of the
    components $C_1,C_2,C_3,$ and $C_4$.
  \end{claimN}
  \begin{proof}
    First, note that there is a vertex~$u$ and 3 components, say
    $C_1,C_2,C_3$, with~$u\in N_{G'}[C_i]$ for all~$1\le i\le 3$,
    since otherwise, there would be internally vertex-disjoint paths
    between each two of the four components $C_i$, implying the
    existence of a $K_4$-minor in~$G'$.

    If~$u$ is neither in~$\Cmax$ nor in $C_i$, for $1\le i\le 3$,
    then, since all the~$C_i$ are adjacent to~$\Cmax$ (by
    Corollary~\ref{cor:connect big}), $G'$ would have a
    $K_{2,3}$-minor on the vertex sets~$\{u,\Cmax\}$
    and~$\{C_1,C_2,C_3\}$.

    Hence, in the following, we assume that~$u\in C_1$.  Let~$z$ be a
    neighbor of~$C_1$ in~$\Cmax$ and, for~$i\in\{2,3\}$ let~$w_i$
    denote a neighbor of $C_4$ in $N[C_i]$. We note that $w_2\ne z$
    and $w_3\ne z$, since otherwise, the claim follows and we are
    done.  Furthermore, $w_2\ne u$ and $w_3\ne u$, since otherwise
    there is a $K_{2,3}$-minor on the vertex sets~$\{u,\Cmax\}$
    and~$\{C_2,C_3,C_4\}$.
%
    Let~$X:=(C_4\cup\{w_2,w_3\})\setminus(C_2\cup C_3)$ and note that
    $X$ is adjacent to~$C_2$ and $C_3$, respectively.  Let~$Y$ be the
    connected component of $\Cmax\setminus\{w_2,w_3\}$ containing $z$,
    and note that $Y$ is adjacent to $C_1$ and $X$. Finally, since
    $X$, $Y$, $C_1$, $C_2$, and $C_3$ are pairwise disjoint, $G'$ has
    a $K_{2,3}$-minor on the vertex sets~$\{X,C_1\}$
    and~$\{C_2,C_3,Y\}$.
  \end{proof}

  Let~$v$ denote a vertex of~$\Cmax$ that is at distance at
  least~$D/2+1$ to~$u$ in~$G'$ and consider the result~$G' \setminus
  \{u\}$ of removing~$u$ from~$G'$. Let~$C$ denote the connected
  component of~$G' \setminus \{u\}$ that contains~$v$.  Towards a
  contradiction, assume there is a connected component~$C_i$ that is
  adjacent to~$u$ but not to~$C$ in~$G'$, then all paths between~$v$
  and any vertex in~$C_i$ contain~$u$. Since~$G'$ has diameter~$D$,
  all vertices in~$C_i$ are at distance at most~$D/2-1$ to~$u$
  in~$G'$, contradicting~$\minecc{C_i}\ge D/2$.  Thus there is a
  $K_{2,3}$-minor in~$G'$ on the vertex sets~$\{C_1,C_2,C_3\}$
  and~$\{u,X\}$ where $X$ is the connected component of $G'\setminus
  (C_1\cup C_2\cup C_3\cup \{u\})$ containing $v$. This concludes the
  proof of the lemma.
\end{proof}

Hence, assume $G$ has $q=0,1,2$, or $3$ connected components $C$ such
that $\minecc{C} = D/2$. By Corollary~\ref{cor:connect big} these $q$
components are adjacent to each of the $p$ components such that
$\minecc{C} > D/2$.  Note that with~$O(n^{2q})$ tries, we can guess
$q$ edges connecting each of the $q$ components to one of the $p$
component. Then we are left with a connected graph, and we can call
$\texttt{OPDI-Connected}$.
\newpage

\noindent{\bf The algorithm itself.} We now describe a polynomial-time
algorithm that solves the \OPDIfull problem when the input contains a
disconnected outerplanar graph. Algorithm~\ref{alg:disconnectedOPT}
described below receives a (disconnected) outerplanar graph $G$, and a
positive integer $D$.

\vspace{.3cm}
\begin{algorithm}[H]
  \DontPrintSemicolon
  \LinesNumbered
  \SetKwInOut{Input}{Input}
  \SetKwInOut{Output}{Output}
  \Input{A disconnected outerplanar graph $G=(V,E)$ and an integer~$D$.}
  \Output{`\TRUE'' if and only if~$G$ has a diameter-$D$ outerplanar completion.\vspace{.2cm}}

  \ForEach{connected component~$C$ of~$G$}{
    $\minecc{C}\gets \infty$\;
    $\minrad{C}\gets \infty$\;
    \ForEach{$u\in V(C)$}{
      Ecc$\gets \texttt{OPDI-Connected}(C,u,D)$\;
      $\minrad{C} \gets \min \{\minrad{C},{\rm Ecc}\}$\;
      $C'\gets C$ with added vertex $v$ and added edge $uv$\;
      Ecc$\gets \texttt{OPDI-Connected}(C',v,D)$\;
      $\minecc{C} \gets \min \{\minecc{C},{\rm Ecc}\}$\;
    }
    \lIf{$\minecc{C} = \infty$}{\Return{\FALSE}\;}
    \lIf{$\minecc{C} < D/2$}{Remove $C$ from $G$\;}
  }
  \lIf{$\minecc{C}\le D/2$ for every $C$}{\Return{\TRUE}\;}
  \If{$\minecc{C}\le D/2$ for every $C$ except one, $\Cmax$, and $\minrad{\Cmax}\le D/2$}{\Return{\TRUE}\;}

  \lIf{$\minecc{C} > D/2$ for at least 4 conn. components $C$}{
    \Return{\FALSE}\;
  }
  \ForEach{choice of edges interconnecting these $p=1,2$, or $3$ connected components\nllabel{ln:guess connections}}{
    \tcp{choose $p-1$ edges}
    Let $\Cmax$ be this new conn. component\;
    \If{$\texttt{\emph{OPDI-Connected}}(\Cmax,v,D) < \infty$\nllabel{ln:call connected Cmax}}{
      \lIf{$D$ is odd}{\Return{\TRUE}}
      \If{$G$ has more than $5-p$ conn. comp.}{
        \tcp{$\Cmax$ and $q$ connected comp. such that $\minecc{C} = D/2$}
        \Return{\FALSE}
      }\Else{
        \ForEach{choice of $q$ edges connecting $G$\nllabel{ln:guess q edges}}{
          \lIf{$\texttt{\emph{OPDI-Connected}}(G,v,D) < \infty$\nllabel{ln:call connected G}}{\Return{\TRUE}}
        }
        \Return{\FALSE}\;
      }
    }
  }
  \Return{\FALSE}

\caption{\texttt{OPDI-Disconnected}\label{alg:disconnectedOPT}}
\end{algorithm}
\vspace{.3cm}

At the beginning, the algorithm computes
$\minecc{C}$ and $\minrad{C}$ for each connected component~$C$
of~$G$. For computing $\minecc{C}$ the algorithm adds a vertex $v$,
guessing (with $O(n)$ tries) an edge connecting $v$ to $C$, and then
calls $\texttt{OPDI-Connected}$ for this component and root $v$.  For
computing $\minrad{C}$ the algorithm guesses a root $u$ (with $O(n)$
tries), and then calls $\texttt{OPDI-Connected}$ for $C$ and root $u$.

If $\minrad{C}=\infty$ for some component $C$ then, as $\minrad{G} \ge
\minrad{C}$, $G$ has no diameter-$D$ outerplanar completion.

Then, as they could be added in a diameter-$D$ outerplanar completion
(by Lemma~\ref{lem:dis_small_ecc}), the algorithm removes the
components $C$ with small escalated eccentricity, that is those such
that $\minecc{C} < D/2$.

Then the algorithm tests if there is no component $C$ such that
$\minecc{C} > D/2$, or if there is only one component $C$ such that
$\minecc{C} > D/2$, and if $\minrad{C} \le D/2$.  In both cases by
Lemma~\ref{lem:dis_few_big_ecc} and Lemma~\ref{lem:dis_few_big_ecc_2},
$G$ is a positive instance.

Then the algorithm tests if there are more than 3 components $C$ such
that $\minecc{C} > D/2$.  In this case, by
Lemma~\ref{lem:dis_few_big_ecc}, $G$ is a negative instance.
Otherwise, $G$ has $p=1$, $2$, or 3 such connected components, and the
algorithm guesses $p-1$ edges (in time $O(n^{2p-2})$) to connect them
(as they should be by Corollary~\ref{cor:connect big}). For each such
graph we call algorithm $\texttt{OPDI-Connected}$ to check that this graph
has a diameter-$D$ outerplanar completion.

Then the algorithm proceeds differently according to $D$'s parity.  If
$D$ is odd, then $G$ is a positive instance (By
Lemma~\ref{lem:dis_odd_easy}). If $D$ is even, if $G$ has (still) more
than $5-p$ connected components (by Lemma~\ref{lem:dis_even_easy_2}
and Lemma~\ref{lem:dis_even_easy}), then $G$ is a negative instance.
Then we are left with a graph $G$ with $1+q$ connected components, and
again the algorithm guesses $q$ edges (in time $O(n^{2q})$),
connecting $G$. For each of these graphs the algorithm calls
$\texttt{OPDI-Connected}(G,v,D)$ (for any $v$) to check whether this
graph admits a diameter-$D$ outerplanar completion.

Finally if none of these ``guessed'' connected graphs has a
diameter-$D$ outerplanar completion, then the algorithm concludes that
$G$ is a negative instance.

\begin{theorem}\label{thm:dis_comp}
Algorithm~\ref{alg:disconnectedOPT} solves \OPDIfull for disconnected
input graphs in polynomial time.  For odd $D$ the running time is
$O(n^7)$, while it is $O(n^{9})$ for even $D$.
\end{theorem}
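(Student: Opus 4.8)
The plan is to split the argument into correctness and running time, since the statement asserts both. For correctness I would match each exit point of Algorithm~\ref{alg:disconnectedOPT} to one of the structural results. First note that $\minecc{C}=\infty$ is equivalent to $\minrad{C}=\infty$ (as $\minrad{C}\le\minecc{C}\le\minrad{C}+1$), so the initial rejection is justified: a component admitting no standalone diameter-$D$ outerplanar completion rules one out for all of $G$. Deleting the components with $\minecc{C}<D/2$ is safe by Lemma~\ref{lem:dis_small_ecc}; the two early acceptances correspond exactly to Lemma~\ref{lem:dis_few_big_ecc} (no component with $\minecc{C}>D/2$) and Lemma~\ref{lem:dis_few_big_ecc_2} (a single such component of small radius); and the rejection for at least four big components is the first half of Lemma~\ref{lem:dis_few_big_ecc}. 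The remaining branches reduce $G$ to a bounded number of \emph{connected} instances: Corollary~\ref{cor:connect big} forces the $p\in\{1,2,3\}$ components with $\minecc{C}>D/2$ to be pairwise adjacent, so guessing $p-1$ edges that merge them and then completing is correct; the odd case then terminates by Lemma~\ref{lem:dis_odd_easy}, and the even case by Lemmas~\ref{lem:dis_even_easy_2} and~\ref{lem:dis_even_easy}. Every verification of ``has a diameter-$D$ outerplanar completion'' is delegated to \texttt{OPDI-Connected}, correct by Theorem~\ref{thm-algo-connected}.

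For the running time I would account for the phases in order. In the initialization loop, \texttt{OPDI-Connected} is called $O(|C|)$ times per component $C$, each call costing $O(|C|^3)$ by Theorem~\ref{thm-algo-connected}; thus component $C$ contributes $O(|C|^4)$, and $\sum_C|C|^4\le n^3\sum_C|C|=O(n^4)$. All the subsequent membership tests and removals run in $O(n)$ total, so the cost up to the edge-guessing phase is $O(n^4)$.

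In the odd case the algorithm only guesses $p-1$ edges to merge the big components, which is $O(n^{2p-2})$ choices, and issues one $O(n^3)$ call to \texttt{OPDI-Connected} per choice. Since $p\le3$, this phase costs $O(n^{2p-2})\cdot O(n^3)\le O(n^4)\cdot O(n^3)=O(n^7)$, and with the $O(n^4)$ initialization the total is $O(n^7)$.

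The even case carries the main work and is where I expect the only delicate point. After merging the $p$ big components (inside the $O(n^{2p-2})$-fold guess), the algorithm guesses a further $q$ edges, in $O(n^{2q})$ ways, to attach the $q$ components with $\minecc{C}=D/2$, calling \texttt{OPDI-Connected} on the whole graph ($O(n^3)$) for each combined guess. The exponent of this nested computation is $2(p-1)+2q+3=2(p+q)+1$, so everything hinges on certifying $p+q\le4$ in every branch that actually reaches it. This is guaranteed by the preceding dichotomies: Lemma~\ref{lem:dis_even_easy_2} rejects when $p\ge2$ and $p+q\ge5$, and Lemma~\ref{lem:dis_even_easy} rejects when $p=1$ and $q\ge4$, which is exactly what the test ``more than $5-p$ connected components'' enforces. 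Checking the three surviving cases $p=1$ (so $q\le3$), $p=2$ (so $q\le2$), and $p=3$ (so $q\le1$) gives $p+q\le4$ throughout, hence $2(p+q)+1\le9$ and the even case runs in $O(n^9)$. The hard part is precisely this bookkeeping --- correctly pairing each branch that survives to the guessing phase with the lemma that caps $p+q$; once that is in place, the rest is a routine summation.
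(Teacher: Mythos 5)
Your proposal is correct and follows essentially the same route as the paper: correctness is delegated to Lemmas~\ref{lem:dis_small_ecc}--\ref{lem:dis_even_easy} and Corollary~\ref{cor:connect big} exactly as in the paper's preceding discussion, and the running time is obtained by the same product $O(n^{2p-2})\cdot O(n^{2q})\cdot O(n^3)$ together with the bound $p+q\le 4$ enforced by the ``more than $5-p$ components'' test. Your case analysis certifying $p+q\le 4$ is in fact spelled out more explicitly than in the paper's own (very terse) proof, but it is the same argument.
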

\begin{proof}
  Indeed, the algorithm runs in time $O(n^7)$ for odd $D$ (at most $O(n^4)$ at line~\ref{ln:guess connections}, times $O(n^3)$ for the call to $\texttt{OPDI-Connected}$ in line~\ref{ln:call connected Cmax}).
  The algorithm runs in $O(n^{2p+2q+1})$~time for even~$D$ ($O(n^{2p-2})$ in line~\ref{ln:guess connections}, times $O(n^{2q})$ in line~\ref{ln:guess q edges}, times $O(n^3)$ for the call to $\texttt{OPDI-Connected}$ in line~\ref{ln:call connected G}), where $p$ and $q$ respectively denote the number of connected components $C$ such that $\minecc{C} > D/2$ and $\minecc{C} = D/2$. As $p+q \le 4$, we are done.
\end{proof}




\begin{appendix}

\section{A partial list of forbidden substructures}
\label{sec:maxParallelMatching}


In this section we provide a partial list of forbidden substructures that cannot appear as minors in graphs $G$ such that ${\bf opdi}(G) \leq D$, for any integer positive $D$. We hope that these constructions will turn useful towards obtaining a complete (finite) obstruction set. Our main tool is the notion of parallel matching, which is strongly related to the diameter of a maximal outerplanar graph (Subsection~\ref{subsec:maxParallelMatching}). We then present the forbidden substructures in Subsection~\ref{subsec:forbidden}.

\subsection{Maximum parallel matching and diameter}
\label{subsec:maxParallelMatching}

We start by defining the notion of parallel matching.\vspace{.2cm}

\begin{hey}{Parallel matching} The diameter of an outerplanar graph $G$ can be visualized by a specific kind of matching. Indeed, by defining an embedding of an outerplanar graph as a cyclic ordering of its vertex set around the outerboundary, we say that a matching $u_1v_1,\dots,u_kv_k$ is a collection of {\em parallel edges} (or a \emph{parallel matching}) if $u_1,\dots,u_k,v_k,\dots,v_1$ is cyclically ordered with respect to the embedding. We further require that some vertex should exist between $u_1$ and $v_1$, as well as between $u_k$ and $v_k$.

  \begin{center}
    \begin{tikzpicture}
      \draw (0,0) circle (1cm);
      \foreach \i in {0,...,9}{
        \draw node[circle,fill,scale=.3] (v\i) at (\i*360/10:1cm) {};
        \draw node at (\i*360/10:1.4cm) {$v_{\i}$};
      }
      \foreach \i/\j in {0/2,3/8,7/5}{
        \draw (v\i) -- (v\j);
      }

    \end{tikzpicture}
  \end{center}

Consequently, a sequence $u_1v_1,\dots,u_kv_k\in E(G)$ of parallel edges can be seen as a laminar system of 2-cuts, i.e., in every $G\backslash \{u_i,v_i\}$ the sets $\{u_1v_1,\dots,u_{i-1},v_{i-1}\}$ and $\{u_{i+1},v_{i+1},\dots,u_kv_k\}$ belong to two distinct connected components\footnote{Note that this can also be taken as a definition of parallel matchings independent of the embedding, in the specific case where $G$ is maximal.}.
In particular, each pair of vertices --including the first and the last one-- must be a 2-cut of the graph.

  \begin{center}
    \begin{tikzpicture}
      \begin{scope}[xshift=-5cm]
        \foreach \i in {-3,...,3}{
          \draw[dashed,gray,thick] (\i*.5,.5) -- +(0,-1);
        }
        \draw[thick] (-1.5,.5) -- (1.5,.5);
        \draw[thick] (-1.5,-.5) -- (1.5,-.5);
        \draw[thick] (-1.5,.5) .. controls +(-.7,0) and +(-.7,0) .. +(0,-1);
        \draw[thick] (1.5,.5) .. controls +(.7,0) and +(.7,0) .. +(0,-1);
        \draw node at (0,1) {$\geq k$ parallel edges};
        \draw node at (0,-1) {diameter $\geq k+1$};
      \end{scope}
      \begin{scope}
        \foreach \i in {-1,...,1}{
          \draw[dashed,gray,thick] (\i*1-.4,.5) .. controls +(.3,-.3) and +(-.3,-.3) .. (\i*1+.4,.5);
          \draw[dashed,gray,thick] (\i*1-.4,-.5) .. controls +(.3,.3) and +(-.3,.3) .. (\i*1+.4,-.5);
          \draw[gray,very thin] (-2.02,0) .. controls +(.2,0) and +(0,-.6) .. (\i*1+.4,.5);
          \draw[gray,very thin] (-2.02,0) .. controls +(.2,0) and +(0,-.6) .. (\i*1-.4,.5);
          \draw[gray,very thin] (-2.02,0) .. controls +(.2,0) and +(0,.6) .. (\i*1+.4,-.5);
          \draw[gray,very thin] (-2.02,0) .. controls +(.2,0) and +(0,.6) .. (\i*1-.4,-.5);

        }
        \draw[thick] (-1.5,.5) -- (1.5,.5);
        \draw[thick] (-1.5,-.5) -- (1.5,-.5);
        \draw[thick] (-1.5,.5) .. controls +(-.7,0) and +(-.7,0) .. +(0,-1);
        \draw[thick] (1.5,.5) .. controls +(.7,0) and +(.7,0) .. +(0,-1);
        \draw node at (0,1) {$\geq k$ non-parallel edges};
        \draw node at (0,-1) {proves nothing};
      \end{scope}
    \end{tikzpicture}
  \end{center}
\end{hey}

\begin{hey}{The maximum parallel matching is smaller than the diameter}
  Because each pair of vertices in a parallel matching of size $k$ is a 2-cut of the outerplanar graph and because they form a laminar collection of cuts, we can ensure that at least two vertices are at distance at least $k+1$. Indeed, the first edge $u_1v_1$ of the parallel matching cuts the graph into at least two connected components, one of which does not contain any edge of the matching. In such a connected component we take an arbitrary vertex $s$, and obtain another vertex $t$ with the same procedure applied to edge $u_kv_k$. The distance $\dist{s,t}$ is necessarily larger than $k+1$.

  \begin{center}
    \begin{tikzpicture}
      \begin{scope}[xshift=-5cm]
        \foreach \i in {-3,...,3}{
          \draw[dashed,gray,thick] (\i*.5,.5) -- +(0,-1);
        }
        \draw[thick] (-1.5,.5) -- (1.5,.5);
        \draw[thick] (-1.5,-.5) -- (1.5,-.5);
        \draw[thick] (-1.5,.5) .. controls +(-.7,0) and +(-.7,0) .. +(0,-1);
        \draw[thick] (1.5,.5) .. controls +(.7,0) and +(.7,0) .. +(0,-1);
        \draw node[circle,fill,scale=.5,label=180:$s$] at (-2.02,0) {};
        \draw node[circle,fill,scale=.5,label=0:$t$] at (2.02,0) {};
        \draw (-2.02,0) -- (-1.5,.5) -- ++(.5,-1) -- ++(.5,0) -- ++(.5,1) -- ++(.5,-1) -- ++(.5,1) -- ++(.5,0) -- ++(.52,-.5);
      \end{scope}
    \end{tikzpicture}
  \end{center}

  \begin{obs}\label{obs:M at most d}
    Let~$G$ be an outerplanar graph of diameter~$D$ and let~$M$ be a parallel matching in~$G$.
    Then, $|M|<D$.
  \end{obs}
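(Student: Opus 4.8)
The plan is to exhibit two vertices $s$ and $t$ with $\dist{s,t}\ge |M|+1$; since $\diam{G}=D$, this immediately forces $|M|+1\le D$, that is, $|M|<D$. So the whole argument reduces to producing a pair of vertices that are far apart, using the $k:=|M|$ nested $2$-cuts supplied by the matching.

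Write $M=\{u_1v_1,\dots,u_kv_k\}$, ordered so that $u_1,\dots,u_k,v_k,\dots,v_1$ appears along the outer boundary. First I would use the two ``vertex exists between'' requirements in the definition of a parallel matching to select the endpoints: let $s$ be a vertex on the boundary arc between $v_1$ and $u_1$ that avoids all other matching vertices, and let $t$ be a vertex on the arc between $u_k$ and $v_k$ that avoids all other matching vertices. Thus $s$ sits ``outside'' the outermost pair $\{u_1,v_1\}$ while $t$ sits ``inside'' the innermost pair $\{u_k,v_k\}$, and in particular $s\neq t$.

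The core of the argument is the laminar $2$-cut structure already recorded above: for each $i$, the pair $\{u_i,v_i\}$ is a $2$-cut of $G$ whose removal separates $\{s\}\cup\bigcup_{j<i}\{u_j,v_j\}$ from $\{t\}\cup\bigcup_{j>i}\{u_j,v_j\}$. Consequently every $\{u_i,v_i\}$ separates $s$ from $t$, so any $s$--$t$ path in $G$ must, for each $i$, pass through $u_i$ or $v_i$. Since $M$ is a matching, the $2k$ vertices $u_1,v_1,\dots,u_k,v_k$ are pairwise distinct, and none of them equals $s$ or $t$; hence any $s$--$t$ path contains at least $k$ distinct internal vertices (one from each pair) in addition to its two endpoints. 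A path on at least $k+2$ vertices has at least $k+1$ edges, so $\dist{s,t}\ge k+1$, and therefore $D=\diam{G}\ge\dist{s,t}\ge k+1$, giving $|M|=k<D$ as claimed.

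The one step deserving care is the passage from ``each individual cut separates $s$ from $t$'' to ``a single path meets all $k$ cuts in \emph{distinct} vertices,'' and I expect this to be the main obstacle. It is precisely here that the \emph{laminar} (nested) nature of the cuts is used: the separators force any $s$--$t$ path to traverse the pairs $\{u_1,v_1\},\dots,\{u_k,v_k\}$ in order, each crossing consuming a fresh pair, which rules out the pathological possibility that the $k$ forced vertices coincide. I would justify this by the nesting $R_0\mid\{u_1,v_1\}\mid R_1\mid\dots\mid\{u_k,v_k\}\mid R_k$ of regions, where removing $\{u_i,v_i\}$ disconnects the part containing $s$ (together with the earlier pairs) from the part containing $t$ (together with the later pairs), so that the path genuinely collects one distinct vertex per cut.
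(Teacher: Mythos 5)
Your proof is correct and matches the paper's own argument: the paper likewise picks $s$ in the region cut off by $u_1v_1$ and $t$ in the region cut off by $u_kv_k$, and uses the fact that each pair $\{u_i,v_i\}$ is an $s$--$t$ separator to force $\dist{s,t}\ge k+1$. (One small remark: the distinctness of the $k$ forced vertices already follows from the pairs being disjoint, so the extra appeal to laminarity in your last paragraph is not needed for that step --- laminarity is only needed to see that every pair separates $s$ from $t$.)
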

\end{hey}

\begin{hey}{They are not equal in general}
  Of course, there may very well exist outerplanar graphs with a very small maximum parallel matching, and an arbitrary large diameter. Cycles and paths are the easiest such examples, as their natural embedding contain no non-empty parallel matching, despite having unbounded diameter.
\end{hey}

\begin{hey}{They are equivalent for maximal outerplanar graphs}
  In the specific case of maximal outerplanar graphs (whose embedding is unique), the size of the maximum parallel matching and the diameter differ by exactly one. We need the following lemma.

  \begin{lemma}
    Let $G$ be a chordal graph and $s,t\in V(G)$. There exist $\dist{s,t}-1$ vertex-disjoint cliques, each of which is an $s,t$-cut in $G$.
  \end{lemma}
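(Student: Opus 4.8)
The plan is to realize the $\dist{s,t}-1$ cliques as minimal $s,t$-separators, each confined to one breadth-first-search layer around $s$; then their pairwise disjointness is automatic, since distinct layers are disjoint, and their being cliques follows from the classical theorem on chordal graphs.

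First I would fix $d:=\dist[G]{s,t}$ and define the distance layers $L_i:=\{v\in V(G):\dist[G]{s,v}=i\}$ for $0\le i\le d$, so that $s\in L_0$ and $t\in L_d$. The first step is to check that for every $i$ with $1\le i\le d-1$ the layer $L_i$ is an $s,t$-cut. Indeed, along any $s,t$-path the function $\dist[G]{s,\cdot}$ starts at $0$, ends at $d$, and changes by at most one between consecutive (hence adjacent) vertices, by the triangle inequality; a discrete intermediate-value argument then forces the path to hit every intermediate value, in particular to contain a vertex of $L_i$. Since $1\le i\le d-1$ gives $i\notin\{0,d\}$, we have $s,t\notin L_i$, so $L_i$ genuinely separates $s$ from $t$. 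Note this step uses no chordality.

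Next, inside each $L_i$ I would choose a subset $S_i\subseteq L_i$ that is inclusion-minimal among the $s,t$-cuts contained in $L_i$; such an $S_i$ exists because the family is nonempty (it contains $L_i$) and finite. The point to verify with care is that $S_i$ is then a minimal $s,t$-separator of $G$ in the usual sense: any $s,t$-cut properly contained in $S_i$ would again be a subset of $L_i$, contradicting minimality within $L_i$. This is precisely the passage that licenses the main structural input, the theorem of Dirac that every minimal separator of a chordal graph induces a clique. I would either cite this result or recall its short proof: taking $x,y\in S_i$ and a shortest $x$--$y$ path through each of the two full components on either side of $S_i$, these two internally disjoint paths close into a chordless cycle of length at least four unless $xy\in E(G)$, contradicting chordality; hence $S_i$ is a clique.

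Finally, the conclusion assembles immediately. Each $S_i$ is a clique and an $s,t$-cut, and since $S_i\subseteq L_i$ while the layers $L_1,\dots,L_{d-1}$ are pairwise disjoint by definition, the cliques $S_1,\dots,S_{d-1}$ are pairwise vertex-disjoint. This yields exactly $d-1=\dist{s,t}-1$ vertex-disjoint clique $s,t$-cuts, as claimed. I do not anticipate a genuine obstacle; the only delicate point is the equivalence between minimality within $L_i$ and minimality in $G$, which is exactly what makes Dirac's clique-separator theorem applicable.
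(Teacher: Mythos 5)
Your proof is correct, but it follows a genuinely different route from the paper's. You partition $V(G)$ into BFS layers around $s$, observe that each intermediate layer $L_i$ is an $s,t$-cut, shrink it to an inclusion-minimal $s,t$-cut $S_i\subseteq L_i$ (which, as you rightly note, is then a minimal $s,t$-separator of $G$ in the usual sense), and invoke the classical theorem that minimal separators of chordal graphs are cliques; disjointness is free because the layers are disjoint. The paper instead argues by induction: it first reduces to the case where $s$ is simplicial (peeling off simplicial vertices outside $\{s,t\}$ does not affect the claim), takes the clique $N_G(s)$ as one cut, contracts $N_G[s]$ to a single vertex $s'$ --- chordality being preserved under edge contraction --- and recurses on $\{s',t\}$, whose distance has dropped by one. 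Your argument is shorter and leans on a standard black box; the paper's is self-contained and, more importantly for its purposes, produces separators with extra structure: each clique is the neighborhood of a simplicial vertex, which is exactly what the surrounding text uses to conclude that in a $2$-connected maximal outerplanar graph the cliques all have size $2$ and form a laminar (hence parallel) family. With your construction the same conclusions still hold, but they need slightly different justifications --- e.g., a clique minimal separator of size $3$ with two full sides would force a $K_{2,3}$-minor, and laminarity would come from the ordering of the layers --- so as a drop-in replacement your proof would require touching the follow-up remarks. As a proof of the lemma as stated, it is complete.
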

  \begin{proof}
    First, let us notice that the lemma holds for $G$ if and only if it holds for $G\backslash v$, where $v\in V(G)\backslash\{ s,t\}$ is a simplicial vertex. As a result, we can assume that either $s$ or $t$ is simplicial, and by symmetry that $s$ is simplicial.

    We can now consider the graph $G'$ obtained from $G$ by contracting the closed neighborhood $N_G[s]$ onto a vertex $s'$. Since chordal graphs are closed under edge contraction, $G'$ is chordal. Thus, we can apply the lemma recursively on the pair $\{s',t\}$ of vertices with distance $\dist[G']{s',t}=\dist[G]{s',t}-1$. This yields a collection of $\dist{s,t}-2$ cliques with --together with the clique $N_G(s)$-- prove the lemma.
  \end{proof}

In the particular case of a 2-connected outerplanar graph, these cliques are all of cardinality 2. Indeed, they are all obtained from the neighborhood of a simplicial vertex, and a simplicial vertex of an outerplanar graph has maximum degree 2. Furthermore, a laminar sequence of 2-cuts in a maximal outerplanar graph is necessarily a parallel matching, as noted above.

  \begin{corollary}\label{obs:M at least d}
    Let~$G$ be a maximal outerplanar graph of diameter~$D$ and let~$M$ be a maximum parallel matching in~$G$.
    Then, $|M|=D-1$.
  \end{corollary}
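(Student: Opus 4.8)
The plan is to establish the two inequalities $|M|\le D-1$ and $|M|\ge D-1$ separately. The upper bound $|M|\le D-1$ is immediate from Observation~\ref{obs:M at most d}, which already gives $|M|<D$ for any parallel matching in an outerplanar graph of diameter~$D$. So the whole content lies in producing a parallel matching of size exactly $D-1$, which I would do by invoking the preceding lemma on clique cuts in chordal graphs.

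First I would fix a pair $s,t\in V(G)$ realizing the diameter, i.e. with $\dist{s,t}=D$. Since a maximal outerplanar graph is $2$-connected and triangulated, it is chordal, so the preceding lemma applies and yields $\dist{s,t}-1=D-1$ pairwise vertex-disjoint cliques, each of which is an $s,t$-cut. As already observed in the paragraph preceding the statement, in a $2$-connected outerplanar graph every such clique has cardinality~$2$: each arises as the neighbourhood of a simplicial vertex, and a simplicial vertex of an outerplanar graph has degree at most~$2$. Thus I obtain $D-1$ vertex-disjoint edges $u_1v_1,\dots,u_{D-1}v_{D-1}$, each separating $s$ from $t$.

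It then remains to argue that these $D-1$ edges form a parallel matching. Because each pair $\{u_i,v_i\}$ is an $s,t$-separator and the pairs are vertex-disjoint, the components containing $s$ after deleting the respective cuts are nested, so the cuts form a laminar family; I would order them along the $s$--$t$ direction so that $\{u_1,v_1\}$ is the cut closest to $s$ and $\{u_{D-1},v_{D-1}\}$ the one closest to $t$. The remark preceding the corollary then guarantees that a laminar sequence of $2$-cuts in a maximal outerplanar graph is exactly a parallel matching. Combining this lower bound $|M|\ge D-1$ with Observation~\ref{obs:M at most d} yields $|M|=D-1$.

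The main obstacle I anticipate is the bookkeeping needed to certify the two boundary conditions in the definition of a parallel matching, namely that some vertex lies strictly between $u_1$ and $v_1$ and between $u_{D-1}$ and $v_{D-1}$ on the outerboundary. This is where the choice of $s$ and $t$ as diameter endpoints is essential: if no vertex separated by $\{u_1,v_1\}$ from $t$ existed on the $s$-side, then $s$ would lie on the cut and $\dist{s,t}$ would drop below $D$, contradicting the choice of the pair; the same reasoning applies symmetrically at $\{u_{D-1},v_{D-1}\}$. Verifying this cleanly, together with the claim that laminarity plus the unique outerplanar embedding force the edges to be drawn ``in parallel'', is the only non-routine part of the argument.
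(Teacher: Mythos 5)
Your proposal is correct and follows essentially the same route as the paper: the upper bound is Observation~\ref{obs:M at most d}, and the lower bound comes from applying the clique-cut lemma for chordal graphs to a diametral pair $s,t$, noting that in a $2$-connected outerplanar graph the cuts are single edges forming a laminar family and hence a parallel matching. Your extra care about the boundary conditions (resolved by observing that $s$ and $t$ themselves lie strictly beyond the first and last cut edges) only makes explicit a point the paper leaves implicit.
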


%
%
%
\end{hey}

\begin{hey}{Edge-completion and maximum parallel matching}
  An embedded outerplanar graph $G$ with a maximum parallel matching of cardinality $k-1$ can be completed by addition of edges into an outerplanar graph of diameter at most $2k$. The procedure is quite simple: for each face of the outerplanar graph, add an edge from one vertex to all others and call the result~$G'$. We will prove the bound on the diameter  by showing that the size of a maximum parallel matching does not increase too much when the ``star'' edges are added.

Let us consider a maximum parallel matching of cardinality $k'$ in $G'$ (see the red edges in the picture below). We want to prove that,  between any two consecutive edges of the matching of $G'$,  there exists an edge from $G$ which can replace them both in a parallel matching. We distinguish two cases:

\begin{itemize}
\item[$\bullet$] If one of the two consecutive (red) edges is also an edge of $G$, there is nothing to prove.
\item[$\bullet$] If both of them are star edges, they belong to different faces of $G$ (as they would otherwise share an endpoint). As a result, there is an edge of $G$ which separates those two faces, and this is the edge we are looking for.
\end{itemize}
Note that if $e_1,e_2,e_3$ are three consecutive (red) edges, the edge of $G$ associated with the pair $e_1,e_2$ may be the same as the edge associated with the pair $e_2,e_3$: this happens when $e_2$ itself is an edge of $G$. This need not trouble us.\\


\begin{center}
  \begin{tikzpicture}
        \draw[green,thick] (-1.5,-.5) -- ++ (.5,1) -- ++ (.25,0) -- ++(0,-1) -- ++(.25,0) -- ++(.5,1) -- ++(.5,-1) -- ++(.55,0) -- ++(0,1);
        \foreach \i in {-3,...,3}{
          \draw[red,thick] (\i*.5,.5) -- +(0,-1);
        }
        \draw[thick] (-1.5,.5) -- (1.5,.5);
        \draw[thick] (-1.5,-.5) -- (1.5,-.5);
        \draw[thick] (-1.5,.5) .. controls +(-.7,0) and +(-.7,0) .. +(0,-1);
        \draw[thick] (1.5,.5) .. controls +(.7,0) and +(.7,0) .. +(0,-1);
        \draw[red,thick] (-2.05,-1) -- +(.5,0) node[label=0:{\parbox[t]{3.6cm}{\color{black} Parallel Matching in $G'$}}] {};
        \draw[green,thick] (-2.05,-1.5) -- +(.5,0) node[label=0:{\parbox[t]{3.5cm}{\color{black} Edges from $G$}}] {};
        \begin{scope}[xshift=4cm,scale=.8]
          \draw[red,thick] (-.5,-.5) -- (-.5,.5);
          \draw[red,thick] (.5,-.5) -- (.5,.5);
          \draw[green,thick] (-.45,.5) -- +(0,-1);
          \draw[thick,dashed] (-1.2,.5) -- (1.2,.5);
          \draw[thick,dashed] (-1.2,-.5) -- (1.2,-.5);
          \draw[very thick] (-.5,.5) -- (.5,.5);
          \draw[very thick] (-.5,-.5) -- (.5,-.5);
        \end{scope}
        \begin{scope}[xshift=4cm,yshift=-1.23cm,scale=.8]
          \draw[red,thick] (-.5,-.5) -- (-.5,.5);
          \draw[red,thick] (.5,-.5) -- (.5,.5);
          \draw[green,thick] (.5,-.5) -- (0,.5);
          \draw[thick,dashed] (-1.2,.5) -- (1.2,.5);
          \draw[thick,dashed] (-1.2,-.5) -- (1.2,-.5);
          \draw[very thick] (-.5,.5) -- (.5,.5);
          \draw[very thick] (-.5,-.5) -- (.5,-.5);
        \end{scope}
        \begin{scope}[xshift=6.3cm,yshift=-1.23cm,scale=.8]
          \draw[red,thick] (-.5,-.5) -- (-.5,.5);
          \draw[red,thick] (.5,-.5) -- (.5,.5);
          \draw[green,thick] (0,-.5) -- (0,.5);
          \draw[thick,dashed] (-1.2,.5) -- (1.2,.5);
          \draw[thick,dashed] (-1.2,-.5) -- (1.2,-.5);
          \draw[very thick] (-.5,.5) -- (.5,.5);
          \draw[very thick] (-.5,-.5) -- (.5,-.5);
        \end{scope}
        \begin{scope}[xshift=6.3cm,scale=.8]
          \draw[red,thick] (-.5,-.5) -- (-.5,.5);
          \draw[red,thick] (.5,-.5) -- (.5,.5);
          \draw[green,thick] (-.5,-.5) -- (.5,.5);
          \draw[thick,dashed] (-1.2,.5) -- (1.2,.5);
          \draw[thick,dashed] (-1.2,-.5) -- (1.2,-.5);
          \draw[very thick] (-.5,.5) -- (.5,.5);
          \draw[very thick] (-.5,-.5) -- (.5,-.5);
        \end{scope}
  \end{tikzpicture}
\end{center}

As the maximum parallel matching in $G'$ has cardinality $k'$, there are $k'-1$ different pairs of consecutive (red) edges, each associated with a (green) edge from $G$. These (green) edges do not necessarily define a parallel matching by themselves, for they may touch at their endpoints or even be equal. Hence, we can obtain a matching in $G$ by using only every second (green) edge associated with consecutive (red) edges, thus producing a parallel matching of length $\lceil \frac {k'-1} {2}\rceil$.\\[2mm]
Consequently, the completion $G'$ of $G$ has diameter $k'+1$, and the diameter of $G$ itself is at least $\lceil \frac {k'-1} {2}\rceil +1$. As
$$k'+1\leq 2\Big( \Big\lceil \frac {k'-1} {2}\Big\rceil +1\Big),$$
the completion of $G$ with stars is a 2-approximation of the completion of $G$ achieving the smallest diameter.


\end{hey}


\subsection{Forbidden substructures}
\label{subsec:forbidden}

In this subsection we present graphs $G$ with ${\bf opdi}(G) > D$. Because a supergraph $H$ of such a graph $G$ satisfies ${\bf opdi}(H)\geq {\bf opdi}(G)>D$, the following graphs can be seen as forbidden patterns for graphs which admit a diameter-$D$  outerplanar completion and we call them \emph{diameter-$D$ obstruction}.
The only diameter-1 obstruction is $4K_1$, and indeed any outerplanar graph on at least $4$ vertices has diameter strictly greater than $1$. We proceed to give a detailed description of forbidden subgraphs for larger diameters.

\newcommand{\join}[1]{\ensuremath{\operatorname{join}(#1)}}

  If~$X$ is a family of graphs, let $iX$ denote the disjoint union of~$i$ graphs of~$X$. $\join{v,iX}$ denotes the result of taking~$i$ graphs of~$X$ and the vertex~$v$ and connecting each of the~$i$ graphs to~$v$ by a single edge.
  We construct families $A_i$ and $B_i$ of diameter-$i$ obstructions:
  \begin{align*}
    A_i:=\begin{cases}
      \{2K_1\} & \text{ if $i=0$}\\
      2B_{i-1} & \text{ if $2 \divides i$}\\
      2A_{i-1} & \text{ if $2 \not\divides i$}\\
    \end{cases}
  &&&
    B_i:=\begin{cases}
      \{K_{1,3}\}    & \text{ if $i=1$}\\
      \emptyset      & \text{ if $2 \divides i$}\\
      \join{v,3B_{i-2}} & \text{ if $2 \not\divides i$}\\
    \end{cases}
  \end{align*}
  Note that all graphs in~$B_i$ are connected.


  \begin{obs}\label{obs:contraction}
    Let~$G$ be a maximal graph and let~$G'$ be a connected induced subgraph of~$G$.
    Then, there is a triangulation of~$G'$ that is also a contraction of~$G$.
  \end{obs}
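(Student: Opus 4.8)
The plan is to prove the statement by induction on $|V(G)|-|V(G')|$, repeatedly contracting a single boundary edge of $G$ that has an endpoint outside $V(G')$, while maintaining the invariant that the current graph is still maximal (i.e.\ $2$-connected and triangulated) and that $V(G')$ is contained in its vertex set. Write $W:=V(G')$; since $G'$ is induced we have $G'=G[W]$. I will use the well-known structural description of a maximal outerplanar graph on at least three vertices as a \emph{triangulated polygon}: its outer boundary is a Hamiltonian cycle $C$, every inner face is a triangle, and every edge of $C$ lies on exactly one inner triangle. The degenerate cases $|W|\le 2$ (where $G'$ is a vertex or an edge) are trivial, and if $W=V(G)$ then $G'=G$ is already its own triangulation and a trivial contraction of $G$; so the interesting situation is $3\le |W|<|V(G)|$.

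The crux is the following claim, which I would isolate and prove first: contracting any edge $xy$ of the boundary cycle of a maximal outerplanar graph on at least four vertices again yields a maximal outerplanar graph. To see this, let $xyt$ be the unique inner triangle containing the boundary edge $xy$. The edges $xt$ and $yt$ cut the polygon into two sub-polygons, $P_1$ spanned by the boundary arc from $x$ to $t$ avoiding $y$ (together with $xt$) and $P_2$ spanned by the arc from $y$ to $t$ avoiding $x$ (together with $yt$), each carrying the triangulation inherited from $G$. Identifying $x$ and $y$ into a single vertex $z$ collapses the triangle $xyt$ and makes the two edges $xt,yt$ parallel; after merging them into one edge $zt$, the union of the (relabelled) triangulations of $P_1$ and $P_2$ is exactly a triangulation of the smaller polygon obtained from $C$, with $zt$ as an edge. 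One has to check that no other pair of parallel edges is created, which follows because a boundary edge's endpoints have exactly one common neighbour, namely the apex $t$ of its unique inner triangle (any further common neighbour would put $xy$ on a second inner triangle). Hence the contraction is again a triangulated polygon, i.e.\ maximal outerplanar.

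With this claim in hand, the induction is straightforward. If $W\neq V(G)$, then some vertex $x\notin W$ lies on $C$, so $C$ has an edge $xy$ incident to $x$; I contract $xy$, keeping the label $y$ whenever $y\in W$, so that no vertex of $W$ is ever absorbed. By the claim the resulting graph $G_1$ is maximal outerplanar with one fewer vertex and $W\subseteq V(G_1)$. Because every contraction is of an edge with an endpoint outside $W$, no edge of $G'$ is ever contracted and no two distinct vertices of $W$ are ever identified, so $G_1[W]\supseteq G[W]=G'$; in particular $G_1[W]$ is connected, being a supergraph of the connected $G'$. Applying the induction hypothesis to $(G_1,W)$ gives a triangulation $R$ of $G_1[W]$ that is a contraction of $G_1$. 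Then $R$ is maximal outerplanar on vertex set $W$ and contains $G_1[W]\supseteq G'$, so $R$ is a triangulation of $G'$; and since $R$ is a contraction of a contraction of $G$, it is a contraction of $G$, as required.

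The main obstacle is the boundary-edge-contraction claim: one must verify carefully that contracting an edge of the outer cycle of a triangulated polygon produces exactly one pair of parallel edges (the two sides of the single inner triangle on that edge) and otherwise leaves a valid triangulation, so that maximality is preserved and the contraction can be iterated all the way down to vertex set $W$. Once contractions are restricted to boundary edges with an endpoint outside $W$, the remaining points—that vertices of $W$ are never merged and that all edges of $G'$ survive—are routine bookkeeping.
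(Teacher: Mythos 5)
The paper states this observation without any proof, so there is nothing to compare your argument against; judged on its own, your proof is correct and complete. The key claim---that contracting a boundary edge $xy$ of a triangulated polygon on at least four vertices yields a triangulated polygon, because $x$ and $y$ have exactly one common neighbour (two common neighbours would force crossing chords in the outerplanar embedding)---is sound, and the induction that absorbs the vertices outside $V(G')$ one boundary edge at a time, always keeping the $W$-labelled endpoint, correctly guarantees that no two vertices of $V(G')$ are ever identified and that the final graph is a maximal outerplanar supergraph of $G'$ on $V(G')$, i.e.\ a triangulation of $G'$ obtained as a contraction of $G$. This is the natural argument one would expect the authors to have had in mind.
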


  \begin{definition}
    Let~$G$ be an outperplane graph with parallel matching~$M=\{u_1v_1,u_2v_2,\ldots,u_kv_k\}$ and let~$v$ be a vertex of~$G$ such that~$v$ is between~$u_k$ and~$v_k$ in the embedding of~$G$. Then, we call $M$ \emph{compatible with~$v$}.
  \end{definition}

  \begin{theorem}\label{lem:obstruction builder}
    Let~$D\in\mathbb{N}$ and let $G$ be an outerplanar graph that contains a graph from~$A_D \cup B_D$ as a minor.
    Then $G$ has diameter at least~$D+1$.
  \end{theorem}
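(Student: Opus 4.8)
The plan is to reduce the statement to a purely combinatorial fact about parallel matchings and then prove it by induction on~$D$. First I would pass to a maximal outerplanar graph: fix any outerplanar embedding of~$G$ and add chords to triangulate all inner faces, obtaining a maximal outerplanar supergraph~$G^{+}\supseteq G$. Adding edges can only decrease distances, so $\diam{G^{+}}\le\diam{G}$, and $G^{+}$ still contains the given $H\in A_D\cup B_D$ as a minor. Hence it suffices to prove $\diam{G^{+}}\ge D+1$. By Observation~\ref{obs:M at most d}, this in turn follows once I exhibit a parallel matching of size $\ge D$ in~$G^{+}$ (a matching of size~$k$ forces two vertices at distance $\ge k+1$). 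So the whole theorem reduces to the assertion that \emph{every maximal outerplanar graph containing a member of $A_D\cup B_D$ as a minor admits a parallel matching of size $\ge D$}.

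I would prove this by induction on~$D$, strengthening the hypothesis to keep track of the root of the $B$-family. Concretely, I would prove simultaneously: (A) if a maximal outerplanar graph contains some $H\in A_D$ as a minor, it has a parallel matching of size $\ge D$; and (B) if it contains some $H\in B_D$ (with its root~$v$ mapped to a branch set~$V_v$) as a minor, it has a parallel matching of size $\ge D$ whose last $2$-cut $u_kv_k$ separates a vertex of~$V_v$ from the rest of the matching, i.e.\ a matching ``anchored'' at the root, in the sense of being compatible with~$V_v$. The point of carrying the stronger, compatibility-aware statement in (B) is that it is exactly the interface needed to glue matchings across the two recursive operations $2X$ and $\join{v,3\,\cdot}$ that define the families. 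The base cases $D\le 1$ use only outerplanarity: a member of $A_1\cup B_1$ (namely $4K_1$ or $K_{1,3}$) as a minor forces at least four vertices, and since no outerplanar graph on $\ge 4$ vertices is complete (it would contain $K_4$), two non-adjacent vertices exist, yielding by Corollary~\ref{obs:M at least d} a parallel matching of size $\ge 1=D$.

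For the inductive step I would treat the two constructors separately. For a disjoint union $A_D=2X$ (or $A_D=2A_{D-1}$), the two minor-models live in two vertex-disjoint regions; using Observation~\ref{obs:contraction} I can contract $G^{+}$ onto each region to obtain a maximal outerplanar graph realizing each copy, apply the inductive hypothesis to get an anchored matching in each, and then \emph{concatenate them in series} along the cyclic order: the laminar families of $2$-cuts coming from the two regions, being separated in the embedding, line up into a single parallel matching whose size is the sum. For the join $B_D=\join{v,3B_{D-2}}$, the apex~$v$ maps to a connected branch set~$V_v$ to which all three copies of $B_{D-2}$ are attached; by the inductive hypothesis each copy yields an anchored matching of size $\ge D-2$, and I would chain two suitably chosen copies \emph{through}~$V_v$, using the triangulation around~$V_v$ (again via Observation~\ref{obs:contraction}) to supply the additional $2$-cut needed to reach size $\ge D$ and to re-anchor the result at~$v$. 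The reason three copies are needed rather than two is embedding flexibility: in any outerplanar drawing at most two branches can be ``extremal'' around~$V_v$, and $K_{2,3}$-freeness limits how the branches can attach, so with three copies the pigeonhole principle guarantees two branches whose anchored matchings are positioned to be linked into a single parallel chain.

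The main obstacle, and the technical heart of the argument, is precisely this join step: one must show that, \emph{no matter how} the minor model of~$B_D$ is embedded inside~$G^{+}$, the branch set~$V_v$ and the three branch-regions are arranged so that two of the anchored matchings can be concatenated through~$V_v$ without violating laminarity, and that the triangulated structure around~$V_v$ really does contribute the extra separation promised by the counting. Controlling this requires the outerplanarity constraints, in the form of excluded $K_4$ and $K_{2,3}$ minors, to rule out configurations in which the three branches could escape to a common region, together with careful bookkeeping of which vertex of~$V_v$ serves as the new anchor. The disjoint-union step and the base cases are comparatively routine; the care needed to make the constants match (size $\ge D$ rather than something smaller) is concentrated entirely in how the apex and its triangulated neighbourhood are exploited.
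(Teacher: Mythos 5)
Your overall frame---triangulate to a maximal outerplanar supergraph, reduce the theorem via Observation~\ref{obs:M at most d} to exhibiting a parallel matching of size $\ge D$, and induct on $D$ using Observation~\ref{obs:contraction} and Corollary~\ref{obs:M at least d}---is exactly the paper's frame, but both of your inductive steps rest on quantitative claims that are false. First, your strengthened hypothesis (B), a parallel matching of \emph{full} size $\ge D$ compatible with the root branch set $V_v$, already fails at $D=1$: in the maximal outerplanar graph with outer cycle $1,2,3,4$ and chord $\{1,3\}$, take the $K_{1,3}$-model with centre $\{1\}$; the unique parallel matching of size $1$ is $\{\{1,3\}\}$, and $1$ is an endpoint of it rather than lying between $u_1$ and $v_1$, so no size-$1$ matching is compatible with $\{1\}$. (Your base case, as written, only proves the unanchored statement anyway, so the induction would not close.) Second, the disjoint-union step's assertion that the two matchings ``line up into a single parallel matching whose size is the sum'' is also false: the second branch set sits \emph{between} two consecutive cuts of the first matching, so only the cuts on one side of that gap separate it from the second piece and can be chained. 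Concretely, the member of $B_3=\join{v,3K_{1,3}}$ in which $v$ attaches to the three claw centres is a tree of diameter $4$ in which $v$ has eccentricity $2$; taking two copies and joining the two copies of $v$ by an edge gives an outerplanar graph containing a member of $A_4=2B_3$ with diameter $5$, whose maximal outerplanar completions therefore have maximum parallel matching exactly $4$---not the $3+3=6$ your concatenation would produce.

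This halving is precisely why the families are calibrated with $B_{D-1}$ in the union case and $B_{D-2}$ plus one extra cut in the join case: the paper keeps only the larger half of each maximum matching, of size $\ge\lceil(D-1)/2\rceil$ resp.\ $\ge\lceil(D-2)/2\rceil$, and this is where the parity of $D$ enters. Your join step also misidentifies the mechanism: the paper does not chain two branches ``through $V_v$'' by a pigeonhole on extremal positions. It contracts the three (or four) large pieces to single vertices, invokes $K_4$-freeness of outerplanar graphs to find two of them, say $V_1$ and $V_2$, that are non-adjacent, and then uses maximality of $G$ to extract a single chord $\{x,y\}$ lying \emph{inside the remaining pieces} that separates $V_1$ from $V_2$; that chord supplies the one extra matching edge, and the third copy exists only to host it. Your accounting for the join step ($2(D-2)+1=2D-3\ge D$) would, if achievable, show that much smaller structures than $A_D\cup B_D$ are already obstructions, which the example above rules out.
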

  \begin{proof}
    Assume for contradiction that~$G$ has diameter at most~$D$, and also that $G$ is maximal (without loss of generality). The proof is by induction on~$D$.

    \textbf{Base}: For $D=0$, note that $A_0\cup B_0=\{2K_1\}$. Consequently, $G$ contains at least two vertices and has positive diameter. If $D=1$, we have $A_1\cup B_1=\{4K_1, K_{1,3}\}$. Consequently, $G$ contains at least four vertices and has diameter at least $2$, given that $K_4$ is not outerplanar.

    \textbf{Inductive step}: Let $D\ge 2$ and assume that  the claim holds for all graphs of diameter strictly less than~$D$.

    {\it Case 1.} $2\divides D$.
    Then, $A_D\cup B_D = 2B_{D-1}$.
    Thus, there is a partition $V_1\uplus V_2$ of~$V(G)$ such that for each~$i\in\{1,2\}$, $G[V_i]$ is connected and contains some graph of~$B_{D-1}$ as a minor.
    By Observation~\ref{obs:contraction}, there is a triangulation~$G_1$ of~$G[V_1]$ which is a contraction of~$G$.
    Let~$M_1=\{u_1v_1,u_2v_2,\ldots,u_kv_k\}$ be a maximum parallel matching in~$G_1$ and note that, by Observation~\ref{obs:M at least d}, we know that~$k\geq\diam{G_1}-1$. By induction hypothesis, $k\geq D-1$.
    Without loss of generality, let~$\ell_1$ be such that~$G_2$ is embedded entirely between~$u_\ell$ and~$u_{\ell+1}$. If~$\ell\geq k/2$, then let~$M'_1:=\{u_iv_i \mid 1\leq i \leq k/2\}$. Otherwise, let~$M'_1:=\{u_iv_i \mid k/2\leq i\leq k\}$. In both cases, $M'_1$ is compatible with each vertex in~$V_2$ and~$|M'_1|\geq k/2\geq(D-1)/2$. Since~$2\divides D$, we conclude~$|M'_1|\geq D/2$
    Likewise, there is a parallel matching~$M'_2$ in~$G_2$ that is compatible with each vertex in~$V_1$ and~$|M'_2|\geq D/2$.
    Since~$V_1$ and~$V_2$ are disjoint, $M_1\cup M_2$ is a parallel matching in~$G$ and~$|M_1\cup M_2|\geq D$. Then, the claim follows by Observation~\ref{obs:M at most d}.

    {\it Case 2.} $2 \not\divides D$.
    Then, $G$ either contains a graph in~$2A_{D-1}=4B_{D-2}$, or a graph in~$\join{v,3B_{D-2}}$ as a minor.

    \begin{claimN}
      There are pairwise disjoint vertex sets~$V_1$, $V_2$ and~$\{x,y\}$ in~$G$ such that~$\{x,y\}$ is a cut edge separating~$V_1$ from~$V_2$ in~$G$ and both~$G[V_1]$ and~$G[V_2]$ are connected and contain a graph of $B_{D-2}$ as a minor.
    \end{claimN}
    \begin{proof}
      If~$G$ contains a graph in~$4B_{D-2}$, let~$V_1\uplus V_2\uplus V_3\uplus V_4$ be a vertex partition of~$G$ such that, for all~$i$, the graph~$G[V_i]$ is connected and contains a graph in~$B_{D-2}$ as a minor. Otherwise, let~$V_1\uplus V_2\uplus V_3\uplus\{v\}$ be a vertex partitioning of~$G$ such that, for all~$i$, the graph~$G[V_i]$ contains a graph in~$B_{D-2}$ as a minor.

      Let~$G'$ denote the result of contracting  each~$V_i$ into a single vertex. Since~$G'$ is outerplane, it cannot be a~$K_4$. Thus, without loss of generality, the vertices corresponding to~$V_1$ and~$V_2$ are non-adjacent in~$G'$.
      Thus, there is no edge between~$V_1$ and~$V_2$ in~$G$ and, since~$G$ is maximal, a cut-edge~$\{x,y\}$ exists in~$G[V_3\uplus V_4]$ (or in $G[V_3\uplus\{v\}]$).
    \end{proof}

    By Observation~\ref{obs:contraction}, for each~$i\in\{1,2\}$, there is a triangulation~$G_i$ of~$G[V_i]$ that is a contraction of~$G$.
    Analogously to Case~1, there is a parallel matching~$M'$ of~$G[V_1\cup V_2]$ with~$|M'|\geq D-1$. Then, $M:=M'\cup\{\{x,y\}\}$ is a parallel matching in~$G$ with~$|M|\geq D$. Then, the claim follows by Observation~\ref{obs:M at most d}
  \end{proof}
  
  \newpage

\section{NP-hardness reduction}
\label{sec:NPh}


In this section we prove the \textsc{NP}-hardness of the following problem, which can be seen as a generalization of the \textsc{Planar Diameter Improvement} problem.

\vspace{.3cm}

\noindent\begin{boxedminipage}{1.03\textwidth}
\textsc{Weighted 2-apex Diameter Improvement}\vspace{.1cm}

\begin{tabular}{ r l }
\textbf{Input:} & A 2-apex graph $G=(V,E)$ and a weight function $w:E \to \{1,2,5\}$.\\
\textbf{Output:} & A set $E' \subseteq (V \times V) \setminus E$ such that $G' = (V, E \cup E')$ is a 2-apex graph.\\
\textbf{Objective:} & Minimize the diameter of $G'$, where $w(e)=1$ for all $e \in E'$.\\
\end{tabular}
\end{boxedminipage}\vspace{.2cm}

Recall that for an integer $p \geq 1$, a graph $G=(V,E)$ is a \emph{$p$-apex} graph if there is a set $A \subseteq V$ with $|A| \leq p$ such that $G \setminus A$ is planar. For an edge-weighted graph, the distance and the diameter are defined taking into account the weights of the edges.

\begin{theorem}\label{thm:NPh} \textsc{Weighted 2-apex Diameter Improvement} is \textsc{NP}-hard.
\end{theorem}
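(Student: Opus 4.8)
The plan is to prove Theorem~\ref{thm:NPh} by a polynomial-time reduction from \textsc{3-Satisfiability} (a planar monotone variant would make the embedding arguments cleanest). Given a formula $\varphi$ on variables $x_1,\dots,x_n$ and clauses $c_1,\dots,c_m$, I would build a weighted $2$-apex instance $(G,w)$ together with a target value $D$ so that $\varphi$ is satisfiable if and only if $G$ admits a $2$-apex completion of (weighted) diameter at most $D$. Two distinguished vertices $a,b$ serve as a global backbone: every gadget is attached to $a$ and $b$, and the wiring is arranged so that any two vertices lying in \emph{different} gadgets are automatically within distance $D$ through $a$ or $b$. Consequently the diameter is governed entirely by intra-gadget distances and by the clause-satisfaction tests, and the combinatorial difficulty is pushed into choosing which edges to add inside the (non-apex) planar part. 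The three allowed weights are used as follows: weight-$5$ edges form rigid ``frames'' whose internal shortest paths overshoot $D$ unless a shortcut is installed; weight-$2$ edges calibrate the satisfied routes to land exactly at $D$; and weight-$1$ edges are used both in the base graph and as the freely addable completion edges.

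The coupling between logic and geometry comes from the gadgets. For each variable $x_i$ I would design a planar gadget in which the internal eccentricity exceeds the budget unless one installs a shortcut chord, and where there are exactly \emph{two} families of candidate chords, encoding $x_i=\textsf{true}$ and $x_i=\textsf{false}$; these two families are drawn so that any chord of one family crosses any chord of the other. Since a valid completion $G'$ must keep $G'\setminus\{a,b\}$ planar, the two families cannot be mixed inside a single gadget (a crossing pair would produce a $K_{3,3}$- or $K_4$-subdivision in the non-apex part), which pins down one consistent truth value per variable. For each clause $c_j$ I would attach a gadget to its three literal gadgets so that the distance from $c_j$ to the rest of $G'$ reaches $D$ precisely when at least one incident literal is set to a satisfying value; the weight-$5$ frame makes every ``unsatisfied'' route exceed $D$ by exactly one, while a satisfying literal opens a weight-$2$/weight-$1$ route of length exactly $D$.

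For the forward direction, a satisfying assignment tells me, for each variable, which of the two non-crossing chord families to insert. Adding these chords keeps the non-apex part planar (within each gadget only one family is used, and gadgets are vertex-disjoint away from $a,b$), so $G'$ is $2$-apex; a direct distance check over the finitely many local configurations then confirms diameter at most $D$. For the converse, I would start from an arbitrary diameter-$\le D$ completion $G'$, first argue that its apex set is forced to be $\{a,b\}$ (by building the attachments at $a$ and $b$ from high-apex-number pieces so that no alternative size-$2$ deletion planarizes $G'$), and then read off a truth assignment from the chords present in each variable gadget. Planarity of $G'\setminus\{a,b\}$ guarantees this assignment is well defined, and the clause-distance constraints certify that every clause is satisfied. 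Polynomiality of the construction and of the diameter/2-apex checks is immediate.

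The main obstacle is the backward direction, and within it two points demand care. First, I must prove that no ``unintended'' completion can cheat the budget: the weights $\{1,2,5\}$ and the threshold $D$ have to be calibrated so that the only way to bring every clause within distance $D$ is through a genuinely consistent choice of variable chords, rather than through some clever additional edges that planarity happens to permit. This amounts to a finite but delicate case analysis of the possible local completions of each gadget. Second, the claim that the apex set is forced to be exactly $\{a,b\}$ must be made fully rigorous, ruling out that deleting some other pair of vertices planarizes $G'$ and verifying that adding weight-$1$ edges cannot create a new admissible apex pair; here the high-apex-number attachments and a careful global distance bound are the tools I would rely on.
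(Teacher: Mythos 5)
Your overall strategy---reduce from a planar variant of 3-SAT, use two apex vertices as a global backbone so that inter-gadget distances are automatic, encode each truth value by a choice between two mutually crossing chord families inside a variable gadget, and calibrate the weights $\{1,2,5\}$ so that clause satisfaction becomes a distance-threshold test---is exactly the strategy of the paper. The problem is that what you have written is a plan rather than a proof, and the two steps you explicitly defer (``no unintended completion can cheat the budget'' and ``the apex set is forced'') are precisely the ones that require a concrete mechanism, not a promised ``finite but delicate case analysis''. You never specify a gadget, a target value of $D$, or a placement of the weights, so neither your forward-direction ``direct distance check'' nor your backward-direction reading-off of an assignment can actually be carried out.

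The paper closes your first gap structurally rather than by case analysis: it triangulates the \emph{entire} non-apex part with weight-$5$ edges, leaving untriangulated only one quadrilateral face $\{x_i,\bar{x}_i,y_i,z_i\}$ per variable. Because the planar part is then maximal outside these quadrilaterals, the only edges that can be added to $G$ while keeping it $2$-apex are the diagonals of the quadrilaterals, and at most one diagonal per quadrilateral can be drawn; there are simply no other candidate edges to analyze. Your second worry---forcing the apex pair to be $\{a,b\}$ and building ``high-apex-number attachments'' to that end---is sidestepped rather than solved in the paper: both apices are made adjacent to \emph{all} other vertices, with weights arranged so that every pair of vertices except the pairs $(a_1,c_j)$ is already within distance $D=4$ in $G$ itself (via $a_2$ or some $b_i$). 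The whole reduction then hinges on the single family of distances from $a_1$ to the clause vertices, each of which drops to $4$ exactly when a chosen diagonal corresponds to a literal satisfying that clause. Until you commit to such a concrete construction and verify these distances, your argument does not establish NP-hardness.
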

\begin{proof} We will prove that the decision version of \textsc{Weighted 2-apex Diameter Improvement} is \textsc{NP}-complete. It is clear that the problem belongs to \textsc{NP}, as the (weighted) diameter of a graph can be computed in polynomial time. For the \textsc{NP}-hardness, we will reduce from the \textsc{Planar 3-SAT} problem where each variable occurs at most 3 times, which is known to be \textsc{NP}-complete~\cite{GareyJ79comp}. Note that we can assume that each variable appears
at least once in a positive literal and at least once in a negative literal, as otherwise we can set its value to true or false without affecting the satisfiability of the formula. More precisely, given a formula $\phi$, we will construct an edge-weighted 2-apex graph $G$ that can be completed into a 2-apex graph $G'$ of diameter at most 4 if and only if $\phi$ is satisfiable. Let us now proceed to describe the reduction.

The formula $\phi$ naturally defines a bipartite graph $G_{\phi}$ with one vertex per variable and clause, and where a variable-vertex $x_i$ is adjacent to a clause-vertex $c_j$ if and only if the corresponding variable belongs to the corresponding clause. By hypothesis, the graph $G_{\phi}$ is planar and each variable-vertex has degree at most 3. We fix an arbitrary plane embedding of $G_{\phi}$. Towards constructing~$G$, we replace each variable-vertex $x_i$ with the gadget depicted in Fig.~\ref{fig:reduction}, where the weights of the edges are also depicted in the figure, and where we assume for example that $x_i$ appears negatively in $c_{i_1}$ and positively in $c_{i_2}$ and $c_{i_3}$.

\begin{figure}[h!]
\vspace{-.99cm}
    \center\includegraphics[width=0.97\textwidth]{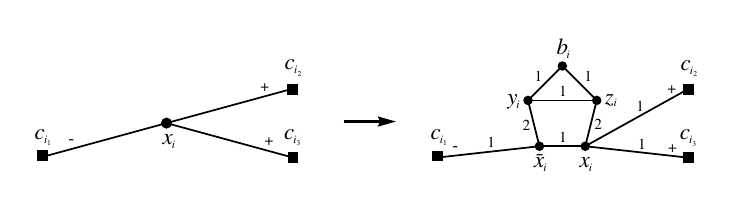}
    \vspace{-.65cm}
    \caption{\label{fig:reduction} Gadget of the reduction in the proof of Theorem~\ref{thm:NPh}.}
\end{figure}

Since each variable $x_i$ appears at most 3 times in~$\phi$, in the embedding of $G_{\phi}$ the edges between $x_i$ and the clause-vertices corresponding to positive or negative variables (namely, \emph{positive} and \emph{negative} edges) appear consecutively in all circular orderings around $x_i$. Thus, in the gadget, all positive (resp. negative) edges can indeed be incident with $x_i$ (resp. $\bar{x}_i$) while preserving planarity\footnote{Alternatively, the edge $\{x_i, \bar{x}_i\}$ can be seen as the result of subdividing the unique edge incident to~$x_i$ in~$G$ whose sign (positive/negative) differs from the other two.}

We now triangulate arbitrarily the graph obtained so far with edges of weight 5, except for the 4-cycle defined by the vertices $\{x_i, \bar{x}_i, y_i,z_i\}$ inside every gadget corresponding to a variable-vertex $x_i$. Finally, we add two adjacent new vertices $a_1$ and $a_2$ (the apices) and make them adjacent to all other vertices. The weights of these new edges are defined as follows. The edge between $a_1$ and $a_2$ has weight 5, and all other edges containing $a_2$ have weight 2. The edges between $a_1$ and the vertices $b_i$ have weight 1, and all other edges containing $a_1$ have weight 5. This completes the construction of $G$, which is clearly a 2-apex graph, as the removal of $a_1$ and $a_2$ yields a planar graph, relying on the planarity of $G_{\phi}$.

We claim that $G$ can be completed into a 2-apex graph $G'$ of diameter at most 4 if and only if $\phi$ is satisfiable. Note that the only edges that can be added to $G$ in order to build a 2-apex graph $G'$ are those inside the 4-cycles $\{x_i, \bar{x}_i, y_i,z_i\}$ for each variable-vertex $x_i$, namely either the edge $\{x_i,y_i\}$ or the edge $\{\bar{x}_i,z_i\}$ (see Fig.~\ref{fig:reduction}). Let us call these edges \emph{potential} edges. The choice of the edge $\{x_i,y_i\}$ (resp. $\{\bar{x}_i,z_i\}$) corresponds to setting the variable $x_i$ to true (resp. false) in the formula $\phi$. The reader can check that in $G$ all pairs of vertices are within weighted distance at most 4 (via $a_2$ or some~$b_i$) except for $a_1$ and the clause-vertices $c_j$. It holds that each clause-vertex $c_j$ is within distance at most 4 from  $a_1$ in $G'$ if and only if for at least one of the literals $\ell$ contained in the clause, the potential edge containing $\ell$ has been chosen in order to construct $G'$, which translates into the fact that the clause is satisfied. Therefore, $G$ can be completed into a 2-apex graph $G'$ of diameter at most 4 if and only if all clauses of $\phi$ can be simultaneously satisfied, as we wanted to prove. \end{proof}

\end{appendix}

\begin{thebibliography}{10}

\bibitem{AdlerKT11plan}
I.~Adler, S.~G. Kolliopoulos, and D.~M. Thilikos.
\newblock Planar disjoint-paths completion.
\newblock In {\em Proceedings of the 6th International Symposium on
  Parameterized and Exact Computation (IPEC)}, pages 80--93, 2011.

\bibitem{AlonGR00}
N.~Alon, A.~Gy{\'a}rf{\'a}s, and M.~Ruszink{\'o}.
\newblock Decreasing the diameter of bounded degree graphs.
\newblock {\em Journal of Graph Theory}, 35(3):161--172, 2000.

\bibitem{BiloGP12}
D.~Bil{\`o}, L.~Gual{\`a}, and G.~Proietti.
\newblock Improved approximability and non-approximability results for graph
  diameter decreasing problems.
\newblock {\em Theoretical Computer Science}, 417:12--22, 2012.

\bibitem{CaiC95}
L.~Cai and D.~G. Corneil.
\newblock Tree spanners.
\newblock {\em SIAM Journal on Discrete Mathematics}, 8(3):359--387, 1995.

\bibitem{ChandraDNS95}
B.~Chandra, G.~Das, G.~Narasimhan, and J.~Soares.
\newblock New sparseness results on graph spanners.
\newblock {\em International Journal of Computational Geometry {\&}
  Applications}, 5:125--144, 1995.

\bibitem{ChepoiENV06}
V.~Chepoi, B.~Estellon, K.~Nouioua, and Y.~Vax{\`e}s.
\newblock Mixed covering of trees and the augmentation problem with odd
  diameter constraints.
\newblock {\em Algorithmica}, 45(2):209--226, 2006.

\bibitem{Cohen98}
E.~Cohen.
\newblock Fast algorithms for constructing t-spanners and paths with stretch t.
\newblock {\em SIAM Journal on Computing}, 28(1):210--236, 1998.

\bibitem{journal-DEGMS14}
P.~Dankelmann, D.~Erwin, W.~Goddard, S.~Mukwembi, and H.~Swart.
\newblock A characterisation of eccentric sequences of maximal outerplanar
  graphs.
\newblock {\em The Australasian Journal of Combinatorics}, 58:376pp, 2014.

\bibitem{DF93}
I.~J. Dejter and M.~Fellows.
\newblock Improving the diameter of a planar graph.
\newblock Technical report, Computer Science department at the University of
  Victoria, May 1993.

\bibitem{Diestel05}
R.~Diestel.
\newblock {\em Graph Theory}.
\newblock Springer-Verlag, Berlin, 3rd edition, 2005.

\bibitem{DodisK99}
Y.~Dodis and S.~Khanna.
\newblock Design networks with bounded pairwise distance.
\newblock In {\em Proceedings of the 31st Annual ACM Symposium on Theory of
  Computing (STOC)}, pages 750--759, 1999.

\bibitem{DF99}
R.~G. Downey and M.~R. Fellows.
\newblock {\em Parameterized Complexity}.
\newblock Springer-Verlag, 1999.

\bibitem{DraganFG08}
F.~F. Dragan, F.~V. Fomin, and P.~A. Golovach.
\newblock Spanners in sparse graphs.
\newblock In {\em Proceedings of the 35th International Colloquium on Automata,
  Languages and Programming (ICALP)}, pages 597--608, 2008.

\bibitem{ErdosGR98}
P.~Erd{\"o}s, A.~Gy{\'a}rf{\'a}s, and M.~Ruszink{\'o}.
\newblock How to decrease the diameter of triangle-free graphs.
\newblock {\em Combinatorica}, 18(4):493--501, 1998.

\bibitem{FratiGGM13}
F.~Frati, S.~Gaspers, J.~Gudmundsson, and L.~Mathieson.
\newblock Augmenting graphs to minimize the diameter.
\newblock In {\em Proceedings of the 24th International Symposium on Algorithms
  and Computation (ISAAC)}, pages 383--393, 2013.

\bibitem{GareyJ79comp}
M.~R. Garey and D.~S. Johnson.
\newblock {\em Computers and intractability. A guide to the theory of
  NP-completeness}.
\newblock W. H. Freeman and Co., 1979.

\bibitem{GolumbicKS94onth}
M.~C. Golumbic, H.~Kaplan, and R.~Shamir.
\newblock On the complexity of {DNA} physical mapping.
\newblock {\em Advances in Applied Mathematics}, 15(3):251--261, 1994.

\bibitem{HeggernesPTAV07}
P.~Heggernes, C.~Paul, J.~A. Telle, and Y.~Villanger.
\newblock Interval completion with few edges.
\newblock In {\em Proceedings of the 39th Annual ACM Symposium on Theory of
  Computing (STOC)}, pages 374--381, 2007.

\bibitem{Ishii13}
T.~Ishii.
\newblock Augmenting outerplanar graphs to meet diameter requirements.
\newblock {\em Journal of Graph Theory}, 74(4):392--416, 2013.

\bibitem{JBCR14}
B.~Jasine, M.~Basavaraju, L.~S. Chandran, and D.~Rajendraprasad.
\newblock 2-connecting outerplanar graphs without blowing up the pathwidth.
\newblock \emph{Theoretical Computer Science}, to appear. Available online at
  \texttt{http://dx.doi.org/10.1016/j.tcs.2014.04.032}, 2014.

\bibitem{Kant96}
G.~Kant.
\newblock Augmenting outerplanar graphs.
\newblock {\em Journal of Algorithms}, 21(1):1--25, 1996.

\bibitem{KaplanST99tract}
H.~Kaplan, R.~Shamir, and R.~E. Tarjan.
\newblock Tractability of parameterized completion problems on chordal,
  strongly chordal, and proper interval graphs.
\newblock {\em SIAM Journal on Computing}, 28(5):1906--1922, 1999.

\bibitem{RobertsonS95-XIII}
N.~Robertson and P.~D. Seymour.
\newblock {Graph Minors. {XIII}. {T}he disjoint paths problem}.
\newblock {\em Journal of Combinatorial Theory, Series B}, 63(1):65--110, 1995.

\bibitem{RobertsonS04-XX}
N.~Robertson and P.~D. Seymour.
\newblock Graph minors. {XX}. {W}agner's conjecture.
\newblock {\em Journal of Combinatorial Theory, Series B}, 92(2):325--357,
  2004.

\bibitem{WangL06}
Y.~Wang and X.-Y. Li.
\newblock Localized construction of bounded degree and planar spanner for
  wireless ad hoc networks.
\newblock {\em MONET}, 11(2):161--175, 2006.

\bibitem{Yannakakis81comp}
M.~Yannakakis.
\newblock Computing the minimum fill-in is {NP}-complete.
\newblock {\em SIAM Journal on Algebraic and Discrete Methods}, 2(1):77--79,
  1981.

\end{thebibliography}
\end{document}